\documentclass[11pt,letterpaper]{article}

\usepackage[margin=1in]{geometry}
\usepackage{setspace}
\usepackage{libertinus}
\input{glyphtounicode}
\usepackage{amsmath}
\usepackage{mathtools}
\usepackage{tikz}
\usepackage{mathdots}
\usepackage{amsfonts}
\usepackage{amsthm}
\usepackage{latexsym}
\usepackage{graphicx}
\usepackage{color}
\usepackage[hidelinks]{hyperref}
\hypersetup{pdftitle={Hermite Brings a Laptop: Analyzing Frieze-Jerrum Rounding Yields Improved Approximations for Clustering Problems},pdfauthor={David Garc\'ia-Soriano and Atsushi Miyauchi}}
\usepackage[linesnumbered,ruled,vlined]{algorithm2e}
\usepackage{booktabs}
\usepackage{arydshln}
\usepackage{bm}
\usepackage{multirow}
\usepackage{makecell}
\usepackage{booktabs}
\usepackage{balance}
\usepackage{subcaption}
\usepackage{fullpage}
\usepackage{authblk}
\usepackage[numbers,sort&compress]{natbib}
\usepackage{amssymb}
\usepackage[T1]{fontenc}

\DeclareMathOperator*{\argmax}{arg\,max}

\DeclareMathOperator{\E}{\mathbb{E}}
\DeclareMathOperator{\Var}{\mathrm{Var}}
\DeclareMathOperator{\Cov}{\mathrm{Cov}}

\newcommand{\maxagree}{\textsc{MaxAgree}}
\newcommand{\maxagreek}{\texorpdfstring{$\textsc{MaxAgree}[K]$}{MaxAgree[K]}}
\newcommand{\mindisagree}{\textsc{MinDisagree}}
\newcommand{\1}{\mathbf{1}}
\newcommand{\poly}{\operatorname{poly}}

\usepackage{xcolor}

\newtheorem{theorem}{Theorem}[section]
\newtheorem{lemma}[theorem]{Lemma}
\newtheorem{corollary}[theorem]{Corollary}
\newtheorem{proposition}[theorem]{Proposition}

\theoremstyle{definition}

\theoremstyle{remark}

\theoremstyle{plain}

\newcommand{\mycomment}[1]{}

\title{
Hermite Brings a Laptop: Analyzing Frieze--Jerrum Rounding Yields Improved Approximations for Clustering Problems
}

\author[1]{David Garc\'ia-Soriano\thanks{david.garcia.soriano@upc.edu}}
\author[2]{Atsushi Miyauchi\thanks{atsushi.miyauchi@intesasanpaolo.com}}

\affil[1]{Universitat Polit\`ecnica de Catalunya, Barcelona, Spain} 
\affil[2]{Intesa Sanpaolo, Turin, Italy}

\date{}

\begin{document}
\pagenumbering{gobble}
\hypersetup{pageanchor=false}
\maketitle

\vspace{-2mm}
\begin{abstract}

The Frieze--Jerrum rounding scheme (Algorithmica~1997) is a standard tool for rounding semidefinite programming (SDP) relaxations of graph partitioning and clustering problems, assigning nodes to at most $k$ clusters using $k$ independent Gaussian vectors.
Its analysis hinges on the collision probability $P_k(\rho)$ that two nodes whose SDP vectors have inner product $\rho$ are assigned to the same cluster.  No tractable closed form for $P_k$ is known for $k\geq 4$, making it difficult to certify approximation guarantees and hindering the systematic search for better algorithms.

We develop a Hermite-coefficient certification framework to derive accurate and tractable bounds on $P_k$. Using the Hermite expansion of Gaussian noise stability, we
express $P_k$ as a power series with nonnegative coefficients, reduce these coefficients to one-dimensional Gaussian integrals, and certify finitely many of them, yielding rigorous bounds on $P_k$ over the entire correlation range.
Our framework yields strengthened polynomial-time approximations for
several
clustering problems.

For {\maxagree} Correlation Clustering, we derive a $0.7818$-approximation, the first
improvement in two decades over the $0.7666$ ratio of Swamy (SODA~2004).  On the hardness side,
            we show that the integrality ratio of the standard SDP relaxation
is at most $0.802$, and that approximation beyond that is Unique Games-hard.
We also improve the best known ratios for the variant with at most $K$ clusters, \maxagreek\ (e.g., from $0.77$ to $0.8151$ for $K=3$).  
For Max $K$-Cut we resolve, via a structural property of the Hermite expansion, a conjecture of de~Klerk, Pasechnik, and Warners (J. Comb. Optim.~2004) characterizing the Frieze–Jerrum approximation ratio
for every $K\ge3$; we show that this ratio is tight, and
determine it to within $10^{-6}$ accuracy for $K\leq16$.  
Finally, we reduce the additive approximation error for modularity maximization from $0.42084$ (Kawase, Matsui, and Miyauchi, JCSS~2021) to 
$0.3790$.
\end{abstract}

\thispagestyle{empty}

\clearpage
\tableofcontents
\thispagestyle{empty}

\clearpage
\pagenumbering{arabic}
\hypersetup{pageanchor=true}

\section{Introduction}

Semidefinite programming (SDP) has become a fundamental tool for designing approximation algorithms 
for combinatorial optimization problems~\cite{Gartner+12,Vazirani01,Williamson+11}.
A common paradigm is to solve an SDP relaxation, interpret the resulting solution as a collection of vectors, 
and convert these vectors into a discrete solution by randomized rounding.
One particularly natural rounding scheme, especially for graph partitioning and clustering problems, is the Frieze--Jerrum (FJ) rounding~\cite{Frieze+97}.
Given unit SDP vectors and an integer $k$, the scheme samples $k$ independent Gaussian vectors 
and assigns each node to the cluster corresponding to the Gaussian vector with which its SDP vector has the largest inner product.
We refer to this scheme as the $k$-spokes rounding.
Originally introduced for Max $K$-Cut~\cite{Garey+02}, 
the FJ rounding and closely related rounding schemes have since been used in various graph partitioning, clustering, and coloring problems~\cite{Karger+98,Swamy04}.

For problems whose objectives decompose over edges, the approximation guarantee of the FJ rounding can naturally be analyzed edge by edge.
Consider a pair of nodes whose SDP vectors have inner product $\rho$.
The probability that they are assigned to the same cluster depends only on $\rho$; we call this probability the collision probability and denote it by $P_k(\rho)$.
Consequently, the probability that the two nodes are assigned to different clusters is $1-P_k(\rho)$.
The approximation analysis can then be reduced to comparing these probabilities with the corresponding SDP contributions as functions of $\rho$.
Thus, understanding the behavior of the collision probability $P_k(\rho)$ is a central analytic problem in establishing approximation guarantees for the FJ rounding.

The main difficulty here is that the collision probability of the FJ rounding is analytically hard to handle.
Although explicit formulas are available for $k=2$~\cite{Goemans+95} and $k=3$~\cite{Goemans+04,deKlerk+04},
no comparably tractable closed-form representation is known for general $k$.
One natural way to cope with this absence is to bound the collision probability by analytically tractable functions.
Such bounds can be sufficient to derive nontrivial approximation guarantees, as in the original analysis of \citet{Frieze+97} for Max $K$-Cut; 
however, it is generally difficult to determine whether they are sufficiently tight for fine-grained analyses.
A different approach is to modify the rounding scheme itself so that its collision probability becomes more amenable to explicit analysis.
For example, \citet{Newman18} designed a rounding scheme for general $k$ whose collision probability admits a tractable analytic expression.
Such alternative roundings facilitate the analysis, but can yield weaker approximation guarantees than the original FJ rounding.

The lack of accurate and tractable bounds on the collision probability poses a broader obstacle.
It not only makes it difficult to certify the approximation guarantees of the specified $k$-spokes roundings, 
but also hinders the systematic search for better algorithms based on different choices of $k$ or their mixtures.
Thus, a better understanding of the collision probability has the potential both to sharpen the analysis of existing algorithms and to guide the design of new ones.
This motivates a basic question:
can one obtain sufficiently accurate and tractable bounds on the collision probability of the FJ rounding without requiring a closed-form representation?

We answer this question affirmatively.
We present a framework, which we call the Hermite-coefficient certification framework, 
for deriving accurate and tractable bounds on the collision probabilities of the FJ rounding.
As its name suggests, our approach is based on the Hermite expansion of the collision probability. 
We first reduce the computation of each Hermite coefficient, which a priori involves a $k$-dimensional Gaussian integral, 
to a finite collection of explicitly structured one-dimensional Gaussian integrals. 
Building on this representation, we further develop a certification procedure that rigorously encloses these integrals on a finite domain 
and controls the remaining tails explicitly, yielding certified rational lower and upper bounds on the coefficients.
Bounds on finitely many coefficients can then be combined with coefficient nonnegativity to obtain certified bounds on the entire collision probability. 
In developing our framework, we also prove a novel structural property of the separation probability $1-P_k(\rho)$. 

We utilize this framework to obtain improved approximation guarantees and new insights for several important graph partitioning and clustering problems, 
namely \maxagree~\cite{Bansal+04}, \maxagreek~\cite{Giotis+06_j}, 
Max $K$-Cut~\cite{Garey+02,Frieze+97}, and modularity maximization~\cite{Newman+04,Fortunato10}.
Specifically, for \maxagree, we present a polynomial-time $0.7818$-approximation algorithm, 
improving upon the long-standing state-of-the-art approximation ratio of $0.7666$ due to \citet{Swamy04}.
For \maxagreek, we present a polynomial-time approximation algorithm that substantially improves the best known approximation ratios for a range of values of $K$~\cite{Swamy04}.
Both algorithms employ the $k$-spokes roundings for multiple values of $k$, and their analyses are based on carefully chosen mixtures of these roundings.
Both the design and the analysis of these algorithms are enabled by our Hermite-coefficient certification framework.
For Max $K$-Cut, using the aforementioned structural property of the collision probability, 
we obtain a simpler characterization of the approximation ratio of the classical $K$-Cut algorithm of \citet{Frieze+97}, 
prove its tightness for the algorithm, and determine it rigorously to high precision for a range of values of $K$. 
In particular, our rigorous lower bounds improve upon the best known explicit approximation ratios.
Our framework also yields an improved additive approximation algorithm for modularity maximization.

We complement these algorithmic results with nearly matching upper bounds
for \maxagree.  Unconditionally, we show that the integrality ratio of the
standard SDP relaxation used by our \maxagree\ algorithm is at most $0.802$,
improving the previous bound $2/(1+\sqrt2)\approx0.8284$~\cite{Charikar+05}.
Assuming the Unique Games Conjecture, we further show that, for every constant
$\epsilon>0$, no polynomial-time algorithm for weighted \maxagree\ can exceed
this integrality ratio by $\epsilon$; in particular, none achieves an
approximation ratio of $0.802+\epsilon$.  To the best of our knowledge, this is
the first UGC-based inapproximability result for \maxagree.  Consequently, the
best approximation ratio achievable in polynomial time under the UGC lies in
$[0.7818,0.802]$.  The integrality gap rests on an explicit instance in
Gaussian space, analyzed with Borell's isoperimetric inequality, and on the
strong-gap theorem of \citet{RaghavendraSteurer09Gaps}; see
Section~\ref{subsec:raghavendra-overview} for an overview.

Finally, we study the limitations of rounding families under the edge-wise analysis. 
For any prescribed finite family of rounding schemes, we develop a framework for upper-bounding the best approximation ratio 
that can be certified by any mixture over that family under the edge-wise analysis. 
Applied to natural broad families of the FJ and hyperplane roundings, 
these bounds show that our proposed mixtures for \maxagree\ and \maxagreek\ are nearly optimal within the corresponding family and analysis framework.

\subsection{Hermite-Coefficient Certification Framework}

Here we describe the main ideas underlying our Hermite-coefficient certification framework, 
which is based on the Hermite expansion of the collision probability.

The starting point is to express the collision event in Gaussian space.
For each spoke, the event that it is selected corresponds to an argmax region in $\mathbb{R}^k$.
For two SDP vectors with inner product $\rho$, the corresponding vectors of Gaussian inner products are $\rho$-correlated standard Gaussian vectors.
Consequently, $P_k(\rho)$ can be expressed as a sum of the probabilities that two $\rho$-correlated Gaussian vectors both fall into the same argmax region.
In other words, $P_k(\rho)$ is a sum of Gaussian noise stabilities of these regions.
Expanding the indicators in the Hermite basis, which diagonalizes the Gaussian noise operator, gives
\begin{equation*}
P_k(\rho) = \sum_{m=0}^{\infty} c_{k,m}\rho^m,
\end{equation*}
in which each $c_{k,m}$ is a sum of squares of degree-$m$ Hermite coefficients of the argmax indicators. 

The main technical challenge is to obtain accurate and tractable bounds on the coefficients $c_{k,m}$.
Their direct definition involves $k$-dimensional Gaussian integrals over argmax regions and is not well suited for certified computation.
By exploiting symmetry and conditioning on the winning Gaussian coordinate, 
we reduce the Hermite coefficients of the argmax indicators to a finite collection of explicitly structured one-dimensional Gaussian integrals.
After a suitable normalization and symmetry grouping, 
each coefficient $c_{k,m}$ can be recovered as a nonnegative weighted sum of squares of these one-dimensional integrals.

An important subtlety is that the resulting one-dimensional integrals are signed.
Simply restricting such an integral to a bounded interval need not give a valid bound on its square,
because the omitted tails may partially cancel the contribution from the bounded interval.
The explicit Gaussian form of the integrands nevertheless makes rigorous control possible.
We enclose the integral over a bounded interval using validated numerical integration in ball arithmetic
and bound the two tails analytically using Gaussian decay, which dominates the polynomial growth contributed by the Hermite polynomials.
Combining these bounds gives a rigorous enclosure of the value of the full integral, which can then be safely squared.
These interval enclosures are then propagated through the sum-of-squares representation, yielding certified rational lower and upper bounds
\begin{equation*}
\underline{c}_{k,m}
\le
c_{k,m}
\le
\overline{c}_{k,m}.
\end{equation*}
Thus, numerical integration is used only to produce rigorously certified rational data, rather than as evidence for an unverified numerical approximation.

Crucially, to obtain rigorous bounds on the entire collision-probability function, it is enough to certify only finitely many coefficients $c_{k,m}$.
Suppose that lower and upper bounds $\underline{c}_{k,m}$ and $\overline{c}_{k,m}$ are available up to degree $M$.
For nonnegative correlations $\rho$, 
a lower bound on $P_k(\rho)$ is obtained by replacing the known coefficients by their lower bounds and discarding the nonnegative unknown tail.
For an upper bound, we replace the known coefficients by their upper bounds and use the property $\sum_{m=0}^\infty c_{k,m}=1$ to bound the total mass of the unknown tail.
For negative correlations $\rho$, the powers of $\rho$ alternate in sign:
a lower bound therefore uses lower coefficient bounds for even degrees and upper coefficient bounds for odd degrees, while an upper bound uses the opposite choices.
In both cases, the remaining unknown tail is controlled in absolute value using the same normalization.
In this way, finitely many certified bounds $\underline{c}_{k,m}$ and $\overline{c}_{k,m}$ yield explicit functions satisfying
\begin{equation*}
\underline{P}_{k,M}(\rho)
\le
P_k(\rho)
\le
\overline{P}_{k,M}(\rho)
\quad
\text{for all }\rho\in[-1,1].
\end{equation*}
This construction extends immediately to mixtures of the $k$-spokes roundings by linearity.
Thus, our framework obtains rigorous control of the entire collision-probability function for any mixture of the $k$-spokes roundings from a finite amount of certified data.

The Hermite representation of the collision probability also yields a novel structural property of the associated separation probability.
Consider the normalized separation probability 
\begin{equation*}
R_k(\rho)
:=
\frac{1-P_k(\rho)}{1-\rho}
\qquad
\text{for }\rho\in[-1,1).
\end{equation*}
Using the Hermite expansion and the fact that its coefficients $c_{k,m}$ sum to one, $R_k(\rho)$ can be written as a nonnegative linear combination of the functions
$(1-\rho^m)/(1-\rho)$.
We prove that each of these functions is nondecreasing on $\rho\in[-1/2,1)$.
Coefficient nonnegativity therefore implies that $R_k$ itself is nondecreasing on the same interval, and this property is preserved under arbitrary mixtures of the $k$-spokes roundings.
This monotonicity allows us to reduce certain inequalities involving the separation probability $1-P_k(\rho)$ over an entire interval of correlations, which later appear in the approximation-ratio analysis, to a single endpoint condition, significantly simplifying the analysis.
Moreover, as we explain later, it also leads almost immediately to a proof of a long-standing conjecture concerning the FJ rounding for Max $K$-Cut.

\subsection{Improved Approximations}
The Hermite-coefficient certification framework enables us to rigorously bound the collision probability of the $k$-spokes rounding, despite the absence of a closed-form representation for $k\geq 4$.
We utilize this framework to obtain improved approximation guarantees and novel insights for several important clustering problems, namely \maxagree~\cite{Bansal+04}, \maxagreek~\cite{Giotis+06_j}, Max $K$-Cut~\cite{Garey+02,Frieze+97}, and modularity maximization~\cite{Newman+04,Fortunato10}, all of which will be introduced formally in the subsequent subsections.

To treat the first three problems in a unified manner, we view them as optimization problems on weighted edge-labeled graphs $G=(V,E)$, where each edge $\{i,j\}\in E$ is labeled either positive or negative. \maxagree\ and \maxagreek\ are naturally formulated in this setting, while Max $K$-Cut corresponds to the special case in which all edges are negative.
We introduce a family of SDPs parameterized by $a\in[-1/2,0]$.
For $a\in[-1/2,0]$, define two affine functions 
\begin{equation*}
s_a^+(\rho)=\frac{\rho-a}{1-a}
\quad\text{and}\quad
s_a^-(\rho)=\frac{1-\rho}{1-a}
\quad\text{for }\rho\in[a,1].
\end{equation*}
Together with the edge weights $w_{ij}$, these functions represent the SDP contributions of positive and negative edges $\{i,j\}\in E$, respectively.
The parameter $a$ lower bounds the inner product between any pair of SDP vectors and, equivalently, determines an upper bound on the angle between them.
The standard SDP relaxation for \maxagree\ used by \citet*{Charikar+05} and \citet{Swamy04} corresponds to the SDP with $a=0$, whereas the SDP relaxations for \maxagreek\ and Max $K$-Cut used by \citet{Swamy04} and \citet{Frieze+97}, respectively, correspond to the SDP with $a=-1/(K-1)$.

To analyze the approximation ratio of a rounding scheme for the family of SDPs, we formalize the edge-wise analysis (Lemma~\ref{lem:edge-wise-analysis}).
Consider a rounding scheme whose collision probability depends only on the inner product $\rho$ of the two corresponding SDP vectors, and denote this probability by $P(\rho)$.
For an edge $\{i,j\}\in E$ whose corresponding SDP vectors have inner product $\rho$, its expected contribution after rounding is $w_{ij}P(\rho)$ if the edge is positive and $w_{ij}(1-P(\rho))$ if it is negative, whereas its SDP contribution is $w_{ij}s_a^+(\rho)$ in the former case and $w_{ij}s_a^-(\rho)$ in the latter.
Thus, provided that the SDP is a relaxation of the problem and the rounding always produces a feasible solution, an approximation ratio $\gamma$ is certified by the inequalities
\begin{equation*}
P(\rho)\geq \gamma s_a^+(\rho)
\quad\text{and}\quad
1-P(\rho)\geq \gamma s_a^-(\rho)
\quad\text{for all }
\rho\in[a,1].
\end{equation*}
Indeed, these inequalities guarantee that the expected contribution of every edge after rounding is at least a $\gamma$ fraction of its SDP contribution, and summing over all edges yields the desired approximation ratio.
We refer to these two inequalities as the positive-edge and negative-edge inequalities.
Importantly, for a mixture of the $k$-spokes roundings, our monotonicity theorem implies that the negative-edge inequality over the entire interval $[a,1]$ is equivalent to the single endpoint condition $1-P(a)\geq \gamma$.
Consequently, certifying an approximation ratio of $\gamma$ via the edge-wise analysis reduces to verifying the positive-edge inequality over $[a,1]$ and the negative-edge inequality at the single point $a$.

We then describe how our Hermite-coefficient certification framework further reduces the task of certifying the positive-edge and negative-edge inequalities for any mixture of the $k$-spokes roundings to finite, exactly checkable conditions.
The key observation is simple.
Using certified rational lower and upper bounds on the relevant Hermite coefficients up to degree $M$, for some truncation degree $M\in \mathbb{N}_0$, our framework yields certified rational lower and upper bounds $\underline P_M(\rho)$ and $\overline P_M(\rho)$, respectively, on the collision probability $P(\rho)$ of the mixture. 
Therefore, it suffices to verify
\begin{equation*}
\underline P_M(\rho)\geq \gamma s_a^+(\rho)
\quad\text{for all } \rho\in[a,1]
\quad\text{and}\quad
1-\overline P_M(a)\geq \gamma.
\end{equation*}
The latter condition, which certifies the negative-edge inequality, can be checked directly in exact rational arithmetic.
The former condition still ranges over infinitely many $\rho\in [a,1]$, but it reduces to the nonnegativity of a rational polynomial on an interval, which we certify in exact rational arithmetic using Bernstein certificates. 

For modularity maximization, we use the standard SDP relaxation employed by \citet*{Dinh+15} and \citet*{Kawase+21}, which does not fit into the above family of SDPs because of an additional constant term in the objective.
Moreover, our analysis is not edge-wise either. Nevertheless, certifying the approximation guarantee again reduces to verifying an inequality involving a certified lower bound on the collision probability of the $k$-spokes rounding, allowing us to apply the same Hermite-coefficient certification framework.

In the following subsections, we briefly review the background and existing approximation results for each problem and summarize our contributions.

\subsubsection{\maxagree}
\maxagree\ is a fundamental optimization problem in Correlation Clustering~\citep{Bansal+04},
a clustering model that has been actively studied in both the theoretical computer science and machine learning communities~\cite{Bonchi+22}.
Let $G=(V,E)$ be an edge-labeled simple undirected graph with $V=\{1,\dots,n\}$ and $E=E_+\cup E_-$, where $E_+\cap E_-=\emptyset$.
Each edge $\{i,j\}\in E$ has a weight $w_{ij}>0$, representing the strength of its label.
A positive edge $\{i,j\}\in E_+$ indicates that $i$ and $j$ should belong to the same cluster,
whereas a negative edge $\{i,j\}\in E_-$ indicates that they should belong to different clusters.
In the general weighted \maxagree, the goal is to find a clustering of $V$ that maximizes the total weight of correctly classified edges.
Its complementary problem, \textsc{MinDisagree}, asks for a clustering that minimizes the total weight of incorrectly classified edges.
Since the two objective values sum to the total edge weight, the two problems are equivalent for exact optimization,
and both are known to be NP-hard even on unweighted complete graphs~\cite{Bansal+04}.
A substantial body of work has therefore focused on approximation algorithms for these problems and their variants.

\citet*{Charikar+05} designed a polynomial-time $0.7664$-approximation algorithm for the general weighted \maxagree.
Their algorithm solves a standard SDP relaxation and rounds the SDP solution using the random hyperplane rounding scheme of \citet*{Goemans+95}.
More specifically, they use the $2$-hyperplane and $3$-hyperplane roundings, which produce clusterings with at most four and eight clusters, respectively,
and return the better of the resulting clusterings.
Subsequently, \citet{Swamy04} improved the approximation ratio to $0.7666$ by additionally using the FJ rounding scheme~\cite{Frieze+97}.
In particular, the ratio $0.7666$ is obtained by combining the $2$-hyperplane rounding with the $6$-spokes rounding and returning the better of the two resulting clusterings.
This approximation ratio has remained the state of the art for more than two decades.

\paragraph{Our Results.}
We present a polynomial-time $0.7818$-approximation algorithm for \maxagree\ (Theorem~\ref{thm:maxagree}), 
improving upon the long-standing state-of-the-art approximation ratio $0.7666$ due to \citet{Swamy04}.
The algorithm solves the standard SDP relaxation used by \citet*{Charikar+05} and \citet{Swamy04}, 
and rounds the resulting SDP solution using only the $k$-spokes rounding scheme~\cite{Frieze+97}.
Specifically, it uses the $4$-spokes and $5$-spokes roundings, and returns the better of the resulting clusterings. 
Unlike the algorithms of \citet*{Charikar+05} and \citet{Swamy04}, our algorithm does not rely on the random hyperplane rounding.

In our analysis, we consider the algorithm that uses a mixture of the $4$-spokes and $5$-spokes roundings with probabilities $0.364$ and $0.636$, respectively.
Since our proposed algorithm performs at least as well as this mixture, it suffices to prove the desired approximation ratio of $0.7818$ for the mixture.
By our edge-wise analysis and its reduction to finite, exactly checkable conditions, this reduces to verifying the hypotheses for the positive-edge and negative-edge inequalities.
We verify the hypothesis for the positive-edge inequality in exact rational arithmetic using Bernstein certificates, 
using the certified lower bound $\underline P_M(\rho)$ with truncation degree $M=24$. 
The negative-edge inequality, on the other hand, can be shown analytically.

\subsubsection{\maxagreek}
\maxagreek\ is the variant of \maxagree\ in which the number of clusters is restricted to at most a fixed integer $K\geq 2$~\cite{Giotis+06_j}. 
Again, its complementary problem, $\textsc{MinDisagree}[K]$, has also been studied, 
and both problems are known to be NP-hard even on unweighted complete graphs~\cite{Shamir+04} . 

\citet{Giotis+06_j} showed that for $K=2$, 
the general weighted \maxagreek\ admits the Goemans--Williamson approximation ratio $\alpha_{\mathrm{GW}}\approx 0.8785$ known for Max Cut.
This guarantee is optimal under the UGC, since that optimality is known for Max Cut~\cite{Khot+07}, which is the special case where $E_+=\emptyset$ and $E_-=E$ .
For $3\leq K\leq 5$, \citet{Swamy04} gave a $0.77$-approximation algorithm 
that applies the $1$-hyperplane and $2$-hyperplane roundings to the solution of an SDP relaxation that differs slightly from the one used for \maxagree.
For $K\geq 6$, the aforementioned $0.7666$-approximation algorithm for \maxagree\ achieves the same approximation ratio, 
since it always outputs a clustering with at most six clusters.
These approximation ratios for \maxagreek\ with $K\geq 3$, too, have remained the state of the art for more than two decades. 

\paragraph{Our Results.}
We present a polynomial-time algorithm that improves the best known approximation ratios for $K=3,\dots,16$ (Theorem~\ref{thm:maxagree-k}).
The improvements achieved by our algorithm are substantial and considerably larger than the improvement obtained for \maxagree.
The most notable case is $K=3$, for which our algorithm achieves an approximation ratio of $0.8151$, 
substantially improving upon the state-of-the-art ratio of $0.77$ due to \citet{Swamy04}.
Our algorithm solves the SDP relaxation used by \citet{Swamy04}, applies the $k$-spokes roundings for all $k=1,\dots,K$ to the SDP solution, 
and returns the clustering with the largest objective value.
As in our analysis for \maxagree, for each $K=3,\dots, 16$, we analyze an appropriate mixture of these roundings 
and verify the corresponding hypotheses for the positive-edge and negative-edge inequalities in exact rational arithmetic, using Bernstein certificates for the former.

\subsubsection{Max \texorpdfstring{$K$}{K}-Cut}
Max $K$-Cut is a classical problem in combinatorial optimization: 
given a simple undirected graph $G=(V,E)$ with a weight $w_{ij}>0$ for each edge $\{i,j\}\in E$, 
we are asked to find a clustering of $V$ into at most $K$ clusters that maximizes the total weight of edges across clusters.
Max $K$-Cut generalizes Max Cut, which corresponds to $K=2$, and can also be viewed as the special case of \maxagreek\ in which $E_+=\emptyset$ and $E_-=E$. 
Note that Max $K$-Cut has a trivial expected approximation ratio of $1-1/K$, 
since randomly partitioning $V$ into $K$ (possibly empty) clusters cuts $1-1/K$ fraction of the edges. 

\citet*{Frieze+97} proposed the $K$-Cut algorithm, which uses their $K$-spokes rounding to the solution of a novel SDP relaxation, 
and proved that the algorithm admits an approximation ratio of 
\begin{equation*}
\alpha^\mathrm{FJ}_K \coloneqq \min_{\rho\in [-1/(K-1),1)}\frac{K(1-P_K(\rho))}{(K-1)(1-\rho)}. 
\end{equation*}
While the exact value of $\alpha^\mathrm{FJ}_K$ is not trivial for a general $K$, 
\citet{Frieze+97} derived explicit lower bounds on $\alpha^\mathrm{FJ}_K$ for several values of $K$.
For $K=2$, their algorithm and analysis reduce to those of \citet{Goemans+95} for Max Cut, so $\alpha^\mathrm{FJ}_2 = \alpha_{\mathrm{GW}}\approx 0.8785$. 
\citet{Frieze+97} also showed
$\alpha^\mathrm{FJ}_3\geq 0.832718$, $\alpha^\mathrm{FJ}_4\geq 0.850304$,
$\alpha^\mathrm{FJ}_5\geq 0.874243$, $\alpha^\mathrm{FJ}_{10}\geq 0.926642$, and $\alpha^\mathrm{FJ}_{100}\geq 0.990625$, 
based on the Hermite expansion of $P_K(\rho)$ and computations of low-degree coefficients. 
Later, \citet{Goemans+04} and \citet*{deKlerk+04} further advanced the investigation. 
For $K=3$, \citet{Goemans+04} obtained the exact value of $\alpha^{\mathrm{FJ}}_3 \approx 0.836008$ by analyzing an equivalent rounding scheme.
Their analysis indirectly yields a closed-form expression of $P_3(\rho)$ and shows that the minimum in the definition of $\alpha^\mathrm{FJ}_K$ is achieved at the left endpoint $\rho=-1/2$.
For $4\leq K\leq 10$, \citet*{deKlerk+04} obtained numerical estimates of $\alpha^\mathrm{FJ}_K$ and conjectured that the minimum is achieved at the left endpoint $\rho=-1/(K-1)$.
\citet{Frieze+97} also demonstrated the asymptotic bound
\begin{equation*}
\alpha^\mathrm{FJ}_K = 1-\frac{1}{K} + (1+o(1))2K^{-2}\log K,
\end{equation*}
which was later shown to be optimal under the UGC~\cite{Khot+07}. 
\citet{Newman18} extended the results of \citet{Goemans+04} to general $K$ and gave explicit approximation ratios. 

\paragraph{Our Results.}
The nature of our contribution here is different from those for the other problems:
we do not present a new algorithm, but significantly advance our understanding of the approximation guarantee of the $K$-Cut algorithm.
We first prove that for every $K\geq 3$, the minimum in the definition of $\alpha^\mathrm{FJ}_K$ is achieved at the left endpoint $\rho=-1/(K-1)$, yielding the much simpler representation $\alpha^\mathrm{FJ}_K = 1-P_K(-1/(K-1))$,
and thereby resolving affirmatively the conjecture posed by \citet*{deKlerk+04} (Theorem~\ref{thm:max-k-cut-approx-ratio}).
Using our monotonicity theorem for the separation probability, the proof is almost immediate, illustrating the power of this structural result.
Based on this characterization, we further prove that $\alpha^\mathrm{FJ}_K$ is tight for the $K$-Cut algorithm (Theorem~\ref{thm:max-k-cut-FJ-tightness}), meaning that no better approximation ratio can be certified for this algorithm.
Such tightness was previously known only for $K=2$~\cite{Karloff99} and $K=3$~\cite{Goemans+04}.

In view of this tightness result, it becomes particularly important to determine $\alpha^\mathrm{FJ}_K$ precisely.
Using the above characterization together with our Hermite-coefficient certification framework, we establish rigorous lower and upper bounds $\underline{\alpha}^\mathrm{FJ}_K$ and $\overline{\alpha}^\mathrm{FJ}_K$ for $K=3,\dots,16$ (Theorem~\ref{thm:max-k-cut}).
Both bounds reduce to exactly checkable inequalities involving certified bounds on the collision probability.
For every $K=3,\dots,16$, the gap between $\underline{\alpha}^\mathrm{FJ}_K$ and $\overline{\alpha}^\mathrm{FJ}_K$ is at most $10^{-6}$, thereby pinning down the tight approximation ratio $\alpha^\mathrm{FJ}_K$ to within $10^{-6}$.
In particular, our lower bounds improve upon the previous state-of-the-art explicit approximation ratios due to \citet{Frieze+97} and \citet{Newman18} for $K=4,\dots,16$.

\subsubsection{Modularity Maximization}
Community detection is a well-established network-analysis primitive whose goal is to find a partitioning of a graph that naturally divides it, 
especially without specifying the number of clusters in advance. 
The quality function known as modularity, introduced by \citet{Newman+04}, is perhaps the most widely-used measure for community detection~\cite{Fortunato10}. 
For a given clustering of nodes, modularity quantifies the excess of the fraction of edges within clusters over the expected fraction under a random graph model with the same degree distribution. 
Modularity maximization asks for a clustering maximizing the modularity value and is known to be NP-hard~\cite{Brandes+08}. 

While most existing algorithms for modularity maximization are heuristic~\cite{Fortunato10}, a few works have developed algorithms with provable approximation guarantees.
\citet{DasGupta+13} gave a polynomial-time $\epsilon$-additive approximation algorithm for dense graphs, where $\epsilon>0$ is an arbitrary constant, using an algorithmic version of the regularity lemma~\cite{Frieze+96}.
That is, the algorithm returns a feasible solution whose objective value is at least the optimum minus $\epsilon$.
\citet*{Dinh+15} presented a polynomial-time $0.4672$-additive approximation algorithm applicable to arbitrary instances.
Their algorithm is essentially identical to the algorithm of \citet*{Charikar+05} for \maxagree.
Later, \citet*{Kawase+21} improved the analysis and showed that the additive approximation error is at most
$\cos\left(\frac{3-\sqrt{5}}{4}\pi\right) - \frac{1+\sqrt{5}}{8}<0.42084$. 
The focus on additive approximation is motivated by a strong inapproximability result for multiplicative approximation:
for every constant $\alpha>0$, no polynomial-time $\alpha$-approximation algorithm for modularity maximization exists, unless $\mathrm{P}=\mathrm{NP}$~\cite{Dinh+15}.

\paragraph{Our Results.}
We present a polynomial-time $0.3790$-additive approximation algorithm for modularity maximization (Theorem~\ref{thm:modularity}), improving upon the state-of-the-art additive approximation error of $0.42084$ due to \citet*{Kawase+21}.
Our algorithm solves the SDP relaxation used by \citet*{Dinh+15} and \citet*{Kawase+21} and rounds the SDP solution using the $7$-spokes rounding rather than a hyperplane rounding.
Our analysis is not edge-wise, but certifying the additive approximation error reduces to verifying an inequality involving a certified lower bound on the collision probability of the $7$-spokes rounding, made possible by our Hermite-coefficient certification framework.
Finally, we show that the $7$-spokes rounding is nearly optimal, under this analysis, within the family of mixtures consisting of the $k$-spokes roundings for $1\leq k\leq 32$ and the $k$-hyperplane roundings for all $k\geq 1$.
Specifically, no such mixture can certify an additive approximation error smaller than $0.3785$ under the same analysis.

\subsection{Inapproximability of \maxagree\ under the UGC}
\label{subsec:raghavendra-overview}
It was shown by~\citet*{Charikar+05} that the weighted \maxagree\ is NP-hard to approximate within
$79/80+\varepsilon=0.9875+\varepsilon$ 
for every constant $\varepsilon>0$. \citet{note_inapp_maxagree}
extended this threshold to unweighted instances, assuming
$\mathrm{NP}\nsubseteq\mathrm{RP}$.  Under this assumption, this remains the best known
inapproximability bound.

A smaller upper bound is known for a very limited class of algorithms, namely those
based on rounding
the standard SDP relaxation and measuring their guarantee against the fractional solution.  Let $\operatorname{SDP}_{\mathrm{std}}$
denote the relaxation used in Section~\ref{sec:maxagree}, namely
$\operatorname{SDP}(0)$ in~\eqref{eq:maxagree_sdp}, which requires
$\langle v_i,v_j\rangle\geq0$ for every pair of nodes, and let
$\alpha_{\mathrm{SDP}}$ be its integrality ratio, the infimum of
$\operatorname{OPT}/\operatorname{SDP}_{\mathrm{std}}$ over all instances.
The previous best upper bound was
$\alpha_{\mathrm{SDP}}\leq2/(1+\sqrt2)\approx0.8284$, given by the
\textsf{++-} triangle of~\citet*{Charikar+05}.
However, this does not rule out the possibility of better approximations achieved by other means.

\paragraph{Our Results.}
We prove in Section~\ref{sec:raghavendra-gap}
 two complementary facts: a tighter integrality ratio bound for the standard relaxation $\operatorname{SDP}_{\mathrm{std}}$, and a unique-games hardness result for going
beyond this ratio.
To the best of our knowledge, the latter is the first UGC-based inapproximability result for \maxagree.

First, unconditionally, $\alpha_{\mathrm{SDP}}\leq0.802$.  Together with the
rounding of Section~\ref{sec:maxagree}, this places
$\alpha_{\mathrm{SDP}}$ in the interval $[0.7818,0.802]$, so our rounding
analysis is nearly tight.

Second, under the UGC of~\citet{ugc_khot}, it is NP-hard to approximate
weighted \maxagree\ within $\alpha_{\mathrm{SDP}}+\varepsilon$ for any
constant $\varepsilon>0$, and in particular within $0.802+\varepsilon$.
In other words, unless the UGC fails or $\mathrm{NP}\subseteq\mathrm{BPP}$,
$\operatorname{SDP}_{\mathrm{std}}$ is an optimal relaxation for \maxagree:
no polynomial-time algorithm, whether based on a relaxation or not, beats
its integrality ratio by a constant.  
General theory
does not reveal the value of $\alpha_{\mathrm{SDP}}$: the gap-computation
algorithm of~\citet[Theorem~1.3]{RaghavendraSteurer09Round} can in principle
find an $\varepsilon$-additive approximation
thereof, 
but its runtime of
$\exp(\exp(\poly(1/\varepsilon)))$ makes this computation infeasible.

Our approach is as follows.  We first reduce \maxagree\ to a CSP over a fixed alphabet, relate its
canonical SDP  in the sense of~\citet{Raghavendra08} to $\operatorname{SDP}_{\mathrm{std}}$, and then prove the
integrality-gap bound using a Gaussian construction.

\smallskip\noindent\emph{Reduction to a CSP.}
To work with CSPs 
over a fixed finite alphabet, we
pass to $\maxagree[k]$,
its restriction to at most $k$ clusters.  This is a $2$-CSP over the
alphabet $[k]$: a positive edge has payoff
$\operatorname{EQ}(a,b)=\mathbf{1}[a=b]$, and a negative edge has payoff
$\operatorname{NEQ}(a,b)=\mathbf{1}[a\neq b]$.  Merging the clusters of any
clustering into $k$ random buckets keeps every positive agreement and, in
expectation, a $1-1/k$ fraction of the negative ones
(Lemma~\ref{lem:hash-clusters}).  Hence the optima of $\maxagree[k]$ and
\maxagree\ are within a factor $1-1/k$ of each other; as we explain next, so
are their SDPs.

\smallskip\noindent\emph{The canonical SDP.}
Viewed as a CSP, the canonical SDP
of $\maxagree[k]$, denoted $\operatorname{SDP}_k$, asks for two objects that must be consistent
with each other.  The first is a vector $u_{i,a}$ for every node $i$ and
label $a$; in an integral solution, $u_{i,a}=u_0$ if $i$ receives label $a$,
and $u_{i,a}=0$ otherwise.  The second is a \emph{local distribution}
$\mu_{ij}$, that is, a genuine joint distribution of the two labels
$(x_i,x_j)\in[k]^2$, for every pair $ij$ that carries a constraint; since Section~\ref{sec:raghavendra-gap} treats every
pair of nodes as a constraint, possibly of zero weight, this means every
pair.  The two are tied by
requiring that inner products equal local probabilities,
$\langle u_{i,a},u_{j,b}\rangle=\Pr_{\mu_{ij}}[x_i=a,\ x_j=b]$, and similarly
for the first moments $\langle u_{i,a},u_0\rangle$.  Thus the SDP must
exhibit, for every pair of nodes viewed in isolation, a genuine random
labeling of that pair, and a positive semidefinite Gram matrix glues these
local views together; but the views need not come from any global
distribution over labelings.  

For \maxagree, the payoffs 
only depend on whether two labels are equal.  After symmetrizing a solution over
all label permutations, the only relevant statistic of $\mu_{ij}$ is
therefore $y_{ij}=\Pr_{\mu_{ij}}[x_i=x_j]$, and the canonical SDP becomes the
program~\eqref{eq:reduced-basic-sdp}: $y_{ii}=1$, $y_{ij}\geq0$, and
$kY-J_n\succeq0$, where $J_n$ is the all-ones matrix.  In particular, the
constraint $\langle v_i,v_j\rangle\geq0$ of
$\operatorname{SDP}_{\mathrm{std}}$ is precisely the requirement that the pair
$ij$ admit a local distribution.  The only difference between the two
programs is the constraint $kY-J_n\succeq0$ in place of $Y\succeq0$.  It can be thought of as
the pigeonhole principle in SDP form: it rules out, for instance, $k+1$
nodes with pairwise $y_{ij}=0$, since $k$ labels cannot keep $k+1$
nodes pairwise apart.  Its effect vanishes as $k\to\infty$:
Lemma~\ref{lem:basic-sdp-sandwich} shows that
\[
  \left(1-\frac1k\right)\operatorname{SDP}_{\mathrm{std}}
  \leq\operatorname{SDP}_k
  \leq\operatorname{SDP}_{\mathrm{std}},
\]
so $\operatorname{SDP}_{\mathrm{std}}$ is the limit of the canonical SDPs of
the problems $\maxagree[k]$.

Raghavendra's theorem~\citep{Raghavendra08}, together with this comparison
and the reduction to $k$ clusters, gives the UGC-hardness statement of
Theorem~\ref{thm:maxagree-sdp-gap-hardness}; we prove this in
Section~\ref{subsec:ugc-gap-transfer}.

\smallskip\noindent\emph{The Gaussian integrality gap.}
It remains to prove the unconditional bound
$\alpha_{\mathrm{SDP}}\leq0.802$ in
Theorem~\ref{thm:maxagree-sdp-gap-hardness}.
Our integrality gap is built from an auxiliary instance in Gaussian space.
Its basic pattern is that of the Max-Cut gap of~\citet{FeigeSchechtman02}:
the SDP vectors are essentially the points of the space, and an
isoperimetric inequality bounds every integral solution.  Two difficulties
absent from Max-Cut prevent a direct adaptation.  First, a cut has only two
sides, whereas a clustering can have any number of cells of unequal
measures, and the most noise-stable such partitions are not known; natural
candidates are even known not to be optimal~\citep{HeilmanMosselNeeman16}.
Second, the Goemans--Williamson SDP only asks for unit vectors, whereas
$\operatorname{SDP}_{\mathrm{std}}$ requires nonnegative inner products on
every pair of nodes, including pairs of zero weight, and the natural
vectors violate this on half of the independent pairs.

The auxiliary instance has node set $\mathbb R^d$ and a constraint
distribution that puts negative edges between independent Gaussians and
positive edges between $(3/4)$-correlated Gaussians.  This can be formalized
by two measures on $\mathbb R^{2d}$, one for each edge type, with total mass
one.  The negative constraints encourage shrinking the Gaussian volume of the
clusters.  The agreement of any partition of Gaussian space with the positive
constraints is determined by Gaussian noise stability.  Borell's isoperimetric
inequality~\citep{Borell85} asserts that among sets of a prescribed measure,
halfspaces maximize the noise stability.  To get around the first
difficulty, we apply it to each cell separately, which is lossy but avoids
the unknown optimal partitions.  Combining this with a quadratic
upper bound on the noise stability of halfspaces, and choosing the
weights of the two edge types so as to cancel the term from the negative
agreements that depends on the cluster measures, we bound the agreement of
every clustering by $1203/1750\approx0.6874$.  This bound concerns integral
solutions only, so it does not depend on the relaxation.

On the fractional side, the natural unit vectors $r_x=x/\|x\|$ suggest the
value $6/7$, but they meet the second difficulty.  We replace them by the
shifted unit vectors
$k^{-1/2}e_0+(1-1/k)^{1/2}r_x$, with $e_0\perp\mathbb R^d$, whose inner
products $\frac1k+\bigl(1-\frac1k\bigr)\langle r_x,r_y\rangle$ are negative
only on pairs of negligible weight, and we delete those pairs.  After
discretization, this gives an explicit instance $J$ on which the weaker
relaxation that imposes $\langle v_i,v_j\rangle\geq0$ only on pairs carrying
weight has value close to $6/7$ for large $k$, while every clustering still
has agreement at most about $0.6874$.  For this weaker relaxation, $J$ is thus
an explicit instance with inverse gap close to $0.802$.

For $\operatorname{SDP}_{\mathrm{std}}$ itself, $J$ does not suffice to establish the gap:
$\operatorname{SDP}_{\mathrm{std}}$ imposes $\langle v_i,v_j\rangle\geq0$ on
every pair of nodes, including the deleted pairs and others of negligible
or zero weight. The instance that
witnesses $\alpha_{\mathrm{SDP}}\leq0.802$ is therefore a different one,
produced from $J$ by the strong-gap theorem
of~\citet{RaghavendraSteurer09Gaps}.  This theorem converts any gap instance
of the weaker relaxation (Raghavendra's canonical SDP, which asks
for local distributions only on the pairs of positive weight) into another
instance that keeps the gap point, up to arbitrarily small additive errors,
even for relaxations with consistent local distributions on all sets of
boundedly many nodes, including $\operatorname{SDP}_{\mathrm{std}}$.  This detour is the
only step of the integrality gap that relies on Raghavendra's work.

For the numerical hardness bound, the auxiliary instance $J$ already
suffices, by the gap-to-hardness transfer described above.  Under the UGC,
this hardness bound already implies $\alpha_{\mathrm{SDP}}\leq0.802$: since
$\operatorname{SDP}_{\mathrm{std}}$ can be solved in polynomial time, its
inverse gap cannot exceed a UG-hardness threshold unless
$\mathrm P=\mathrm{NP}$.  The strong-gap theorem is needed only to remove this
assumption; in effect, it replaces the conjectured hardness of Unique Games
by a Unique Games instance whose low value is proven.

\subsection{Limitations of Rounding Families under the Edge-Wise Analysis}

The improved approximation ratios for \maxagree\ and \maxagreek\ above naturally raise the question of how much further one can go by choosing better mixtures of rounding schemes while retaining the same edge-wise analysis. We address this question by developing a framework that, for any prescribed finite family of rounding schemes, certifies an upper bound on the best approximation ratio achievable by mixtures over that family within the edge-wise analysis.

Our framework is based on LP duality and certified bounds on collision probabilities. The best ratio certifiable by mixtures over a given rounding family can be formulated as a semi-infinite LP. We obtain a finite relaxation by restricting the edge-wise inequalities to finitely many rational test points and replacing the collision probabilities by certified lower and upper bounds. A rational feasible solution to the dual of this finite relaxed LP then yields a rigorous upper bound on the optimal value of the original semi-infinite LP. The final certificate is checked entirely in exact rational arithmetic.

For \maxagree, we apply this framework to the family consisting of the $k$-spokes roundings for $1\leq k\leq 32$ and the $k$-hyperplane roundings for $1\leq k\leq 5$. We show that no mixture over this family can certify an approximation ratio larger than $0.78187$ under the edge-wise analysis. Since our algorithm achieves $0.7818$, this leaves only tiny room for improvement within this family and analysis framework. We also apply the framework to the subfamily underlying Swamy's algorithm, consisting of the $6$-spokes and $2$-hyperplane roundings, and obtain an upper bound of $0.77222$. Thus, within the same framework, attaining our ratio of $0.7818$, or any comparable improvement over Swamy's guarantee, requires moving beyond Swamy's subfamily.

For \maxagreek, for every $K=3,\ldots,16$, we consider the family of all $k$-spokes roundings with $1\leq k\leq K$, precisely the family used by our algorithm. The resulting certified upper bounds are very close to the corresponding approximation ratios of our algorithm, showing that these guarantees are also nearly optimal among mixtures over this family within the edge-wise analysis framework.

\section{Hermite-Coefficient Certification Framework}\label{sec:theory}

In this section, we develop our Hermite-coefficient certification framework.
We first formulate the collision probability of the $k$-spokes rounding in Gaussian space and derive its Hermite expansion.
We then establish a structural property of the associated separation probability, 
before showing how to obtain certified bounds on finitely many coefficients and use them to derive rigorous lower and upper bounds on the entire collision-probability function.

\subsection{Preliminaries}

We recall a few standard facts about
Gaussian space and Hermite expansions; see \cite[Chapter~11]{o2014analysis}.
We write $\mathbb{N}_0=\{0,1,2,\ldots\}$. 
Let $\gamma_d$ denote the standard Gaussian measure on $\mathbb{R}^d$, with
density
\[\phi(x) = (2\pi)^{-d/2}\exp\left(-\frac{\|x\|_2^2}{2}\right).\]

Let $L^2(\gamma_d)$ be the space of square-integrable functions with respect to $\gamma_d$.
For $f,g\in L^2(\gamma_d)$, write
$\langle f,g\rangle
=
\E_{Z\sim \mathcal{N}(\bm{0}_d,I_d)}[f(Z)g(Z)],$
and let $\|f\|_2^2=\langle f,f\rangle$.

We will repeatedly use the following elementary property of Gaussian linear
forms.  If $G\sim\mathcal{N}(\bm{0}_d,I_d)$ and $v,w\in\mathbb{R}^d$, then
$v\cdot G$ and $w\cdot G$ are centered Gaussian random variables with
\[
\Var(v\cdot G)=\|v\|_2^2
\quad \text{and}\quad 
\Cov(v\cdot G,w\cdot G)=v\cdot w.
\]
Moreover, if $G_1,\ldots,G_k$ are independent standard Gaussian vectors, then
the pairs
\[
(v\cdot G_p,w\cdot G_p),\quad p\in[k],
\]
are independent.  In particular, if $v$ and $w$ are unit vectors with
$v\cdot w=\rho$, then $(v\cdot G,w\cdot G)$ is a pair of $\rho$-correlated
standard Gaussians.

More generally, for $\rho\in[-1,1]$, we say that
$Z,Z'\in\mathbb{R}^d$ are $\rho$-correlated standard Gaussian vectors if each
coordinate pair $(Z_r,Z'_r)$ is a pair of $\rho$-correlated standard Gaussians
and the coordinate pairs are mutually independent
\cite[Definition~11.10]{o2014analysis}.  Equivalently, if $Z$ is standard
Gaussian, then one may sample $Z'$ as
$Z'=\rho Z+\sqrt{1-\rho^2}\,G,$
where $G\sim\mathcal{N}(\bm{0}_d,I_d)$ is independent of $Z$.
For $\rho\in[-1,1]$, the Gaussian noise operator $\mathrm{U}_\rho$
\cite[Definition~11.12]{o2014analysis} is defined by 
\begin{equation*}
\mathrm{U}_\rho f(z)
=
\E_{G\sim\mathcal{N}(\bm{0}_d,I_d)}
\left[
f\left(\rho z+\sqrt{1-\rho^2}\,G\right)
\right].
\end{equation*}

We also recall the Hermite expansion in Gaussian space. In dimension one,
let $H_j\colon \mathbb{R}\to\mathbb{R}$ denote the probabilists' Hermite polynomial
of degree $j$, defined by the generating function~\cite[Eq. 11.8]{o2014analysis}
\begin{equation}\label{eq:hermite-generating-function}
\exp(tz-t^2/2)
= \sum_{j=0}^{\infty} H_j(z)\frac{t^j}{j!},
\quad t,z\in\mathbb{R}.
\end{equation}
We use the normalized Hermite polynomials
$h_j(z)=\frac{H_j(z)}{\sqrt{j!}}$.
For example, 
$h_0(z)=1$, $h_1(z)=z$, and $h_2(z)=\frac{z^2-1}{\sqrt{2}}$. 
These polynomials are obtained by applying
Gram--Schmidt orthogonalization to the monomials
$\{z^j\}_{j=0}^{\infty}$ under the Gaussian inner product~\cite[Sec. 11.2]{o2014analysis}.

For a multi-index
$\alpha=(\alpha_1,\ldots,\alpha_d)\in\mathbb{N}_0^d$ and $z=(z_1,\dots, z_d)\in \mathbb{R}^d$, define
\begin{equation*}
h_\alpha(z)=\prod_{r=1}^d h_{\alpha_r}(z_r)
\quad \text{and}\quad 
|\alpha|=\sum_{r=1}^d \alpha_r.
\end{equation*}
The polynomials $(h_\alpha)_{\alpha\in\mathbb{N}_0^d}$ form a complete
orthonormal basis for $L^2(\gamma_d)$
\cite[Propositions~11.30 and~11.33]{o2014analysis}.  Hence every
$f\in L^2(\gamma_d)$ admits a unique Hermite expansion
\begin{equation}\label{eq:hermite-expansion}
f=\sum_{\alpha\in\mathbb{N}_0^d}\hat f(\alpha)h_\alpha 
\quad \text{with}\quad 
\hat f(\alpha)=\langle f,h_\alpha\rangle,
\end{equation}
where the series converges in $L^2(\gamma_d)$
\cite[Definition~11.34]{o2014analysis}.
For $m\in\mathbb{N}_0$, let
\begin{equation*}
f^{(m)}
=
\sum_{|\alpha|=m}\hat f(\alpha)h_\alpha
\end{equation*}
be the degree-$m$ Hermite component of $f$, and define the level-$m$ weight
\begin{equation}\label{eq:hermite-level-weight}
W_m(f)
=
\|f^{(m)}\|_2^2
=
\sum_{|\alpha|=m}\hat f(\alpha)^2.
\end{equation}
If $Z$ and $Z'$ are $\rho$-correlated standard Gaussian vectors, then the Hermite
basis diagonalizes the Gaussian noise operator, giving the noise-stability
identity
\begin{equation}\label{eq:hermite-noise-stability}
\E[f(Z)f(Z')]
=
\sum_{\alpha\in\mathbb{N}_0^d}\rho^{|\alpha|}\hat f(\alpha)^2
=
\sum_{m=0}^{\infty}\rho^m W_m(f)
\end{equation}
for every $f\in L^2(\gamma_d)$
\cite[Proposition~11.37]{o2014analysis}.

\subsection{Frieze--Jerrum Rounding and Collision Probabilities}

Let $V=\{1,\dots, n\}$. Let $\bm{v}_1,\dots,\bm{v}_n \in\mathbb{R}^n$ be unit vectors. 
For an integer $k\geq 1$, the $k$-spokes rounding generates $k$ independent Gaussian vectors $\bm{g}_1,\dots,\bm{g}_k \sim \mathcal{N}(\bm{0}_n,I_n)$, 
and for each $p=1,\dots, k$, constructs a cluster $C_p\leftarrow \{i\in V: p\in \argmax_{q\in [k]}\bm{v}_i\cdot \bm{g}_{q}\}$. 
Ties occur with probability zero and may be broken arbitrarily. 
Finally, it returns a clustering $\mathcal{C}=\{C_1,\dots, C_k\}$ of $V$. 
For reference, this procedure is summarized in Algorithm~\ref{alg:k-spokes}. 
For $k=1$, this is the trivial rounding that places all nodes in a single cluster, and hence the collision probability is equal to $1$. 
\begin{algorithm}[h]
\SetKwInput{Input}{Input}
\caption{$k\text{-spokes}((\bm{v}_i)_{i\in V})$}\label{alg:k-spokes}
\Input{$\bm{v}_1,\dots, \bm{v}_n\in \mathbb{R}^n$}
Generate $k$ independent Gaussian vectors $\bm{g}_1,\dots,\bm{g}_k\sim \mathcal{N}(\bm{0}_n,I_n)$\;
\textbf{for} $p=1,\dots, k$ \textbf{do} $C_p\leftarrow \{i\in V: p\in \argmax_{q\in [k]}\bm{v}_i\cdot \bm{g}_{q}\}$\;
\Return $\mathcal{C}=\{C_1,\dots, C_k\}$.
\end{algorithm}

Fix $i,j\in V$ and let $\rho=\bm{v}_i\cdot \bm{v}_j$. 
For each $p\in[k]$, define
\begin{equation}\label{eq:spoke-linear-forms}
L_p=\bm{v}_i\cdot \bm{g}_p 
\quad \text{and}\quad 
R_p=\bm{v}_j\cdot \bm{g}_p.
\end{equation}
By the Gaussian linear-form facts recalled above, $L_p$ and $R_p$ are
standard Gaussian random variables and
$\Cov(L_p,R_p)=\bm{v}_i\cdot \bm{v}_j=\rho$.
Thus $(L_p,R_p)$ is a pair of $\rho$-correlated standard Gaussians.  Moreover,
since $\bm{g}_1,\ldots,\bm{g}_k$ are independent, the pairs
$(L_1,R_1),\ldots,(L_k,R_k)$
are independent.  Therefore the probability that $i$ and $j$ are assigned to
the same cluster by $k$-spokes depends only on $\rho$.  We denote this
probability by
\begin{equation}\label{eq:Pk-def}
P_k(\rho)
=
\Pr\left[
\argmax_{p\in[k]} L_p
=
\argmax_{p\in[k]} R_p
\right].
\end{equation}

\subsection{Hermite Expansion of Collision Probabilities}\label{subsec:hermite}

Let $L=(L_1,\ldots,L_k)$ and $R=(R_1,\ldots,R_k)$ as in \eqref{eq:spoke-linear-forms}.  As observed above, $L$
and $R$ are $\rho$-correlated standard Gaussian vectors in $\mathbb{R}^k$.
For each $p\in[k]$, define
\[
A_p=\left\{x\in\mathbb{R}^k: x_p=\max_{q\in[k]}x_q\right\},
\]
and let $f_p=\1_{A_p}$.  The sets $A_p$ overlap only on Gaussian measure-zero
boundaries, so by \eqref{eq:Pk-def},
\begin{equation*}
P_k(\rho)
= \sum_{p=1}^k \Pr[L\in A_p \text{ and } R\in A_p]
= \sum_{p=1}^k \E[f_p(L)f_p(R)].
\end{equation*}
Each $A_p$ is an intersection of halfspaces, hence measurable, and
$f_p\in L^2(\gamma_k)$.
Applying the noise-stability identity \eqref{eq:hermite-noise-stability} to $f_p$, we get
\begin{equation*}
\E[f_p(L)f_p(R)]
=
\sum_{m=0}^{\infty}\rho^m W_m(f_p).
\end{equation*}
Therefore
\begin{equation}\label{eq:Pk-hermite-series}
P_k(\rho)
=
\sum_{p=1}^k\sum_{m=0}^{\infty}\rho^m W_m(f_p)
=
\sum_{m=0}^{\infty}c_{k,m}\rho^m,
\end{equation}
where
\begin{equation}\label{eq:ckm-def}
c_{k,m}
=
\sum_{p=1}^k W_m(f_p)
\ge0
\quad
\text{for all }m\in\mathbb{N}_0.
\end{equation}

At $\rho=0$, the Gaussian vectors $L$ and $R$ are independent.  By symmetry,
both $\argmax_{p\in[k]}L_p$ and $\argmax_{p\in[k]}R_p$ are uniformly
distributed on $[k]$, and hence
\begin{equation}\label{eq:Pk-zero}
P_k(0)=c_{k,0}=\frac1k.
\end{equation}
At $\rho=1$, we have $L=R$, so the two argmax indices agree almost surely.
Thus
\begin{equation}\label{eq:Pk-one}
P_k(1)=\sum_{m=0}^{\infty}c_{k,m}=1.
\end{equation}
Combining \eqref{eq:Pk-zero} and \eqref{eq:Pk-one}, we obtain
\begin{equation*}
\sum_{m=1}^{\infty}c_{k,m}=1-\frac1k.
\end{equation*}

\subsection{A Structural Property of Separation Probabilities}\label{subsec:theory-structure}

We call $1-P_k(\rho)$ the separation probability. 
Using the Hermite representation of the collision probability,
we prove the following structural property: 
\begin{theorem}\label{thm:negative-ratio-monotonicity}
Fix $k\geq 2$. Define 
\begin{equation*}
R_k(\rho)=\frac{1-P_k(\rho)}{1-\rho}\quad \text{for }\rho\in [-1,1).
\end{equation*}
Then, $R_k$ is nondecreasing on $\rho\in [-1/2,1)$.
\end{theorem}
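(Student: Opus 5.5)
The plan follows the outline already given in the introduction. Using \eqref{eq:Pk-hermite-series} together with \eqref{eq:Pk-one}, i.e.\ $\sum_m c_{k,m}=1$, we write
\[
1-P_k(\rho)=\sum_{m=0}^\infty c_{k,m}(1-\rho^m)=\sum_{m=1}^\infty c_{k,m}(1-\rho^m),
\]
so that for $\rho\in[-1,1)$
\[
R_k(\rho)=\sum_{m=1}^\infty c_{k,m}\,q_m(\rho),\qquad q_m(\rho):=\frac{1-\rho^m}{1-\rho}=\sum_{j=0}^{m-1}\rho^j .
\]
The series converges absolutely for $|\rho|\le1$: $|q_m(\rho)|\le m$ is not summable against $c_{k,m}$ a priori, but for $\rho\in[-1/2,1)$ fixed we have $|q_m(\rho)|\le 2/(1-\rho)$, which suffices. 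Since $c_{k,m}\ge0$ by \eqref{eq:ckm-def}, a nonnegative combination of nondecreasing functions is nondecreasing (pointwise limits preserve monotonicity). It therefore suffices to show each $q_m$ is nondecreasing on $[-1/2,1)$.

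For the key lemma, consider $q_m$ on $[0,1)$. There it is a polynomial with nonnegative coefficients, hence nondecreasing. On $[-1/2,0]$ we argue differently. Set $\rho=-t$ with $t\in[0,1/2]$; then $q_m(-t)=(1-(-t)^m)/(1+t)$, and we must show this is nonincreasing in $t$. We split by parity.

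\emph{Odd $m$.} Here $q_m(-t)=(1+t^m)/(1+t)$, and its derivative in $t$ has the sign of $mt^{m-1}(1+t)-(1+t^m)=mt^{m-1}+(m-1)t^m-1$. This is increasing in $t$, and at $t=1/2$ it equals $m2^{-(m-1)}+(m-1)2^{-m}-1=(3m-1)2^{-m}-1$. That is $\le0$ for $m\ge1$: at $m=1$ it gives $1-1=0$, at $m=3$ it gives $8/8-1=0$, and beyond that it is negative, since $(3m-1)$ grows slower than $2^m$ (check $m=5$: $14/32$).

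\emph{Even $m$.} Here $q_m(-t)=(1-t^m)/(1+t)$; the numerator is decreasing and positive and the denominator is increasing and positive, so the quotient is decreasing.

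Monotonicity of $q_m$ on the whole interval $[-1/2,1)$ then follows by continuity at $0$. The case $m=1$ gives $q_1\equiv1$, which is trivially fine.

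The main obstacle is this elementary lemma for odd $m$, where the endpoint $-1/2$ is exactly tight ($m=3$: $q_3=1+\rho+\rho^2$ has its minimum at $\rho=-1/2$). The inequality $(3m-1)\le 2^m$ must be verified for all odd $m\ge1$, which is done by a quick induction from $m=3$ (the step uses $3\le 2^m$). Everything else—the termwise monotone combination and the interchange of sum with the monotonicity—is routine, given uniform convergence on compact subintervals of $[-1/2,1)$ since $|q_m|\le 2/(1-\rho)$ there and $\sum c_{k,m}=1$.
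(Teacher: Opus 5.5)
Your proof is correct and follows the paper's argument: the same decomposition $R_k=\sum_{m\ge1}c_{k,m}q_m$ with $c_{k,m}\ge0$, the same reduction to monotonicity of each $q_m(\rho)=(1-\rho^m)/(1-\rho)$ on $[-1/2,1)$, and the same passage to the limit. The only difference is how you verify monotonicity of $q_m$ on $[-1/2,0]$. The paper pairs consecutive terms of $q_m'(\rho)=\sum_{j=1}^{m-1}j\rho^{j-1}$. You instead use the closed forms $(1+t^m)/(1+t)$ for odd $m$ and $(1-t^m)/(1+t)$ for even $m$, and reduce the odd case to the endpoint inequality $3m-1\le 2^m$. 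Both arguments are valid.
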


\begin{proof}
Recall from equations~\eqref{eq:Pk-hermite-series} and \eqref{eq:Pk-one} that
$P_k(\rho)=\sum_{m=0}^\infty c_{k,m}\rho^m$ and $P_k(1)=\sum_{m=0}^\infty c_{k,m}=1$.
Hence,
\begin{equation*}
R_k(\rho)=\frac{1-P_k(\rho)}{1-\rho}
=\sum_{m=0}^\infty c_{k,m}\frac{1-\rho^m}{1-\rho}
=\sum_{m=1}^\infty c_{k,m}\frac{1-\rho^m}{1-\rho}. 
\end{equation*}

For $m\geq 1$ and $\rho \in [-1/2,1)$, define $q_m(\rho)=\frac{1-\rho^m}{1-\rho}$.
We claim that $q'_m(\rho)\geq 0$ for every $m\geq 1$ and any $\rho\in [-1/2,1)$.
The case $m=1$ is immediate, so assume $m\geq 2$.
Because $q_m(\rho)=\sum_{j=1}^m \rho^{j-1}$, we have $q'_m(\rho)=\sum_{j=1}^{m-1}j\rho^{j-1}$.
Clearly, $q'_m(\rho)\geq 0$ for any $\rho \in [0,1)$.
It remains to consider the case where $\rho \in [-1/2,0)$.
For convenience, write $\rho=-x$, where $0<x\leq 1/2$. Then we have
\begin{equation*}
q'_m(-x)=\sum_{j=1}^{m-1}j (-x)^{j-1}
=
\begin{cases}
\displaystyle \sum_{r=1}^{\lfloor m/2\rfloor}\left((2r-1)x^{2r-2} - 2r x^{2r-1}\right),  & m \text{ is odd},\\
\displaystyle \sum_{r=1}^{\lfloor m/2\rfloor - 1}\left((2r-1)x^{2r-2} - 2r x^{2r-1}\right) + (m-1)(-x)^{m-2},  & m \text{ is even}.
\end{cases}
\end{equation*}
For each $r\geq 1$,
\begin{equation*}
(2r-1)x^{2r-2} - 2r x^{2r-1} = x^{2r-2}((2r-1)-2rx) \geq x^{2r-2}(2r-1-r)= x^{2r-2}(r-1)\geq 0,
\end{equation*}
where the first inequality follows from $x \leq 1/2$.
Together with $(m-1)(-x)^{m-2}>0$, we have $q'_m(\rho)=q'_m(-x)\geq 0$ for $m\geq 2$ and $\rho\in [-1/2,0)$.
Thus, $q_m(\rho)$ is nondecreasing on $\rho\in [-1/2,1)$ for every $m\geq 1$.

Since $c_{k,m}\geq 0$ for every $m\geq 1$ and each $q_m(\rho)$ is nondecreasing on $\rho \in [-1/2,1)$, 
every partial sum $\sum_{m=1}^M c_{k,m}q_m(\rho)$ of $R_k(\rho)$ is nondecreasing on $\rho \in [-1/2,1)$. 
Letting $M\rightarrow \infty$, we conclude that 
$R_k$ is also nondecreasing on $\rho\in [-1/2,1)$. 
\end{proof}

This theorem enables us to reduce the negative-edge inequality arising in the edge-wise analysis formalized in Section~\ref{subsec:approximations-general} from an entire interval of correlations to a single endpoint condition, significantly simplifying the analysis.
It also leads almost immediately to a proof of a long-standing conjecture concerning the FJ rounding for Max $K$-Cut. 

Theorem~\ref{thm:negative-ratio-monotonicity} applies directly to any mixture of the $k$-spokes roundings: 
\begin{corollary}\label{cor:negative-ratio-monotonicity}
Let $\mathcal{K}$ be a set of positive integers.
Consider the mixture of the $k$-spokes roundings with probabilities $\bm{\lambda}=\left(\lambda_k\right)_{k\in \mathcal{K}}$,
where $\lambda_k \geq 0$ and $\sum_{k\in \mathcal{K}} \lambda_k =1$.
For two unit vectors $\bm{v}_i$ and $\bm{v}_j$ with $\bm{v}_i\cdot \bm{v}_j=\rho$, 
let $P(\rho)$ be the probability that $i$ and $j$ are assigned to the same cluster by this mixture. 
Define 
\begin{equation*}
R(\rho)=\frac{1-P(\rho)}{1-\rho}\quad \text{for }\rho\in [-1,1).
\end{equation*}
Then, $R$ is nondecreasing on $\rho\in [-1/2,1)$.
\end{corollary}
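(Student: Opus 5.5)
The plan is to use linearity to reduce the corollary to Theorem~\ref{thm:negative-ratio-monotonicity}. First, I would condition on which rounding the mixture selects. The mixture picks $k\in\mathcal{K}$ with probability $\lambda_k$ and then runs the $k$-spokes rounding, independently of the SDP vectors. The law of total probability therefore gives
\begin{equation*}
P(\rho)=\sum_{k\in\mathcal{K}}\lambda_k P_k(\rho).
\end{equation*}
Since $\sum_{k}\lambda_k=1$, we also have $1-P(\rho)=\sum_{k}\lambda_k(1-P_k(\rho))$. Dividing by $1-\rho>0$ then yields
\begin{equation*}
R(\rho)=\sum_{k\in\mathcal{K}}\lambda_k R_k(\rho)\qquad\text{for }\rho\in[-1,1).
\end{equation*}

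Next, I would check that each summand is nondecreasing on $[-1/2,1)$. For $k\ge2$ this is exactly Theorem~\ref{thm:negative-ratio-monotonicity}. The case $k=1$ is not covered by that theorem, but it is trivial: $P_1\equiv1$, so $R_1\equiv0$, which is constant and hence nondecreasing. A nonnegative combination of nondecreasing functions is nondecreasing, and this gives the claim.

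If $\mathcal{K}$ is allowed to be infinite, the sum over $k$ is a countable convex combination. Each term $\lambda_k R_k$ is nonnegative and bounded, since $R_k(\rho)\le 1/(1-\rho)$. Hence the series converges pointwise on $[-1/2,1)$, and a pointwise limit of nondecreasing partial sums is nondecreasing. Alternatively, one can combine the Hermite expansions directly: $P(\rho)=\sum_m(\sum_k\lambda_k c_{k,m})\rho^m$ has nonnegative coefficients summing to $1$. The argument in the proof of Theorem~\ref{thm:negative-ratio-monotonicity}, which shows that each $q_m$ is nondecreasing, then applies verbatim.

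There is no real obstacle here. The only points needing care are the degenerate case $k=1$ and, if $\mathcal{K}$ is infinite, justifying the exchange of summation and limits, which is handled by nonnegativity as above.
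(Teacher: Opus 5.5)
Your proposal is correct and is essentially the paper's argument: the paper gives no separate proof and only remarks that Theorem~\ref{thm:negative-ratio-monotonicity} applies directly to mixtures, i.e., $R=\sum_{k\in\mathcal{K}}\lambda_k R_k$ is a nonnegative combination of nondecreasing functions. Your explicit treatment of the degenerate case $k=1$ (where $R_1\equiv 0$) and of a possibly infinite $\mathcal{K}$ supplies details the paper leaves unstated.
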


\subsection{Certified Bounds on the Power-Series Coefficients}\label{subsec:theory-one-dimensional}

Here we outline the main ideas behind obtaining certified bounds on the coefficients $c_{k,m}$. 
The complete derivation and certification procedure are given in Section~\ref{sec:details}.

A priori, each coefficient is a sum of squares of $k$-dimensional Gaussian integrals over
argmax regions. By symmetry, it suffices to consider the region where coordinate
$1$ is largest.  There the indicator factors as
$\prod_{q=2}^k \1[Z_q\le Z_1]$, so conditioning on $Z_1=x$ separates the
remaining coordinates. Using the identity

\[
  \int_{-\infty}^x \varphi(y)h_a(y)\,\mathrm{d}y
  =
  \begin{cases}
  \Phi(x), & a=0,\\[2mm]
  -\frac{\varphi(x)h_{a-1}(x)}{\sqrt a}, & a\ge1, 
  \end{cases}
\]
we see that each original
$k$-dimensional Hermite coefficient becomes a single signed one-dimensional
integral with an explicit integrand. (Recall that $\varphi$ and $\Phi$ refer to the density and cumulative distribution functions of a standard Gaussian, respectively.)

For the exact certificate, it is convenient to clear the square-root normalizations and work instead with unnormalized probabilists' Hermite polynomials. Let $H_a$ denote the unnormalized probabilists' Hermite polynomial, set $K_0(x)=\Phi(x)$, and set $K_b(x)=-\varphi(x)H_{b-1}(x)$ for $b\ge1$. 
For a
multiset $\lambda$ of $k-1$ nonnegative integers, write $|\lambda|$ for the
sum of its elements and
\[
N(\lambda)=\frac{(k-1)!}{\prod_{b\ge0} n_b(\lambda)!},
\]
where $n_b(\lambda)$ is the multiplicity of $b$ in $\lambda$.  
Products indexed by $b\in\lambda$ are understood to count elements with multiplicity.
Define
\begin{equation}\label{eq:main-Balambda-def}
B_{a,\lambda}
=
\int_{-\infty}^{\infty}
\varphi(x)H_a(x)\prod_{b\in \lambda}K_b(x)\,\mathrm{d}x.
\end{equation}
The core certificate statement is the following.  
The first part is proved in
Section~\ref{subsec:details-normalization-symmetry}, 
while the second part is proved in Section~\ref{subsec:details-assembling}.

\begin{proposition}\label{prop:main-one-dimensional-certificate}
For every $k\ge2$ and $m\in\mathbb{N}_0$,
\begin{equation*}
c_{k,m}
=
k
\sum_{a=0}^m
\sum_{\substack{\lambda:\ |\lambda|=m-a\\
\lambda\text{ is a multiset of size }k-1}}
N(\lambda)\,
\frac{B_{a,\lambda}^2}
{a!\prod_{b\in \lambda} b!}.
\end{equation*}
Moreover, suppose that for every pair $(a,\lambda)$ in the sum we have a
certified rational interval $I_{a,\lambda}\ni B_{a,\lambda}$, and define
$\underline b_{a,\lambda}=\min_{x\in I_{a,\lambda}}x^2$ and $\overline b_{a,\lambda}=\max_{x\in I_{a,\lambda}}x^2$.
Then
\[
\underline c_{k,m}
=
k
\sum_{a=0}^m
\sum_{\substack{\lambda:\ |\lambda|=m-a\\
\lambda\text{ is a multiset of size }k-1}}
N(\lambda)\,
\frac{\underline b_{a,\lambda}}
{a!\prod_{b\in \lambda} b!}
\]
and 
\[
\overline c_{k,m}
=
k
\sum_{a=0}^m
\sum_{\substack{\lambda:\ |\lambda|=m-a\\
\lambda\text{ is a multiset of size }k-1}}
N(\lambda)\,
\frac{\overline b_{a,\lambda}}
{a!\prod_{b\in \lambda} b!}
\]
are certified rational lower and upper bounds, respectively, on $c_{k,m}$.
\end{proposition}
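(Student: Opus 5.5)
The plan is to start from the definition $c_{k,m}=\sum_{p=1}^k W_m(f_p)$ in \eqref{eq:ckm-def} and use the permutation symmetry of $\gamma_k$. Permuting coordinates maps $A_p$ to $A_1$ and permutes multi-indices, so $W_m(f_p)=W_m(f_1)$ for every $p$. Hence $c_{k,m}=k\,W_m(f_1)=k\sum_{|\alpha|=m}\hat f_1(\alpha)^2$. For $\alpha=(a,\beta_2,\dots,\beta_k)$, I will compute $\hat f_1(\alpha)=\E[\1_{A_1}(Z)h_\alpha(Z)]$ by conditioning on $Z_1=x$. On $A_1$ the indicator is $\prod_{q\ge2}\1[Z_q\le x]$, and the remaining coordinates are independent, so
\[
\hat f_1(\alpha)=\int_{-\infty}^{\infty}\varphi(x)h_a(x)\prod_{q=2}^k\Bigl(\int_{-\infty}^x\varphi(y)h_{\beta_q}(y)\,\mathrm{d}y\Bigr)\mathrm{d}x .
\]
Fubini applies because the integrand is bounded by an integrable function: $|\1|\le1$ and $h_\alpha\in L^2\subset L^1(\gamma_k)$.

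Next I will prove the stated one-dimensional identity. For $a\ge1$, use $\frac{\mathrm{d}}{\mathrm{d}y}\bigl(\varphi(y)H_{a-1}(y)\bigr)=-\varphi(y)H_a(y)$, which is the Rodrigues formula $H_a\varphi=(-1)^a\varphi^{(a)}$, together with decay at $-\infty$. This gives $\int_{-\infty}^x\varphi H_a=-\varphi(x)H_{a-1}(x)=K_a(x)$. For $a=0$ the same integral equals $\Phi(x)=K_0(x)$.

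Now switch to the unnormalized polynomials via $h_j=H_j/\sqrt{j!}$. The inner integrals become $K_{\beta_q}(x)/\sqrt{\beta_q!}$, and the outer factor is $H_a(x)/\sqrt{a!}$. Hence
\[
\hat f_1(\alpha)=\frac{B_{a,\lambda}}{\sqrt{a!\prod_{b\in\lambda}b!}},
\]
where $\lambda$ is the multiset $\{\beta_2,\dots,\beta_k\}$. Because the product $\prod_{b\in\lambda}K_b$ depends only on the multiset, squaring and grouping the ordered tuples $(\beta_2,\dots,\beta_k)$ by multiset is the next step. The number of orderings of a multiset of size $k-1$ with multiplicities $n_b$ is $(k-1)!/\prod_b n_b!=N(\lambda)$. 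The constraint $|\alpha|=m$ becomes $a\in\{0,\dots,m\}$ with $|\lambda|=m-a$. This yields the exact formula.

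For the certified bounds, every term in the formula has a nonnegative weight $kN(\lambda)/(a!\prod_b b!)$ and depends on $B_{a,\lambda}$ only through $B_{a,\lambda}^2$. Since $B_{a,\lambda}\in I_{a,\lambda}$, we have $\underline b_{a,\lambda}\le B_{a,\lambda}^2\le\overline b_{a,\lambda}$. Termwise monotonicity over the finite sum then gives $\underline c_{k,m}\le c_{k,m}\le\overline c_{k,m}$.

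These bounds are rational because the interval endpoints are rational. The minimum of $x^2$ over a rational interval is either $0$ (if the interval contains $0$) or the smaller squared endpoint. The maximum is the larger squared endpoint.

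I expect the main obstacle to be bookkeeping rather than analysis. The delicate points are the interchange of integrals, justified as above, and getting the multiplicity count $N(\lambda)$ and the factorial normalizations exactly right when passing from ordered tuples to multisets.
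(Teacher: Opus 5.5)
Your proposal is correct and follows essentially the same route as the paper. Both arguments reduce $c_{k,m}$ to $k\sum_{|\alpha|=m}\hat f_1(\alpha)^2$ by symmetry and condition on $Z_1=x$ using the antiderivative identity for $\varphi H_a$. Both then pass to unnormalized Hermite polynomials to get $\hat f_1(\alpha)=B_{a,\lambda}/\sqrt{a!\prod_{b\in\lambda}b!}$, group ordered tuples by multiset with multiplicity $N(\lambda)$, and bound the squares termwise using the nonnegative weights. Your explicit Fubini justification via $h_\alpha\in L^1(\gamma_k)$ is a small point the paper leaves implicit.
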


The integrals in
\eqref{eq:main-Balambda-def} have an explicit form.  If $r(\lambda)$ is
the number of positive elements of $\lambda$ and $s(\lambda)$ is the number of
zero elements, then
\begin{equation*}
B_{a,\lambda}
=
\int_{-\infty}^{\infty}
(-1)^{r(\lambda)}
\varphi(x)^{r(\lambda)+1}\Phi(x)^{s(\lambda)}S_{a,\lambda}(x)\,\mathrm{d}x,
\end{equation*}
where $S_{a,\lambda}$ is an explicit integer polynomial.  
The integrals are signed, and the coefficient formula involves their
squares, so simply truncating the tails yields neither a valid lower bound
nor a valid upper bound in general.
Instead, for each $B_{a,\lambda}$ we certify an interval
\[
I_{a,\lambda}
=
[L_{a,\lambda}-\varepsilon_{a,\lambda},
 U_{a,\lambda}+\varepsilon_{a,\lambda}],
\]
where $[L_{a,\lambda},U_{a,\lambda}]$ encloses the finite-window integral and
$\varepsilon_{a,\lambda}$ is an explicit bound on the two tails.  Proposition~\ref{prop:main-one-dimensional-certificate}
then turns these interval enclosures into rational lower and upper bounds $\underline c_{k,m}\le c_{k,m}\le \overline c_{k,m}$.

\subsection{Certified Bounds on Collision Probabilities}

Finally, we present lower and upper bounds on the collision probability $P_k(\rho)$ obtainable from finitely many certified lower and upper bounds on $c_{k,m}$. 

\begin{proposition}\label{prop:theory-Pk-LB-UB}
Fix $k\geq 2$ and $M\in\mathbb{N}_0$.
Suppose that for each $m=0,\dots,M$, we have certified lower and upper bounds $\underline c_{k,m}\leq c_{k,m}\leq \overline c_{k,m}$.
Define
\begin{equation*}
\underline P_{k,M}(\rho)
=
\begin{cases}
\displaystyle
\sum_{m=0}^{M}\underline c_{k,m}\rho^m,
& \rho\in[0,1],
\\[2mm]
\displaystyle
\sum_{\substack{0\leq m\leq M\\ m\ \mathrm{even}}}
\underline c_{k,m}\rho^m
+
\sum_{\substack{0\leq m\leq M\\ m\ \mathrm{odd}}}
\overline c_{k,m}\rho^m
-
|\rho|^{M+1}
\left(
1-\sum_{m=0}^{M}\underline c_{k,m}
\right),
& \rho\in[-1,0],
\end{cases}
\end{equation*}
and
\begin{equation*}
\overline P_{k,M}(\rho)
=
\begin{cases}
\displaystyle
\sum_{m=0}^{M}\overline c_{k,m}\rho^m
+
\rho^{M+1}
\left(
1-\sum_{m=0}^{M}\underline c_{k,m}
\right),
& \rho\in[0,1],
\\[2mm]
\displaystyle
\sum_{\substack{0\leq m\leq M\\ m\ \mathrm{even}}}
\overline c_{k,m}\rho^m
+
\sum_{\substack{0\leq m\leq M\\ m\ \mathrm{odd}}}
\underline c_{k,m}\rho^m
+
|\rho|^{M+1}
\left(
1-\sum_{m=0}^{M}\underline c_{k,m}
\right),
& \rho\in[-1,0].
\end{cases}
\end{equation*}
Then, it holds that 
\begin{equation*}
\underline P_{k,M}(\rho)
\leq
P_k(\rho)
\leq
\overline P_{k,M}(\rho)
\quad
\text{for all $\rho\in[-1,1]$}.
\end{equation*}
\end{proposition}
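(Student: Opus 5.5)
The plan is to start from the power series \eqref{eq:Pk-hermite-series}, $P_k(\rho)=\sum_{m\ge0}c_{k,m}\rho^m$. It converges absolutely on $[-1,1]$ because $c_{k,m}\ge0$ and $\sum_m c_{k,m}=1$ by \eqref{eq:Pk-one}. Split $P_k(\rho)=\sum_{m\le M}c_{k,m}\rho^m+T_M(\rho)$ with tail $T_M(\rho)=\sum_{m>M}c_{k,m}\rho^m$. The only two facts needed about the tail are
\begin{equation*}
0\le \sum_{m>M}c_{k,m}=1-\sum_{m=0}^{M}c_{k,m}\le 1-\sum_{m=0}^{M}\underline c_{k,m}=:\tau_M,
\end{equation*}
and $|\rho^m|\le|\rho|^{M+1}$ for $m>M$, $|\rho|\le1$. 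Together they give $|T_M(\rho)|\le|\rho|^{M+1}\tau_M$ for all $\rho\in[-1,1]$, and $T_M(\rho)\in[0,\rho^{M+1}\tau_M]$ when $\rho\ge0$.

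For $\rho\in[0,1]$, every $\rho^m\ge0$. Replacing $c_{k,m}$ by $\underline c_{k,m}$ termwise lowers the finite sum, and $T_M\ge0$ can be dropped, which gives $\underline P_{k,M}\le P_k$. Replacing $c_{k,m}$ by $\overline c_{k,m}$ raises the finite sum, and $T_M\le\rho^{M+1}\tau_M$, which gives $P_k\le\overline P_{k,M}$.

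For $\rho\in[-1,0]$, we have $\rho^m\ge0$ for even $m$ and $\rho^m\le0$ for odd $m$. So $c_{k,m}\rho^m\ge\underline c_{k,m}\rho^m$ for even $m$, and $c_{k,m}\rho^m\ge\overline c_{k,m}\rho^m$ for odd $m$, with the reverse inequalities for the upper bound. Combining this with $T_M(\rho)\ge-|\rho|^{M+1}\tau_M$, respectively $T_M(\rho)\le|\rho|^{M+1}\tau_M$, yields both inequalities.

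The one point needing care is that the tail mass is bounded using the lower bounds $\underline c_{k,m}$, since $1-\sum_{m\le M}c_{k,m}\le\tau_M$ requires $\underline c_{k,m}\le c_{k,m}$. Convergence of the series at $\rho=\pm1$ also needs justification, and absolute summability supplies it. Beyond that, the argument is routine bookkeeping of signs, and I expect no real obstacle.
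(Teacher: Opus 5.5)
Your proposal is correct and follows essentially the same route as the paper: split off the tail $\sum_{m>M}c_{k,m}\rho^m$ and bound it by $|\rho|^{M+1}\bigl(1-\sum_{m\le M}\underline c_{k,m}\bigr)$ using $c_{k,m}\ge0$, $\sum_m c_{k,m}=1$, and $\underline c_{k,m}\le c_{k,m}$, then handle the finite part by sign bookkeeping (all terms nonnegative on $[0,1]$, alternating by parity on $[-1,0]$). Your remark that absolute summability justifies convergence at $\rho=\pm1$ is a small point the paper leaves implicit.
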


\begin{proof}
First, let $\rho\in[0,1]$.
Using the nonnegativity of $c_{k,m}$ and $\underline c_{k,m}\leq c_{k,m}$, we have
\begin{equation*}
P_k(\rho)
\geq
\sum_{m=0}^{M}c_{k,m}\rho^m
\geq
\sum_{m=0}^{M}\underline c_{k,m}\rho^m
=
\underline P_{k,M}(\rho).
\end{equation*}
Moreover, using $\rho\leq 1$, the nonnegativity of $c_{k,m}$, and $\sum_{m=0}^{\infty}c_{k,m}=1$, we have
\begin{equation*}
\sum_{m=M+1}^{\infty}c_{k,m}\rho^m
\leq
\rho^{M+1}\sum_{m=M+1}^{\infty}c_{k,m}
=
\rho^{M+1}
\left(
1-\sum_{m=0}^{M}c_{k,m}
\right)
\leq
\rho^{M+1}
\left(
1-\sum_{m=0}^{M}\underline c_{k,m}
\right).
\end{equation*}
Together with $c_{k,m}\leq\overline c_{k,m}$, this gives $P_k(\rho) \leq \overline P_{k,M}(\rho)$ for $\rho\in [0,1]$.

Next, let $\rho\in[-1,0]$.
For every even $m\leq M$, we have
$\underline c_{k,m}\rho^m \leq c_{k,m}\rho^m \leq \overline c_{k,m}\rho^m$.
For every odd $m\leq M$, since $\rho\leq 0$, we have
$\overline c_{k,m}\rho^m \leq c_{k,m}\rho^m \leq \underline c_{k,m}\rho^m$.
Moreover, using $|\rho|\leq 1$, the nonnegativity of $c_{k,m}$, and $\sum_{m=0}^{\infty}c_{k,m}=1$, we have
\begin{equation*}
\left|
\sum_{m=M+1}^{\infty}c_{k,m}\rho^m
\right|
\leq
|\rho|^{M+1}\sum_{m=M+1}^{\infty}c_{k,m}
=
|\rho|^{M+1}
\left(
1-\sum_{m=0}^{M}c_{k,m}
\right)
\leq
|\rho|^{M+1}
\left(
1-\sum_{m=0}^{M}\underline c_{k,m}
\right).
\end{equation*}
Hence, $\underline P_{k,M}(\rho) \leq P_k(\rho) \leq \overline P_{k,M}(\rho)$ for $\rho \in [-1,0]$. 
\end{proof}

Note that Proposition~\ref{prop:theory-Pk-LB-UB} applies directly to any mixture of the $k$-spokes roundings: 
\begin{corollary}\label{cor:theory-P-LB-UB}
Fix $M\in \mathbb{N}_0$.
Consider the mixture of the $k$-spokes roundings as in Corollary~\ref{cor:negative-ratio-monotonicity}. 
Suppose that for each $k\in \mathcal{K}$ and each $m=0,\dots, M$, we have certified lower and upper bounds on $\underline{c}_{k,m}\leq c_{k,m}\leq \overline{c}_{k,m}$.
Recall $\underline P_{k,M}(\rho)$ and $\underline P_{k,M}(\rho)$ defined in Proposition~\ref{prop:theory-Pk-LB-UB}. 
Define
\begin{equation*}
\underline P_M(\rho) =\sum_{k\in\mathcal K}\lambda_k\underline P_{k,M}(\rho)\quad
\text{and}\quad \overline P_M(\rho) =\sum_{k\in\mathcal K}\lambda_k\overline P_{k,M}(\rho).
\end{equation*}
Then
\begin{equation*}
\underline P_M(\rho)\le P(\rho)\le \overline P_M(\rho) \quad\text{for all } \rho\in[-1,1].
\end{equation*}
\end{corollary}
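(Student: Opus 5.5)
The plan is to reduce Corollary~\ref{cor:theory-P-LB-UB} to Proposition~\ref{prop:theory-Pk-LB-UB}, using linearity of the collision probability in the mixture weights.

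First, I will check that the mixture's collision probability decomposes as
\[
P(\rho)=\sum_{k\in\mathcal K}\lambda_k P_k(\rho).
\]
The mixture first samples $k\in\mathcal K$ with probability $\lambda_k$ and then runs the $k$-spokes rounding. Conditioning on the sampled $k$, the law of total probability gives the identity above. Here $P_k$ is the function from \eqref{eq:Pk-def}, which depends only on $\rho=\bm v_i\cdot\bm v_j$. For $k=1$ we set $P_1\equiv 1$, as noted after Algorithm~\ref{alg:k-spokes}.

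Second, I will apply Proposition~\ref{prop:theory-Pk-LB-UB} to each $k\in\mathcal K$ with $k\ge2$, using the given certified coefficient bounds up to degree $M$. This yields
\[
\underline P_{k,M}(\rho)\le P_k(\rho)\le \overline P_{k,M}(\rho)\quad\text{for all }\rho\in[-1,1].
\]
The case $k=1$ needs separate care, since the proposition is stated only for $k\ge2$. For $k=1$ we have $c_{1,0}=1$ and $c_{1,m}=0$ for $m\ge1$. The proof of Proposition~\ref{prop:theory-Pk-LB-UB} uses only nonnegativity of the coefficients and $\sum_m c_{k,m}=1$, and both hold here. So the same argument applies verbatim and gives valid bounds for $P_1\equiv1$.

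Third, I will multiply each two-sided inequality by $\lambda_k\ge0$, which preserves the inequalities, and sum over $k\in\mathcal K$. This yields
\[
\underline P_M(\rho)\le P(\rho)\le\overline P_M(\rho)
\]
pointwise on $[-1,1]$. The corollary statement writes $\underline P_{k,M}$ twice; the second occurrence is a typo for $\overline P_{k,M}$.

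There is no real obstacle in this argument. The only points requiring attention are the justification of linearity of $P$ in $\bm\lambda$ and the handling of the degenerate $k=1$ component.
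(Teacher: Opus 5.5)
Your proposal is correct and matches the paper's (implicit) argument: the paper simply notes that Proposition~\ref{prop:theory-Pk-LB-UB} ``applies directly'' to mixtures, i.e., write $P(\rho)=\sum_{k\in\mathcal K}\lambda_k P_k(\rho)$, apply the per-$k$ bounds, and sum with the nonnegative weights $\lambda_k$. Your separate treatment of $k=1$ (where $c_{1,0}=1$ and $c_{1,m}=0$ for $m\ge1$, so the proposition's proof goes through verbatim) and your reading of the duplicated $\underline P_{k,M}$ as a typo for $\overline P_{k,M}$ are both correct, and they cover a detail the paper leaves unstated.
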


\section{Derivation of the Certified Bounds on the Power-Series Coefficients}\label{sec:details}

This section fully details the derivation of the certified bounds on the coefficients $c_{k,m}$.  
We first derive the coefficient formula by reducing the
Hermite coefficients over argmax regions to one-dimensional Gaussian integrals.
We then explain the normalization and symmetry grouping used to make the
certificate finite and efficient, prove the signed-tail bounds needed to enclose
the integrals rigorously, and finally describe the ball-arithmetic implementation.

\subsection{Reduction to One-Dimensional Gaussian Integrals}
Throughout, let
\[
Z=(Z_1,\ldots,Z_k)\sim\mathcal N(\bm 0_k,I_k),
\]
and recall from Section~\ref{subsec:hermite} the sets
$A_p=\{x\in\mathbb R^k:x_p=\max_{q\in[k]}x_q\}$ and the indicators
$f_p=\1_{A_p}$, so that by \eqref{eq:hermite-level-weight} and \eqref{eq:ckm-def},
\begin{equation*}
c_{k,m}
=
\sum_{p=1}^k
\sum_{|\alpha|=m}
\hat f_p(\alpha)^2 .
\end{equation*}
Since the functions $f_1,\ldots,f_k$ are images of one another under
permutations of the coordinates, all summands over $p$ contribute equally,
and
\begin{equation}\label{eq:ckm-symmetry}
c_{k,m}
=
k\sum_{|\alpha|=m}\hat f_1(\alpha)^2 .
\end{equation}

It remains to prove the promised one-dimensional formula for the coefficients.  The following lemma is the formal derivation used in Proposition~\ref{prop:main-one-dimensional-certificate}; the intuition is summarized in Section~\ref{subsec:theory-one-dimensional}.

\begin{lemma}\label{lem:one-dimensional-hermite-coefficients}
Let $\alpha=(\alpha_1,\ldots,\alpha_k)\in\mathbb{N}_0^k$.  Define
\[
J_a(x)=\int_{-\infty}^x \varphi(y)h_a(y)\,\mathrm{d}y, 
\quad a\in\mathbb{N}_0.
\]
Then
\begin{equation}\label{eq:f1-hat-one-dimensional}
\hat f_1(\alpha)
=
\int_{-\infty}^{\infty}
\varphi(x)h_{\alpha_1}(x)
\prod_{q=2}^k J_{\alpha_q}(x)\,\mathrm{d}x.
\end{equation}
Moreover,
\begin{equation}\label{eq:Ja-formula}
J_0(x)=\Phi(x)
\quad \text{and}\quad 
J_a(x)=
-\frac{\varphi(x)h_{a-1}(x)}{\sqrt a}
\quad\text{for }\, a\ge1.
\end{equation}
\end{lemma}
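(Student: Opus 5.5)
The plan is to write the Hermite coefficient as a $k$-dimensional integral, factor it by conditioning on the winning coordinate, and then evaluate the resulting one-dimensional integrals using the standard Hermite derivative identity.

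\emph{Step 1: the product formula \eqref{eq:f1-hat-one-dimensional}.} By definition,
\[
\hat f_1(\alpha)=\E\bigl[\1_{A_1}(Z)h_\alpha(Z)\bigr]=\int_{\mathbb R^k}\prod_{r=1}^k\varphi(z_r)\,\1[z_q\le z_1\ \forall q\ge2]\prod_{r=1}^k h_{\alpha_r}(z_r)\,\mathrm{d}z .
\]
The indicator factors as $\prod_{q=2}^k\1[z_q\le z_1]$, so for fixed $z_1=x$ the integrand in $(z_2,\dots,z_k)$ is a product of functions of single variables. Applying Fubini--Tonelli, I would integrate $z_2,\dots,z_k$ first. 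Each inner integral is then $\int_{-\infty}^x\varphi(y)h_{\alpha_q}(y)\,\mathrm{d}y=J_{\alpha_q}(x)$, and what remains is the outer integral in $x$ against $\varphi(x)h_{\alpha_1}(x)$. Fubini is justified because $|h_\alpha|\prod\varphi$ is integrable on $\mathbb R^k$: polynomials have all Gaussian moments, and the indicator is bounded.

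\emph{Step 2: $J_0$.} The case $a=0$ is immediate, since $h_0=1$ gives $J_0=\Phi$.

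\emph{Step 3: $J_a$ for $a\ge1$.} I would use the Rodrigues-type identity
\[
\frac{\mathrm{d}}{\mathrm{d}y}\bigl(\varphi(y)H_{a-1}(y)\bigr)=-\varphi(y)H_a(y),
\]
which follows from the recurrence $H_a(y)=yH_{a-1}(y)-H_{a-1}'(y)$ together with $\varphi'(y)=-y\varphi(y)$. If that recurrence is not taken as known, I would derive it from the generating function \eqref{eq:hermite-generating-function}. Differentiating $\exp(ty-t^2/2)$ in $y$ gives $H_a'=aH_{a-1}$, and differentiating in $t$ gives $H_{a+1}=yH_a-aH_{a-1}$; combining the two yields the recurrence.

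Integrating the identity from $-\infty$ to $x$ gives
\[
\int_{-\infty}^x\varphi H_a=-\varphi(x)H_{a-1}(x),
\]
because $\varphi(y)H_{a-1}(y)\to0$ as $y\to-\infty$ (Gaussian decay beats polynomial growth). To normalize, divide by $\sqrt{a!}$ and use $H_{a-1}=\sqrt{(a-1)!}\,h_{a-1}$, which gives $-\varphi(x)h_{a-1}(x)\sqrt{(a-1)!/a!}=-\varphi(x)h_{a-1}(x)/\sqrt a$, as claimed in \eqref{eq:Ja-formula}.

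\emph{Main obstacle.} None of the steps is deep. The only point needing care is verifying the Hermite derivative identity and its normalization constants from the generating-function definition, which I have sketched above; the Fubini justification in Step 1 is routine by comparison.
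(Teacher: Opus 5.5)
Your proposal is correct and follows essentially the same route as the paper: condition on $Z_1=x$ so the indicator factors, integrate the remaining coordinates to get the $J_{\alpha_q}(x)$, and evaluate $J_a$ by integrating $\frac{\mathrm{d}}{\mathrm{d}y}\bigl(\varphi H_{a-1}\bigr)=-\varphi H_a$ and dividing by $\sqrt{a!}$. The only differences are cosmetic. You derive the derivative identity from the generating function, where the paper cites O'Donnell's Eq.~(11.10), and you spell out the Fubini justification.
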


\begin{proof}
By definition \eqref{eq:hermite-expansion},
\[
\hat f_1(\alpha)
=
\E\left[
\1[Z_1\ge Z_2\land \cdots \land Z_1\ge Z_k]
\prod_{q=1}^k h_{\alpha_q}(Z_q)
\right],
\]
where ties have probability zero.  Conditioning on $Z_1=x$ and using the
independence of $Z_2,\ldots,Z_k$ gives
\[
\hat f_1(\alpha)
=
\int_{-\infty}^{\infty}
\varphi(x)h_{\alpha_1}(x)
\prod_{q=2}^k
\left(
\int_{-\infty}^x \varphi(y)h_{\alpha_q}(y)\,\mathrm{d}y
\right)
\mathrm{d}x.
\]
This proves \eqref{eq:f1-hat-one-dimensional}.

It remains to prove \eqref{eq:Ja-formula}.  The case $a=0$ is just
$J_0(x)=\Phi(x)$.  For $a\ge1$, recall that
$h_a=\frac{H_a}{\sqrt{a!}}$, where $H_a$ is the probabilists' Hermite polynomial
from \eqref{eq:hermite-generating-function}.  We use the identity
\[
\frac{\mathrm{d}}{\mathrm{d}x}\left(\varphi(x)H_{a-1}(x)\right)
=
-\varphi(x)H_a(x), 
\]
which follows directly from~\cite[Equation~(11.10)]{o2014analysis}.
Since $\varphi(x)H_{a-1}(x)\to0$ as $x\to-\infty$, we get
\[
\int_{-\infty}^x \varphi(y)H_a(y)\,\mathrm{d}y
=
-\varphi(x)H_{a-1}(x).
\]
Dividing by $\sqrt{a!}$, we obtain
\[
J_a(x)
=
-\frac{\varphi(x)H_{a-1}(x)}{\sqrt{a!}}
=
-\frac{\varphi(x)h_{a-1}(x)}{\sqrt a}.
\]
This proves the lemma.
\end{proof}

Combining Lemma~\ref{lem:one-dimensional-hermite-coefficients} with
\eqref{eq:ckm-symmetry} gives an exact coefficient formula:
\begin{equation*}
c_{k,m}
=
k\sum_{|\alpha|=m}
\left(
\int_{-\infty}^{\infty}
\varphi(x)h_{\alpha_1}(x)
\prod_{q=2}^k J_{\alpha_q}(x)\,\mathrm{d}x
\right)^2.
\end{equation*}
This is already a reduction to one-dimensional integrals, but for a
computer-assisted proof it is worth normalizing the formula further, with
two goals in mind: 
eliminating irrational constants
from the bookkeeping so that the resulting coefficient bounds can be
assembled in exact rational arithmetic, and exploiting symmetry to reduce
the number of distinct integrals that must be evaluated and certified.

\subsection{Normalization and Grouping by Symmetry}\label{subsec:details-normalization-symmetry}
\paragraph{Removing the square-root normalizations.}
The normalized Hermite polynomials $h_a=\frac{H_a}{\sqrt{a!}}$ carry square roots
that we prefer to keep out of the integrals; if the integrands involve
only the {integer-coefficient} polynomials $H_a$, then all
normalization constants can be collected into a single rational factor per
term, as shown below.  Define
\[
K_0(x)=\Phi(x)
\quad \text{and}\quad 
K_a(x)=-\varphi(x)H_{a-1}(x)
\quad \text{for } a\ge1, 
\]
and for $\alpha=(\alpha_1,\ldots,\alpha_k)$, define
\begin{equation*}
B_\alpha
=
\int_{-\infty}^{\infty}
\varphi(x)H_{\alpha_1}(x)
\prod_{q=2}^k K_{\alpha_q}(x)\,\mathrm{d}x.
\end{equation*}

\begin{lemma}\label{lem:unnormalized-coefficient-formula}
For every $k\ge2$ and $m\in\mathbb{N}_0$,
\begin{equation}\label{eq:ckm-unnormalized-formula}
c_{k,m}
=
k\sum_{|\alpha|=m}
\frac{B_\alpha^2}{\alpha_1!\alpha_2!\cdots\alpha_k!}.
\end{equation}
\end{lemma}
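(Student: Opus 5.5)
The plan is to start from the exact formula obtained by combining Lemma~\ref{lem:one-dimensional-hermite-coefficients} with \eqref{eq:ckm-symmetry}:
\[
c_{k,m}=k\sum_{|\alpha|=m}\hat f_1(\alpha)^2,\qquad
\hat f_1(\alpha)=\int_{-\infty}^{\infty}\varphi(x)h_{\alpha_1}(x)\prod_{q=2}^k J_{\alpha_q}(x)\,\mathrm{d}x .
\]
We then show that, for each multi-index $\alpha$, $\hat f_1(\alpha)=B_\alpha/\sqrt{\alpha_1!\alpha_2!\cdots\alpha_k!}$. Squaring and summing over $|\alpha|=m$ then gives \eqref{eq:ckm-unnormalized-formula} directly.

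The key step is to rewrite each factor in terms of unnormalized Hermite polynomials. For the leading factor, $h_{\alpha_1}=H_{\alpha_1}/\sqrt{\alpha_1!}$ by definition. For $q\ge2$, we treat two cases:
\begin{itemize}
\item If $\alpha_q=0$, then \eqref{eq:Ja-formula} gives $J_0=\Phi=K_0$, and $\sqrt{0!}=1$.
\item If $\alpha_q=a\ge1$, then the proof of Lemma~\ref{lem:one-dimensional-hermite-coefficients} shows $J_a(x)=-\varphi(x)H_{a-1}(x)/\sqrt{a!}=K_a(x)/\sqrt{a!}$.
\end{itemize}
So in every case $J_{\alpha_q}=K_{\alpha_q}/\sqrt{\alpha_q!}$. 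Note the normalization is $\sqrt{a!}$, not $\sqrt{(a-1)!}$. This is exactly why the denominator in the lemma is $\prod_q\alpha_q!$ and not something shifted.

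Substituting these identities into the integrand and pulling the constant $\prod_{q=1}^k(\alpha_q!)^{-1/2}$ outside the integral yields $\hat f_1(\alpha)=B_\alpha/\sqrt{\prod_q \alpha_q!}$. Squaring removes all square roots.

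The only technical point is integrability, which is needed so that $B_\alpha$ is well defined and the constant can be factored out. Each $K_b$ is bounded by a polynomial times $\varphi$ (or by $1$ when $b=0$), and $\varphi H_{\alpha_1}$ is integrable, so the integral converges absolutely. This is mild bookkeeping rather than a real obstacle; the main thing to watch is the index shift $a\mapsto a-1$ in $K_a$.
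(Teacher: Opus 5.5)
Your proposal is correct and follows the paper's proof essentially verbatim: rewrite $J_{\alpha_q}=K_{\alpha_q}/\sqrt{\alpha_q!}$ (including the case $\alpha_q=0$) and $h_{\alpha_1}=H_{\alpha_1}/\sqrt{\alpha_1!}$, deduce $\hat f_1(\alpha)=B_\alpha/\sqrt{\alpha_1!\cdots\alpha_k!}$, and substitute into \eqref{eq:ckm-symmetry}. Your integrability remark is a harmless extra; the paper leaves it implicit.
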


\begin{proof}
For $a=0$, we have $J_0=K_0$.  For $a\ge1$, equation
\eqref{eq:Ja-formula} gives
\[
J_a(x)
=
-\frac{\varphi(x)H_{a-1}(x)}{\sqrt{a!}}
=
\frac{K_a(x)}{\sqrt{a!}}.
\]
Also $h_{\alpha_1}=\frac{H_{\alpha_1}}{\sqrt{\alpha_1!}}$. Therefore, by \eqref{eq:f1-hat-one-dimensional}, 
\[
\hat f_1(\alpha)
=
\frac{B_\alpha}
{\sqrt{\alpha_1!\alpha_2!\cdots\alpha_k!}}.
\]
Substituting this into \eqref{eq:ckm-symmetry} proves the desired equality. 
\end{proof}

The advantage of this normalization becomes apparent after squaring:
the square-root normalization in $\widehat f_1(\alpha)$ becomes the integer
factor $\alpha_1!\cdots\alpha_k!$ in the denominator. Hence, a rational enclosure of
$B_\alpha$ yields rational lower and upper bounds on the corresponding term.

\paragraph{Grouping by symmetry.}
The integrand of $B_\alpha$ depends on $(\alpha_2,\ldots,\alpha_k)$ only
through the \emph{multiset} of these indices, because the product
$\prod_{q=2}^k K_{\alpha_q}(x)$ is invariant under permutations.
We can therefore evaluate one integral per multiset and multiply by the
number of orderings.  Let $\lambda$ be a multiset of $k-1$ nonnegative
integers.  We write $|\lambda|$ for the sum of its elements, and
$n_b(\lambda)$ for the multiplicity of $b$ in $\lambda$.  Define
\[
N(\lambda)
=
\frac{(k-1)!}{\prod_{b\ge0}n_b(\lambda)!},
\]
the number of distinct orderings of $\lambda$, and, for
$a\in\mathbb{N}_0$,
\begin{equation*}
B_{a,\lambda}
=
\int_{-\infty}^{\infty}
\varphi(x)H_a(x)\prod_{b\in \lambda}K_b(x)\,\mathrm{d}x.
\end{equation*}
We now prove the first part of Proposition~\ref{prop:main-one-dimensional-certificate} in Section~\ref{subsec:theory-one-dimensional}, 
stating that for every $k\ge2$ and $m\in\mathbb{N}_0$, 
\begin{equation}\label{eq:ckm-grouped-formula}
c_{k,m}
=
k
\sum_{a=0}^m
\sum_{\substack{\lambda\colon |\lambda|=m-a\\
\lambda\text{ is a multiset of size }k-1}}
N(\lambda)\,
\frac{B_{a,\lambda}^2}
{a!\prod_{b\in \lambda} b!}.
\end{equation}

\begin{proof}[Proof of the first part of Proposition~\ref{prop:main-one-dimensional-certificate}]
In \eqref{eq:ckm-unnormalized-formula}, fix the first coordinate
$\alpha_1=a$.  The remaining coordinates
$(\alpha_2,\ldots,\alpha_k)$ contribute to $B_\alpha$ only through their
multiset $\lambda$, because the product
$\prod_{q=2}^k K_{\alpha_q}(x)$ is invariant under permutation of these
coordinates.  For a fixed multiset $\lambda$, the number of corresponding
ordered tuples is $N(\lambda)$, and each such tuple has the same
denominator $a!\prod_{b\in \lambda}b!$.  Summing over all $a$ and all such
multisets proves the formula.
\end{proof}

The savings here are substantial: for $2\le k\le 32$ and $1\le m\le 24$, 
formula \eqref{eq:ckm-unnormalized-formula} would require about
$9.93\times 10^{15}$ integrals (one per multi-index $\alpha$), 
whereas \eqref{eq:ckm-grouped-formula} requires only
$8.10\times 10^{5}$ integrals (one per pair $(a,\lambda)$).
For $2\le k\le 16$ and $1\le m\le 32$, the corresponding numbers are about
$3.35\times 10^{12}$ and $1.66\times 10^{6}$, respectively. 
These are precisely the two parameter ranges for the reusable coefficient data described in Section~\ref{subsec:coefficient-implementation}. 

In summary: unwinding the definition of
$K_b$, the integrand of $B_{a,\lambda}$ is
\begin{equation}\label{eq:Ba-lambda-integrand}
(-1)^{r(\lambda)}\,
\varphi(x)^{r(\lambda)+1}\,
\Phi(x)^{s(\lambda)}\,
S_{a,\lambda}(x)
\quad \text{with}\quad 
S_{a,\lambda}(x)
=
H_a(x)\prod_{b\in\lambda\colon b\ge1}H_{b-1}(x),
\end{equation}
where
\[
r(\lambda)=|\{b\in\lambda:b\ge1\}|,
\quad
s(\lambda)=|\{b\in\lambda:b=0\}|,
\quad \text{and}\quad
r(\lambda)+s(\lambda)=k-1,
\]
and $S_{a,\lambda}$ is a polynomial with integer coefficients.  Every
integral we need is thus of one fixed, simple shape: a power of the
Gaussian density, times a power of the Gaussian CDF, times an explicit
integer polynomial.

\subsection{Rigorous Enclosures: Finite Window and Signed Tails}

We now explain how the integrals $B_{a,\lambda}$ are certified. One difficulty is that
the integrands are \emph{signed}. If the integrands were nonnegative, truncating the
integral to a bounded interval would immediately give a lower bound.
For a signed integral, however, the omitted tails may either reinforce or cancel the
contribution from the bounded interval, so truncation alone yields neither a valid lower
bound nor a valid upper bound on its square. Since
\eqref{eq:ckm-grouped-formula} involves squares, we therefore first \emph{enclose}
each $B_{a,\lambda}$ in an interval and only then derive lower and upper bounds on
its square. For reference, we record this second step in the following elementary lemma.

\begin{lemma}\label{lem:square-lower-bound}
For an interval $[L,U]$, define
\begin{equation*}
\operatorname{sqLB}([L,U])
=
\begin{cases}
0, & 0\in[L,U],\\
\min\{L^2,U^2\}, & 0\notin[L,U],
\end{cases}
\quad 
\text{and}\quad 
\operatorname{sqUB}([L,U])
=
\max\{L^2,U^2\}.
\end{equation*}
If $x\in[L,U]$, then
\[
\operatorname{sqLB}([L,U])
\le x^2
\le
\operatorname{sqUB}([L,U]).
\]
\end{lemma}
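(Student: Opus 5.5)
The plan is a direct case analysis on the position of $0$ relative to $[L,U]$. The only fact needed is that $t\mapsto t^2$ is decreasing on $(-\infty,0]$ and increasing on $[0,\infty)$. Fix $x\in[L,U]$, assuming $L\le U$.

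For the upper bound $x^2\le\max\{L^2,U^2\}$, use $|x|\le\max\{|L|,|U|\}$, which holds for every $x\in[L,U]$.
\begin{itemize}
\item If $x\ge 0$, then $x\le U$ gives $|x|=x\le U\le |U|$.
\item If $x<0$, then $x\ge L$ gives $|x|=-x\le -L\le |L|$.
\end{itemize}
Squaring the nonnegative quantities yields $x^2\le\max\{L^2,U^2\}=\operatorname{sqUB}([L,U])$.

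For the lower bound, distinguish two cases.
\begin{itemize}
\item If $0\in[L,U]$, then $\operatorname{sqLB}([L,U])=0\le x^2$ trivially.
\item If $0\notin[L,U]$, the interval lies on one side of $0$.
  \begin{itemize}
  \item If $0<L\le x$, monotonicity of squaring on $[0,\infty)$ gives $x^2\ge L^2\ge\min\{L^2,U^2\}$.
  \item If $x\le U<0$, then $|x|\ge|U|$, so $x^2\ge U^2\ge\min\{L^2,U^2\}$.
  \end{itemize}
\end{itemize}

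There is no real obstacle here; the only care needed is to handle intervals straddling $0$ separately. Also, the same reasoning shows these are exactly $\min_{x\in[L,U]}x^2$ and $\max_{x\in[L,U]}x^2$ (the extremes are attained at $0$ or at an endpoint), matching the quantities $\underline b_{a,\lambda}$ and $\overline b_{a,\lambda}$ in Proposition~\ref{prop:main-one-dimensional-certificate}. The bounds are rational whenever $L$ and $U$ are.
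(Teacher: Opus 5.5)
Your case analysis is correct and complete: the upper bound via $|x|\le\max\{|L|,|U|\}$ and the one-sided lower bound when $0\notin[L,U]$ are exactly what is needed, and your closing remark that these quantities equal $\min_{x\in[L,U]}x^2$ and $\max_{x\in[L,U]}x^2$ correctly matches the definitions of $\underline b_{a,\lambda}$ and $\overline b_{a,\lambda}$. The paper records this lemma as elementary and gives no proof, so your argument is simply the natural one that fills it in.
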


To enclose $B_{a,\lambda}$ itself, fix a window radius $R>0$ and split
\begin{equation}\label{eq:B-split}
B_{a,\lambda}
=
B_{a,\lambda}^{(R)}
+
T_{a,\lambda}^{(R)}, 
\quad \text{where}\quad 
B_{a,\lambda}^{(R)}
=
\int_{-R}^R
\varphi(x)H_a(x)\prod_{b\in\lambda}K_b(x)\,\mathrm{d}x 
\end{equation}
and $T_{a,\lambda}^{(R)}$ is the contribution of
$(-\infty,-R]\cup[R,\infty)$.  The two parts are treated by different
means:
\begin{itemize}
\item the finite-window integral $B_{a,\lambda}^{(R)}$ is enclosed in an
interval $[L_{a,\lambda},U_{a,\lambda}]$ by validated numerical
integration in ball arithmetic, as described in Section~\ref{subsec:coefficient-implementation};
\item the tail is bounded in absolute value,
$\left|T_{a,\lambda}^{(R)}\right|\le \varepsilon_{a,\lambda}$, by the explicit
analytic estimate of Lemma~\ref{lem:coefficient-tail-bound} below.
\end{itemize}
Together these give
$B_{a,\lambda}\in
[L_{a,\lambda}-\varepsilon_{a,\lambda},\,
U_{a,\lambda}+\varepsilon_{a,\lambda}]$,
and Lemma~\ref{lem:square-lower-bound} then gives lower and upper bounds on $B_{a,\lambda}^2$.

The tail estimate exploits the specific shape
\eqref{eq:Ba-lambda-integrand} of the integrand: the factor
$\Phi(x)^{s(\lambda)}$ is bounded by $1$; the factor
$\varphi(x)^{r(\lambda)+1}$ is a Gaussian kernel with an explicitly
enhanced decay rate $\mu=r(\lambda)+1$; and the polynomial factor
$S_{a,\lambda}$ has explicit integer coefficients, so its absolute value
is bounded coefficient-by-coefficient.  What remains are incomplete
Gaussian moments, which can be bounded rigorously using the two-term recurrence below. 
For $\mu>0$ and $R>0$, define
\[
G_j(\mu,R)
=
\int_R^\infty x^j\exp\left(-\frac{\mu x^2}{2}\right)\mathrm{d}x .
\]

\begin{lemma}\label{lem:coefficient-tail-bound}
Let $S_{a,\lambda}(x)=\sum_{j=0}^{D}s_jx^j$, where $D$ is the degree of $S_{a,\lambda}$.  For every $a,\lambda$ and
every $R>0$,
\begin{equation}\label{eq:tail-bound}
\left|T_{a,\lambda}^{(R)}\right|
\le
2(2\pi)^{-(r(\lambda)+1)/2}
\sum_{j=0}^{D}|s_j|\,G_j(r(\lambda)+1,R).
\end{equation}
Moreover,
\[
G_0(\mu,R)
\le
\frac{\exp(-\mu R^2/2)}{\mu R},
\quad
G_1(\mu,R)
=
\frac{\exp(-\mu R^2/2)}{\mu},
\]
and 
\begin{equation}\label{eq:Gj-recurrence}
G_j(\mu,R)
=
\frac{R^{j-1}\exp(-\mu R^2/2)}{\mu}
+
\frac{j-1}{\mu}G_{j-2}(\mu,R)
\quad \text{for }\, j\ge 2.
\end{equation}
\end{lemma}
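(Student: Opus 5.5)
The proof has three parts: bound the tail integrand pointwise, evaluate the moment integrals $G_j$ in closed form for $j=0,1$, and derive the recurrence by integration by parts.

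\emph{Pointwise bound and reduction to moments.} By \eqref{eq:B-split}, $T_{a,\lambda}^{(R)}$ is the integral of the integrand \eqref{eq:Ba-lambda-integrand} over $(-\infty,-R]\cup[R,\infty)$. We bound this integrand pointwise. Since $0\le\Phi(x)\le 1$, we have $\Phi(x)^{s(\lambda)}\le1$. By definition, $\varphi(x)^{r(\lambda)+1}=(2\pi)^{-(r(\lambda)+1)/2}\exp(-\mu x^2/2)$ with $\mu=r(\lambda)+1$. For the polynomial factor, the triangle inequality gives $|S_{a,\lambda}(x)|\le\sum_j|s_j||x|^j$. Hence
\[
\bigl|T_{a,\lambda}^{(R)}\bigr|\le(2\pi)^{-\mu/2}\sum_{j=0}^{D}|s_j|\int_{|x|\ge R}|x|^j e^{-\mu x^2/2}\,\mathrm{d}x .
\]
The weight $|x|^j e^{-\mu x^2/2}$ is even, so each integral over $|x|\ge R$ equals $2G_j(\mu,R)$. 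This gives \eqref{eq:tail-bound}.

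\emph{Base cases $G_0$ and $G_1$.} For $G_1$, note that $x e^{-\mu x^2/2}$ is the exact derivative of $-e^{-\mu x^2/2}/\mu$, so $G_1(\mu,R)=e^{-\mu R^2/2}/\mu$. For $G_0$, we use the Mills-ratio trick: on $[R,\infty)$ we have $1\le x/R$, so $G_0\le G_1/R=e^{-\mu R^2/2}/(\mu R)$.

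\emph{Recurrence for $j\ge2$.} Write $x^j e^{-\mu x^2/2}=x^{j-1}\cdot x e^{-\mu x^2/2}$ and integrate by parts with $u=x^{j-1}$ and $\mathrm{d}v=x e^{-\mu x^2/2}\,\mathrm{d}x$, so $v=-e^{-\mu x^2/2}/\mu$. The boundary term at $\infty$ vanishes because of the Gaussian decay. The boundary term at $R$ is $R^{j-1}e^{-\mu R^2/2}/\mu$. The remaining integral is $\frac{j-1}{\mu}\int_R^\infty x^{j-2}e^{-\mu x^2/2}\,\mathrm{d}x=\frac{j-1}{\mu}G_{j-2}$, which is \eqref{eq:Gj-recurrence}.

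\emph{Main obstacle.} The computation is routine, so the only real care point is rigour. First, the integrability and vanishing-boundary justifications are immediate from Gaussian decay dominating any polynomial. Second, when the recurrence is used in ball arithmetic, it should be seeded with the upper bound for $G_0$ in place of its exact value. This is legitimate because the recurrence has nonnegative coefficients, so replacing $G_0$ by an upper bound propagates to valid upper bounds for all $G_{2i}$.
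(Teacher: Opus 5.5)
Your proposal is correct and follows essentially the same route as the paper: you bound the integrand pointwise using $0\le\Phi\le 1$ and the coefficientwise triangle inequality on $S_{a,\lambda}$, use evenness to get the factor $2$, compute $G_1$ exactly, bound $G_0$ via $x/R\ge 1$, and derive the recurrence by integration by parts. Your closing remark is also correct: since the recurrence has nonnegative coefficients, it remains valid as an upper bound when seeded with the upper bound on $G_0$, a point the paper leaves implicit when it describes evaluating the tail bound in Arb.
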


\begin{proof}
By \eqref{eq:Ba-lambda-integrand} and $0\le\Phi(x)\le1$,
\[
\left|
\varphi(x)H_a(x)\prod_{b\in\lambda}K_b(x)
\right|
\le
\varphi(x)^{r(\lambda)+1}|S_{a,\lambda}(x)|.
\]
Using
\[
\varphi(x)^{r(\lambda)+1}
=
(2\pi)^{-(r(\lambda)+1)/2}
\exp\left(-\frac{(r(\lambda)+1)x^2}{2}\right)
\]
and
$|S_{a,\lambda}(x)|\le\sum_{j=0}^{D}|s_j||x|^j$,
we get
\[
\begin{aligned}
\left|T_{a,\lambda}^{(R)}\right|
&\le
\int_{|x|\ge R}
\varphi(x)^{r(\lambda)+1}|S_{a,\lambda}(x)|\,\mathrm{d}x\\
&\le
(2\pi)^{-(r(\lambda)+1)/2}
\sum_{j=0}^{D}|s_j|
\int_{|x|\ge R}
|x|^j
\exp\left(-\frac{(r(\lambda)+1)x^2}{2}\right)\mathrm{d}x\\
&=
2(2\pi)^{-(r(\lambda)+1)/2}
\sum_{j=0}^{D}|s_j|\,G_j(r(\lambda)+1,R).
\end{aligned}
\]
This proves \eqref{eq:tail-bound}.

It remains to justify the formulas for $G_j$.  For $G_0$, since
$x/R\ge1$ for $x\ge R$, we have the standard estimate~\cite[page~98]{grimmet_prob}
\[
G_0(\mu,R)
=
\int_R^\infty e^{-\mu x^2/2}\,\mathrm{d}x
\le
\frac1R\int_R^\infty x e^{-\mu x^2/2}\,\mathrm{d}x
=
\frac{e^{-\mu R^2/2}}{\mu R}.
\]
For $G_1$,
\[
G_1(\mu,R)
=
\int_R^\infty x e^{-\mu x^2/2}\,\mathrm{d}x
=
\frac{e^{-\mu R^2/2}}{\mu}.
\]
For $j\ge2$, integration by parts gives
\[
\begin{aligned}
G_j(\mu,R)
&=
\int_R^\infty x^{j-1}\cdot x e^{-\mu x^2/2}\,\mathrm{d}x\\
&=
\left[-\frac{x^{j-1}}{\mu}e^{-\mu x^2/2}\right]_R^\infty
+
\frac{j-1}{\mu}
\int_R^\infty x^{j-2}e^{-\mu x^2/2}\,\mathrm{d}x\\
&=
\frac{R^{j-1}e^{-\mu R^2/2}}{\mu}
+
\frac{j-1}{\mu}G_{j-2}(\mu,R).
\end{aligned}
\]
This proves the recurrence.
\end{proof}

In particular, for each fixed pair $(a,\lambda)$, the bound in \eqref{eq:tail-bound}
decays exponentially in $R^2$, up to a polynomial factor in $R$.
Thus, the tail contribution can be made arbitrarily small by increasing
the window radius $R$; the choice of $R$ used in our implementation is
specified in Section~\ref{subsec:coefficient-implementation}.

\subsection{Assembling the Certificate}\label{subsec:details-assembling}

Combining the pieces yields certified rational lower and upper bounds on $c_{k,m}$. 
Fix a maximum degree $M\in \mathbb{N}_0$. 
Fix $k\ge 2$ and $m\in \{0,\dots, M\}$. 
For each $a\in\{0,\ldots,m\}$ and multiset $\lambda$
of size $k-1$ with $|\lambda|=m-a$, compute an interval
$[L_{a,\lambda},U_{a,\lambda}]\ni B_{a,\lambda}^{(R)}$ and a bound
$\varepsilon_{a,\lambda}\ge|T_{a,\lambda}^{(R)}|$ using
Lemma~\ref{lem:coefficient-tail-bound}. Then set
\[
I_{a,\lambda}
=
[L_{a,\lambda}-\varepsilon_{a,\lambda},\,
 U_{a,\lambda}+\varepsilon_{a,\lambda}]
\]
and
\[
\underline b_{a,\lambda}
=
\operatorname{sqLB}(I_{a,\lambda})\quad \text{and}\quad
\overline b_{a,\lambda}
=
\operatorname{sqUB}(I_{a,\lambda}).
\]
Then define $\underline c_{k,0}=\overline c_{k,0}=1/k$ and, for $1\le m\le M$,
\begin{equation}\label{eq:ckm-certified-lower-bound}
\underline c_{k,m}
=
k
\sum_{a=0}^m
\sum_{\substack{\lambda:\ |\lambda|=m-a\\
\lambda\text{ is a multiset of size }k-1}}
N(\lambda)\,
\frac{\underline b_{a,\lambda}}
{a!\prod_{b\in\lambda} b!}
\end{equation}
and 
\begin{equation}\label{eq:ckm-certified-upper-bound}
\overline c_{k,m}
=
k
\sum_{a=0}^m
\sum_{\substack{\lambda:\ |\lambda|=m-a\\
\lambda\text{ is a multiset of size }k-1}}
N(\lambda)\,
\frac{\overline b_{a,\lambda}}
{a!\prod_{b\in\lambda} b!}. 
\end{equation}

We now prove the second part of Proposition~\ref{prop:main-one-dimensional-certificate} in Section~\ref{subsec:theory-one-dimensional}, 
stating that if the interval $I_{a,\lambda}$ is rational, 
$\underline c_{k,m}$ and $\overline c_{k,m}$ are certified rational lower and upper bounds, respectively, on $c_{k,m}$.

\begin{proof}[Proof of the second part of Proposition~\ref{prop:main-one-dimensional-certificate}]
By \eqref{eq:B-split} and Lemma~\ref{lem:coefficient-tail-bound}, for every
pair $(a,\lambda)$ appearing in \eqref{eq:ckm-grouped-formula}, we have
$B_{a,\lambda} \in I_{a,\lambda}$. 
Hence, Lemma~\ref{lem:square-lower-bound} gives
$\underline b_{a,\lambda}
\le B_{a,\lambda}^2
\le \overline b_{a,\lambda}$.
Substituting these bounds into \eqref{eq:ckm-grouped-formula} term by term,
and noting that all weights
$kN(\lambda)/(a!\prod_{b\in\lambda} b!)$
are positive, yields
$\underline c_{k,m} \le c_{k,m} \le \overline c_{k,m}$.
Finally, if $L_{a,\lambda}$, $U_{a,\lambda}$, and
$\varepsilon_{a,\lambda}$ are rational, then the endpoints of
$I_{a,\lambda}$ are rational, and hence so are
$\underline b_{a,\lambda}$ and $\overline b_{a,\lambda}$.
It follows from their defining formulas that
$\underline c_{k,m}$ and $\overline c_{k,m}$ are rational as well.
\end{proof}

Although the above argument also applies when $m=0$, no numerical certification is needed in that case. 
By \eqref{eq:Pk-zero}, $c_{k,0}=1/k$ exactly. 
Hence, in our certificates, we set $\underline c_{k,0} = \overline c_{k,0} = 1/k$,
and apply the certification procedure above only for $m\ge 1$.

\subsection{Implementation in Ball Arithmetic with FLINT/Arb}\label{subsec:coefficient-implementation}

The implementation mirrors the certification procedure developed above and is
independent of any particular application.  Given a finite collection of
spoke parameters $k\ge 2$ and a maximum degree $M$, it produces certified
rational bounds $\underline{c}_{k,m} \le c_{k,m} \le \overline{c}_{k,m}$ for $m=0,\ldots,M$.
The resulting bounds are stored as reusable rational data and are subsequently
used by the application-specific computations in Sections~\ref{sec:approximations} and~\ref{sec:UB}, without
repeating the numerical integration.

The computation consists of three types of operations:
\begin{enumerate}
    \item \emph{exact combinatorics}: constructing the integer-coefficient
    Hermite polynomials $H_a$ and the polynomials $S_{a,\lambda}$, as well as
    the multiplicities $N(\lambda)$ and the factorial denominators;

    \item \emph{certified analytic computation}: enclosing the finite-window
    integrals $B^{(R)}_{a,\lambda}$ and evaluating the analytic tail bounds
    from Lemma~\ref{lem:coefficient-tail-bound};

    \item \emph{exact rational postprocessing}: outward-rounding the certified
    enclosures to rational intervals, applying the square bounds of
    Lemma~\ref{lem:square-lower-bound}, and assembling the coefficient bounds
    $\underline{c}_{k,m}$ and $\overline{c}_{k,m}$ according to~\eqref{eq:ckm-certified-lower-bound} and~\eqref{eq:ckm-certified-upper-bound}.
\end{enumerate}

\paragraph{Ball arithmetic.}
For the certified analytic computations, we use
Arb~\cite{Johansson17}, now part of FLINT,\footnote{Arb was merged into
FLINT in version~3.  We access it through the Python bindings
\texttt{python-flint}
(\url{https://github.com/flintlib/python-flint}).}
together with its complex-ball counterpart ACB.
Ball arithmetic represents a real or complex quantity by an
arbitrary-precision enclosure and propagates rounding and approximation errors
through the computation, so that the resulting ball is guaranteed to contain
the exact value.

For the finite-window integral $B^{(R)}_{a,\lambda}$, we use the validated
numerical integration routine \texttt{acb.integral}, based on ball
arithmetic~\cite{Johansson18}.  The integrand in~\eqref{eq:Ba-lambda-integrand} extends to
an entire function of the complex variable $z$: the polynomial
$S_{a,\lambda}(z)$ is entire, as are
$\exp(-z^2/2)$ and
\[
    \Phi(z)
    =
    \frac{1}{2}
    \left(
        1+\operatorname{erf}\left(\frac{z}{\sqrt{2}}\right)
    \right).
\]
It can therefore be evaluated on complex balls, and the validated integrator
returns an enclosure containing $B^{(R)}_{a,\lambda}$.  The analytic tail
bound in Lemma~\ref{lem:coefficient-tail-bound} is evaluated separately using real Arb balls and the
recurrence~\eqref{eq:Gj-recurrence}.

\paragraph{Pipeline.}
At the beginning of a run, the implementation constructs
$H_0,\ldots,H_M$ by the integer recurrence $H_{n+1}(x)=xH_n(x)-nH_{n-1}(x)$
using exact integer polynomial arithmetic.  For each requested $k\ge2$, the
degree-zero coefficient is handled without numerical computation: $\underline{c}_{k,0} = \overline{c}_{k,0} = 1/k$.
For each $m=1,\ldots,M$, the implementation enumerates exactly the pairs
$(a,\lambda)$ appearing in~\eqref{eq:ckm-grouped-formula}. For every such pair, it performs the
following steps:
\begin{enumerate}
    \item enclose $B^{(R)}_{a,\lambda}$ using validated ACB integration on
    $[-R,R]$, and outward-round the lower and upper endpoints of the resulting
    ball to $\delta$ decimal digits, obtaining a rational interval
    $[L_{a,\lambda},U_{a,\lambda}]$;

    \item evaluate the tail bound~\eqref{eq:tail-bound} in Arb using the recurrence~\eqref{eq:Gj-recurrence}, and
    outward-round its upper endpoint to a rational number
    $\varepsilon_{a,\lambda}$;

    \item form the rational enclosure
        $I_{a,\lambda}
        =
        [L_{a,\lambda}-\varepsilon_{a,\lambda},
         U_{a,\lambda}+\varepsilon_{a,\lambda}]$
    and compute
        $\underline{b}_{a,\lambda}
        =
        \operatorname{sqLB}(I_{a,\lambda})$ and 
        $\overline{b}_{a,\lambda}
        =
        \operatorname{sqUB}(I_{a,\lambda})$
    in exact rational arithmetic;

    \item add the corresponding weighted contributions to
    $\underline{c}_{k,m}$ and $\overline{c}_{k,m}$ using
    \eqref{eq:ckm-certified-lower-bound} and~\eqref{eq:ckm-certified-upper-bound}, again in exact rational arithmetic.
\end{enumerate}
Thus, all non-exact numerical evaluation in the coefficient generator is
performed inside certified Arb/ACB ball arithmetic.  Once the resulting
enclosures have been outward-rounded to rational endpoints, all remaining
coefficient computations are exact over $\mathbb{Q}$.

\paragraph{Parameters and coefficient data.}
For the computations used in this paper, we take the finite-window radius
$R=8$, use a working precision of $256$ bits, and outward-round the ball
enclosures to $\delta=40$ decimal digits.  We generate two reusable
collections of coefficient bounds:
\[
    2\le k\le32\quad \text{and}\quad 0\le m\le24
\]
and
\[
    2\le k\le16\quad \text{and}\quad 0\le m\le32.
\]
Together, these collections contain all Hermite-coefficient bounds required
by the certified computations in Sections~\ref{sec:approximations} and~\ref{sec:UB}.  
The exact rational bounds, together with the parameters used to generate them,
are stored in machine-readable files.\footnote{The source code, coefficient
data, and verification scripts for all computer-assisted results in this paper
are available at \url{https://github.com/atsushi-miyauchi/hermite-brings-a-laptop}.}
For reference, Table~\ref{tab:certified-coefficients} displays outward-rounded decimal
representations of a small subset of these exact data; the computer-assisted
proofs use the exact rational values rather than the displayed decimals.

\begin{table*}[t]
\centering
\caption{Certified lower and upper bounds $\underline{c}_{k,m}$ and $\overline{c}_{k,m}$ on the coefficients
$c_{k,m}$ for $k=2,\ldots,8$ and $m=0,\ldots,8$. Upper bounds are rounded upward and lower bounds downward
to six digits after the decimal point. The values $\underline{c}_{k,0}=\overline{c}_{k,0}=1/k$ are given by \eqref{eq:Pk-zero}.}
\label{tab:certified-coefficients}
\setlength{\tabcolsep}{5.0pt}
\renewcommand{\arraystretch}{1.0}
\begin{tabular}{cccccccccc}
\toprule
$m$ & $0$ & $1$ & $2$ & $3$ & $4$
    & $5$ & $6$ & $7$ & $8$ \\
\midrule

$\overline{c}_{2,m}$
& $1/2$ & $0.318310$ & $0.000000$ & $0.053052$ & $0.000000$
& $0.023874$ & $0.000000$ & $0.014211$ & $0.000000$ \\
$\underline{c}_{2,m}$
& $1/2$ & $0.318309$ & $0.000000$ & $0.053051$ & $0.000000$
& $0.023873$ & $0.000000$ & $0.014210$ & $0.000000$ \\
\midrule
$\overline{c}_{3,m}$
& $1/3$ & $0.358099$ & $0.056994$ & $0.044763$ & $0.023748$
& $0.018465$ & $0.013299$ & $0.010741$ & $0.008651$ \\
$\underline{c}_{3,m}$
& $1/3$ & $0.358098$ & $0.056993$ & $0.044762$ & $0.023747$
& $0.018464$ & $0.013298$ & $0.010740$ & $0.008650$ \\
\midrule
$\overline{c}_{4,m}$
& $1/4$ & $0.353205$ & $0.101322$ & $0.038489$ & $0.037527$
& $0.017396$ & $0.019264$ & $0.011064$ & $0.011858$ \\
$\underline{c}_{4,m}$
& $1/4$ & $0.353204$ & $0.101321$ & $0.038488$ & $0.037526$
& $0.017395$ & $0.019263$ & $0.011063$ & $0.011857$ \\
\midrule
$\overline{c}_{5,m}$
& $1/5$ & $0.338122$ & $0.133341$ & $0.036609$ & $0.044452$
& $0.019404$ & $0.021277$ & $0.013170$ & $0.012704$ \\
$\underline{c}_{5,m}$
& $1/5$ & $0.338121$ & $0.133340$ & $0.036608$ & $0.044451$
& $0.019403$ & $0.021276$ & $0.013169$ & $0.012703$ \\
\midrule
$\overline{c}_{6,m}$
& $1/6$ & $0.321163$ & $0.156593$ & $0.037745$ & $0.047484$
& $0.022757$ & $0.021578$ & $0.015655$ & $0.012820$ \\
$\underline{c}_{6,m}$
& $1/6$ & $0.321162$ & $0.156592$ & $0.037744$ & $0.047483$
& $0.022756$ & $0.021577$ & $0.015654$ & $0.012819$ \\
\midrule
$\overline{c}_{7,m}$
& $1/7$ & $0.304732$ & $0.173734$ & $0.040710$ & $0.048361$
& $0.026501$ & $0.021254$ & $0.017931$ & $0.012855$ \\
$\underline{c}_{7,m}$
& $1/7$ & $0.304731$ & $0.173733$ & $0.040709$ & $0.048360$
& $0.026500$ & $0.021253$ & $0.017930$ & $0.012854$ \\
\midrule
$\overline{c}_{8,m}$
& $1/8$ & $0.289520$ & $0.186543$ & $0.044728$ & $0.048062$
& $0.030169$ & $0.020807$ & $0.019802$ & $0.013029$ \\
$\underline{c}_{8,m}$
& $1/8$ & $0.289519$ & $0.186542$ & $0.044727$ & $0.048061$
& $0.030168$ & $0.020806$ & $0.019801$ & $0.013028$ \\

\bottomrule
\end{tabular}
\end{table*}

\section{Improved Approximations}\label{sec:approximations}
In this section, we utilize our Hermite-coefficient certification framework to obtain improved approximation guarantees for several important clustering problems, namely \maxagree~\cite{Bansal+04}, \maxagreek~\cite{Giotis+06_j}, Max $K$-Cut~\cite{Garey+02,Frieze+97}, and modularity maximization~\cite{Newman+04,Fortunato10}. To this end, we first consider a general clustering problem on weighted edge-labeled graphs that contains the first three problems as special cases, and introduce a family of SDP relaxations. We then formalize the standard edge-wise analysis, reducing the task of certifying a desired approximation ratio to verifying the positive-edge and negative-edge inequalities, and describe how these inequalities can be rigorously verified using finite Hermite bounds, Bernstein certificates, and exact rational arithmetic. We subsequently apply this framework to the first three problems. Although modularity maximization does not fit directly into this unified framework, its approximation analysis relies on the same certified bounds on the collision probabilities.

\subsection{General Clustering Problem, SDP Relaxations, and Edge-Wise Analysis}\label{subsec:approximations-general}
We consider the following general clustering problem.
Let $G=(V,E)$ be an edge-labeled undirected graph with $V=\{1,\dots,n\}$ and $E=E_+\cup E_-$, where $E_+\cap E_-=\emptyset$.
Each edge $\{i,j\}\in E$ has a weight $w_{ij}>0$, representing the strength of its label.
For a clustering $\mathcal{C}$ of $V$, let $\mathcal{C}(i)$ denote the cluster containing $i$.
We define the objective function 
\begin{equation*}
\mathsf{val}(\mathcal{C})=\sum_{\{i,j\}\in E_+}w_{ij}\,\1[\mathcal{C}(i)=\mathcal{C}(j)] + \sum_{\{i,j\}\in E_-}w_{ij}\,\1[\mathcal{C}(i)\neq \mathcal{C}(j)],
\end{equation*}
which captures the intuition that 
the endpoints of a positive edge should be assigned to the same cluster, whereas the endpoints of a negative edge should be assigned to different clusters.
Let $\mathcal{F}$ be a family of feasible clusterings of $V$.
Given $G=(V,E)$ together with the edge weights, the general problem is to find a clustering $\mathcal{C}\in\mathcal{F}$ that maximizes $\mathsf{val}(\mathcal{C})$.
We denote the optimal value of this problem by $\mathrm{OPT}$.

We next introduce a family of SDP relaxations. For $a\in [-1/2,0]$, define the two affine functions
\begin{equation*}
s^+_a(\rho) = \frac{\rho-a}{1-a}\quad \text{and} \quad s^-_a(\rho) = \frac{1-\rho}{1-a} \qquad \text{for } \rho\in[a,1].
\end{equation*}
Note that $s^+_a(1)=s^-_a(a)=1$ and $s^+_a(a)=s^-_a(1)=0$. 
We use $s^+_a(\rho)$ and $s^-_a(\rho)$, together with the edge weights $w_{ij}$, as the SDP contributions of positive and negative edges, respectively. 
The parameter $a$ lower bounds the inner product between any pair of SDP vectors and, equivalently, determines an upper bound on the angle between them. 
Specifically, we consider the following SDP: 
\begin{alignat}{4}\label{eq:maxagree_sdp}
&\mathrm{SDP}(a)\colon &\, &\text{maximize} &\quad &\sum_{\{i,j\}\in E_+}w_{ij}\, s^+_a&&(\bm{v}_i\cdot \bm{v}_j) + \sum_{\{i,j\}\in E_-}w_{ij}\, s^-_a(\bm{v}_i\cdot \bm{v}_j)\\
&                           &   &\text{subject to} &&\bm{v}_i\cdot \bm{v}_i = 1 &&\text{for all } i\in V,\notag\\
&                           &   &                  &&\bm{v}_i\cdot \bm{v}_j \geq a &&\text{for all } i,j\in V,\notag\\
&                           &   &                  &&\bm{v}_i\in \mathbb{R}^n &&\text{for all } i\in V\notag.
\end{alignat}
We denote by $\mathrm{OPT}_\mathrm{SDP}$ the optimal value of this SDP. 

To analyze the approximation ratio of a rounding scheme for the family of SDPs, we formalize the standard edge-wise analysis: 
\begin{lemma}\label{lem:edge-wise-analysis}
Consider any special case of the general clustering problem defined above.
Let $a\in [-1/2,0]$, and suppose that $\mathrm{SDP}(a)$ is a relaxation of the problem. 
Let $(\bm{v}^*_i)_{i\in V}$ be an optimal solution to $\mathrm{SDP}(a)$. 
Suppose that a mixture of the $k$-spokes roundings applied to $(\bm{v}^*_i)_{i\in V}$ always returns a feasible solution.
If, for some $\gamma \geq 0$, its collision probability $P(\rho)$ satisfies 
\begin{equation}\label{eq:edge-inequalities}
P(\rho)\geq \gamma s^+_a(\rho) \quad \text{for all }\rho\in[a,1]\qquad and \qquad 1-P(a)\geq \gamma,
\end{equation}
then the rounding achieves an expected approximation ratio of $\gamma$.
\end{lemma}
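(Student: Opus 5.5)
The plan is a standard edge-by-edge comparison. The only nontrivial ingredient is upgrading the single endpoint condition $1-P(a)\ge\gamma$ to the full negative-edge inequality, which I will obtain from Corollary~\ref{cor:negative-ratio-monotonicity}.

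\emph{Step 1: the negative-edge inequality on all of $[a,1]$.} Let $R(\rho)=(1-P(\rho))/(1-\rho)$. Since $a\in[-1/2,0]$, Corollary~\ref{cor:negative-ratio-monotonicity} says $R$ is nondecreasing on $[a,1)$. Hence for every $\rho\in[a,1)$,
\[
1-P(\rho)=(1-\rho)R(\rho)\ge(1-\rho)R(a)=\frac{1-\rho}{1-a}\,\bigl(1-P(a)\bigr)\ge\gamma\, s^-_a(\rho),
\]
using the hypothesis $1-P(a)\ge\gamma$ in the last step. At $\rho=1$ both sides vanish: $P(1)=1$ by \eqref{eq:Pk-one} applied to each component of the mixture, and $s^-_a(1)=0$. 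Together with the assumed positive-edge inequality, every $\rho\in[a,1]$ therefore satisfies $P(\rho)\ge\gamma s^+_a(\rho)$ and $1-P(\rho)\ge\gamma s^-_a(\rho)$.

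\emph{Step 2: summing over edges.} For each edge $\{i,j\}$, set $\rho_{ij}=\bm v^*_i\cdot\bm v^*_j$. SDP feasibility gives $\rho_{ij}\in[a,1]$. For the mixture, the probability that $i$ and $j$ land in the same cluster depends only on $\rho_{ij}$ and equals $P(\rho_{ij})$. This follows from the Gaussian linear-form facts defining $P_k$ in \eqref{eq:Pk-def}, combined by linearity over the mixture. By linearity of expectation,
\[
\E[\mathsf{val}(\mathcal C)]=\sum_{E_+}w_{ij}P(\rho_{ij})+\sum_{E_-}w_{ij}\bigl(1-P(\rho_{ij})\bigr)\ge\gamma\,\mathrm{OPT}_{\mathrm{SDP}}.
\]
The inequality uses the two edge-wise bounds from Step 1 and $w_{ij}>0$.

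\emph{Step 3: conclusion.} Since $\mathrm{SDP}(a)$ is a relaxation, $\mathrm{OPT}_{\mathrm{SDP}}\ge\mathrm{OPT}$. The output is always feasible by assumption, so the rounding returns a clustering in $\mathcal F$ whose expected value is at least $\gamma\,\mathrm{OPT}$.

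No step is a real obstacle once the corollary is available. The points needing care are that $a\ge-1/2$ places the whole interval $[a,1)$ inside the monotonicity range of Corollary~\ref{cor:negative-ratio-monotonicity}, and the separate treatment of the endpoint $\rho=1$, where $R$ is undefined.
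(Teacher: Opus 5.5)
Your proof is correct and follows essentially the same route as the paper: positive edges come directly from the hypothesis, negative edges come from the monotonicity of $R$ in Corollary~\ref{cor:negative-ratio-monotonicity} combined with $1-P(a)\ge\gamma$, and you conclude by summing against $\mathrm{OPT}_{\mathrm{SDP}}\ge\mathrm{OPT}$. You also treat $\rho=1$ separately, whereas the paper's proof divides by $1-\bm{v}^*_i\cdot\bm{v}^*_j$ without comment, so your version is slightly more careful at that endpoint.
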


\begin{proof}
Since $\mathrm{SDP}(a)$ is a relaxation, $\mathrm{OPT}_\mathrm{SDP}\geq \mathrm{OPT}$. 
For $\{i,j\}\in E_+$, the positive-edge inequality gives 
\begin{equation*}
P(\bm{v}^*_i\cdot\bm{v}^*_j)\geq \gamma\, s^+_a(\bm{v}^*_i\cdot\bm{v}^*_j). 
\end{equation*}
For $\{i,j\}\in E_-$, by Corollary~\ref{cor:negative-ratio-monotonicity} and the assumption, 
\begin{equation*}
\frac{1-P(\bm{v}^*_i\cdot\bm{v}^*_j)}{1-\bm{v}^*_i\cdot\bm{v}^*_j}
\geq \frac{1-P(a)}{1-a}
\geq \frac{\gamma}{1-a},
\end{equation*}
and thus, 
\begin{equation*}
1-P(\bm{v}^*_i\cdot\bm{v}^*_j) \geq \gamma\, s^-_a(\bm{v}^*_i\cdot\bm{v}^*_j). 
\end{equation*}
Let $\mathcal{C}_\mathrm{ALG}$ be the output of the (randomized) rounding. Then, 
\begin{align*}
\E[\mathsf{val}(\mathcal{C}_\mathrm{ALG})]
&= \sum_{\{i,j\}\in E_+}w_{ij}P(\bm{v}^*_i\cdot\bm{v}^*_j) + \sum_{\{i,j\}\in E_-}w_{ij}\bigl(1-P(\bm{v}^*_i\cdot\bm{v}^*_j)\bigr)\\
&\geq \gamma \sum_{\{i,j\}\in E_+}w_{ij}\, s^+_a(\bm{v}^*_i\cdot\bm{v}^*_j) + \gamma \sum_{\{i,j\}\in E_-}w_{ij}\, s^-_a(\bm{v}^*_i\cdot\bm{v}^*_j)\\
&= \gamma \, \mathrm{OPT}_{\mathrm{SDP}} \geq \gamma \, \mathrm{OPT}.
\end{align*}
This completes the proof. 
\end{proof}

We refer to the two inequalities in~\eqref{eq:edge-inequalities} as the positive-edge and negative-edge inequalities. 
The above lemma formalizing the standard edge-wise analysis reduces the task of certifying a desired approximation ratio for a rounding scheme with a collision probability $P(\rho)$ to verifying the positive-edge inequality for all $\rho \in [a,1]$ and the negative-edge inequality at the single point $\rho=a$.

\subsection{Finite Certification of the Positive-Edge and Negative-Edge Inequalities}
We now explain how our Hermite-coefficient certification framework further reduces the task of certifying the positive-edge and negative-edge inequalities to finite, exactly checkable conditions. 
Recall that Corollary~\ref{cor:theory-P-LB-UB} constructs lower and upper bounds on the collision probability $P(\rho)$ of any mixture of the $k$-spokes roundings from finitely many certified lower and upper bounds on Hermite coefficients. 
Using this, we obtain the following machinery: 

\begin{lemma}\label{lem:machinery} 
Let $M\in \mathbb{N}_0$. Recall $\underline P_M(\rho)$ and $\overline P_M(\rho)$ defined in Corollary~\ref{cor:theory-P-LB-UB}. 
Let $a\in [-1/2,0]$ and $\gamma\geq 0$. 
\begin{itemize}
\item If 
\begin{equation}\label{eq:hypothesis-positive}
\underline P_M(\rho)\ge \gamma s^+_a(\rho)\quad \text{for all }\rho\in [a,1],
\end{equation}
then the positive-edge inequality $P(\rho)\ge \gamma s^+_a(\rho)$ holds for all $\rho\in [a,1]$;
\item If 
\begin{equation}\label{eq:hypothesis-negative}
1- \overline P_M(a)\ge \gamma, 
\end{equation}
then the negative-edge inequality $1-P(a)\ge \gamma$ holds.
\end{itemize}
\end{lemma}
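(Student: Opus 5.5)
Both bullets follow from the sandwich $\underline P_M(\rho)\le P(\rho)\le \overline P_M(\rho)$ of Corollary~\ref{cor:theory-P-LB-UB}. The only thing to check is that each point at which we evaluate lies in the range where that corollary applies. Since $a\in[-1/2,0]$, we have $[a,1]\subseteq[-1,1]$. Hence, for the mixture with weights $\bm{\lambda}$ and the certified coefficient bounds up to degree $M$, the corollary gives $\underline P_M(\rho)\le P(\rho)\le\overline P_M(\rho)$ for every $\rho\in[a,1]$. Here $P$ is the collision probability of the mixture in Corollary~\ref{cor:negative-ratio-monotonicity}, which is $\sum_k\lambda_k P_k$ by linearity.

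For the first bullet, fix $\rho\in[a,1]$. Chaining the hypothesis \eqref{eq:hypothesis-positive} with the lower half of the sandwich gives
\[
P(\rho)\ge \underline P_M(\rho)\ge \gamma s^+_a(\rho),
\]
which is the positive-edge inequality at $\rho$. Since $\rho$ was arbitrary, it holds on all of $[a,1]$.

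For the second bullet, evaluate the upper half of the sandwich at $\rho=a\in[-1,1]$ to get $P(a)\le\overline P_M(a)$. Together with \eqref{eq:hypothesis-negative} this yields
\[
1-P(a)\ge 1-\overline P_M(a)\ge\gamma.
\]
The extension of this single-point condition to the full interval is not needed here, because Lemma~\ref{lem:edge-wise-analysis} already handles it via Corollary~\ref{cor:negative-ratio-monotonicity}.

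There is no real obstacle: both parts are direct consequences of the sandwich. The one point needing care is to use the correct branch of the piecewise definitions of $\underline P_{k,M}$ and $\overline P_{k,M}$ when $a<0$, namely the $\rho\in[-1,0]$ branch. Corollary~\ref{cor:theory-P-LB-UB} already covers both branches, so the argument goes through unchanged.
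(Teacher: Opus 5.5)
Your proposal is correct and matches the paper's proof: both bullets follow directly from the sandwich $\underline P_M(\rho)\le P(\rho)\le \overline P_M(\rho)$ of Corollary~\ref{cor:theory-P-LB-UB} on $[-1,1]\supseteq[a,1]$. The paper chains the inequalities in exactly this way, in one line.
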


\begin{proof}
Corollary~\ref{cor:theory-P-LB-UB} guarantees $\underline P_M(\rho)\le P(\rho)\le \overline P_M(\rho)$ for all $\rho \in[-1,1]$. 
Hence, the hypotheses~\eqref{eq:hypothesis-positive} and \eqref{eq:hypothesis-negative} directly ensure the desired inequalities. 
\end{proof}

Given rational bounds on $c_{k,m}$, checking the hypothesis~\eqref{eq:hypothesis-negative} for the negative-edge inequality can be done in exact rational arithmetic. 
In contrast, checking the hypothesis~\eqref{eq:hypothesis-positive} for the positive-edge inequality still requires an effort, since the inequality needs to be verified for infinitely many $\rho\in[a,1]$. However, we can do this in exact rational arithmetic using Bernstein certificates. Indeed, given rational bounds on $c_{k,m}$, the task reduces to verifying that the piecewise polynomial $\underline{P}_M(\rho)-\gamma s_a^+(\rho)$
is nonnegative over $\rho\in[a,1]$. More precisely, by the definition of $\underline{P}_M$, this amounts to checking the nonnegativity of a polynomial with rational coefficients on each of the intervals $[a,0]$ and $[0,1]$, with the former omitted when $a=0$. These polynomial inequalities can be verified in exact rational arithmetic using Bernstein certificates~\cite{bernstein_basis}:

\begin{lemma}\label{lem:bernstein-nonnegativity}
Let $q$ be a polynomial of degree at most $d$ and write
\[
q(\rho)
=
\sum_{i=0}^{d} \beta_i\binom{d}{i}
\left(\frac{\rho-u}{v-u}\right)^{i}
\left(\frac{v-\rho}{v-u}\right)^{d-i}
\]
in the Bernstein basis of degree $d$ on an interval $[u,v]$.  If
$\beta_i\ge0$ for all $i$, then $q\ge0$ on $[u,v]$.
\end{lemma}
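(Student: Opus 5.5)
The plan is to note that on $[u,v]$ every Bernstein basis polynomial is nonnegative, so any nonnegative combination of them is nonnegative. Concretely, for $\rho\in[u,v]$ (with $u<v$) set $t=(\rho-u)/(v-u)\in[0,1]$. Then $(v-\rho)/(v-u)=1-t\in[0,1]$, and the $i$-th basis element equals $\binom{d}{i}t^i(1-t)^{d-i}$. This is a product of a positive binomial coefficient and nonnegative powers of the nonnegative numbers $t$ and $1-t$, so it is nonnegative. The convention $0^0=1$ at the endpoints causes no trouble, since each factor is still nonnegative.

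Given this, the argument finishes immediately. By hypothesis $q(\rho)=\sum_i\beta_i\binom{d}{i}t^i(1-t)^{d-i}$ with all $\beta_i\ge0$, so $q(\rho)$ is a sum of products of nonnegative numbers and hence $q(\rho)\ge0$ for every $\rho\in[u,v]$. (As a byproduct, the basis elements sum to $(t+(1-t))^d=1$ by the binomial theorem, so $q(\rho)$ is a convex combination of the $\beta_i$ and $q\ge\min_i\beta_i$. This is not needed.)

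Nothing here is a real obstacle. The only point to check is that the representation in the statement exists, which it does because the degree-$d$ Bernstein polynomials form a basis of polynomials of degree at most $d$ (for example, expand $\rho^j$ via $t^j\bigl(t+(1-t)\bigr)^{d-j}$). The lemma in any case assumes the expansion is given. In the application, the coefficients $\beta_i$ are computed from the rational coefficients of $\underline P_M(\rho)-\gamma s_a^+(\rho)$ by the standard exact change of basis, so the check is entirely in $\mathbb{Q}$.
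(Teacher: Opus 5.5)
Your proposal is correct and takes the same approach as the paper. The paper's one-line proof likewise observes that each Bernstein basis polynomial is a product of nonnegative factors on $[u,v]$, so a nonnegative combination of them is nonnegative.
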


\begin{proof}
Each Bernstein basis polynomial is a product of nonnegative factors on
$[u,v]$.
\end{proof}

Note that the converse need not hold, since the Bernstein enclosure can be
strict: a polynomial may be positive even though one of its Bernstein
coefficients is negative.  Subdivision removes this obstruction: Bernstein coefficients on shrinking subintervals converge
to values of $q$, so if $q>0$ on $[u,v]$, then recursively bisecting any
subinterval with a negative coefficient terminates with a partition of
$[u,v]$ on which Lemma~\ref{lem:bernstein-nonnegativity} applies
everywhere.  The resulting subdivision, together with the (rational)
Bernstein coefficients on each piece, constitutes a finite, exactly
checkable certificate of \eqref{eq:hypothesis-positive}.

In our implementation, the application-specific verifiers take as input the
exact rational coefficient bounds generated as described in Section~\ref{subsec:coefficient-implementation}. 
Consequently, all computations in these verifiers are carried out in exact
integer or rational arithmetic, with no further floating-point or
ball-arithmetic computation.

\subsection{\maxagree}\label{sec:maxagree}
\maxagree\ is the special case of the general clustering problem introduced in Section~\ref{subsec:approximations-general} 
obtained by taking $\mathcal{F}$ to be the family of all clusterings of $V$.
Here, we use $\mathrm{SDP}(a)$ with $a=0$, which coincides with the standard SDP relaxation used by \citet*{Charikar+05} and \citet{Swamy04}. 
Our algorithm applies both the $4$-spokes and $5$-spokes roundings to an optimal solution to $\mathrm{SDP}(0)$ and returns the clustering with the larger objective value. 
The algorithm is displayed in Algorithm~\ref{alg:maxagree}.
\begin{algorithm}[h]
\SetKwInput{Input}{Input}
\caption{Our algorithm for \maxagree}\label{alg:maxagree}
\Input{$V$, $E_+$, $E_-$, and $w_{ij}>0$ for $\{i,j\}\in E_+\cup E_-$}
Solve $\mathrm{SDP}(0)$ and obtain an optimal solution $(\bm{v}^*_i)_{i\in V}$\;
$\mathcal{C}_4\leftarrow \text{$4$-spokes}((\bm{v}^*_i)_{i\in V})$ and $\mathcal{C}_5\leftarrow \text{$5$-spokes}((\bm{v}^*_i)_{i\in V})$\;
\Return $\argmax_{\mathcal{C}\in \{\mathcal{C}_4,\mathcal{C}_5\}}\textsf{val}(\mathcal{C})$. 
\end{algorithm}

Our goal in this section is to prove that Algorithm~\ref{alg:maxagree} has an expected approximation ratio of $0.7818$. 
To this end, we consider the mixture of the $4$-spokes and $5$-spokes roundings with probabilities $0.364$ and $0.636$, respectively. 
Since our algorithm only improves the objective value over this mixture, it suffices to prove an approximation ratio of $0.7818$ for the mixture. 

Let $P(\rho)$ be the collision probability of the mixture,
and for $M\in \mathbb{N}_0$, let $\underline P_M(\rho)$ denote the certified lower bound associated with this mixture, as defined in Corollary~\ref{cor:theory-P-LB-UB}.
By Lemma~\ref{lem:machinery}, the positive-edge inequality follows if the hypothesis~\eqref{eq:hypothesis-positive} holds for some $M\in\mathbb{N}_0$ with $\gamma=0.7818$.
We verify the hypothesis in exact rational arithmetic using Bernstein certificates: 
\begin{proposition}\label{prop:maxagree-positive}
With the exact rational lower bounds on $c_{k,m}$ generated as described in Section~\ref{subsec:coefficient-implementation}, 
\begin{equation*}
\underline P_{24}(\rho)\ge 0.7818\, s^+_0(\rho)= 0.7818 \rho \quad \text{for all } \rho\in[0,1]. 
\end{equation*}
\end{proposition}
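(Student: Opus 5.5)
The plan is to reduce the claim to a single univariate polynomial inequality with rational coefficients on $[0,1]$ and to certify it with Lemma~\ref{lem:bernstein-nonnegativity} plus subdivision.

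\textbf{Setting up the polynomial.} By Corollary~\ref{cor:theory-P-LB-UB} and Proposition~\ref{prop:theory-Pk-LB-UB}, for $\rho\in[0,1]$ we have
\[
\underline P_{24}(\rho)=\sum_{m=0}^{24}\bigl(0.364\,\underline c_{4,m}+0.636\,\underline c_{5,m}\bigr)\rho^m .
\]
This is a polynomial of degree at most $24$ whose coefficients are exact rationals, since the mixture weights $364/1000$ and $636/1000$ are rational. So the task is to show that
\[
q(\rho)=\underline P_{24}(\rho)-\tfrac{7818}{10000}\rho\ge0 \quad\text{on } [0,1].
\]
At $\rho=0$ this holds with slack, because $q(0)=0.364/4+0.636/5>0$. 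The constant and linear terms already give a good picture: the linear coefficient is roughly $0.364\cdot0.3532+0.636\cdot0.3381\approx0.343$.

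\textbf{Where the margin is tight.} The binding region is near $\rho=1$. There $q(1)=\sum_m \underline c_{\cdot,m}-0.7818$, and the truncated coefficient mass up to degree $24$ has to exceed $0.7818$. Because $\sum_m c_{k,m}=1$, this amounts to requiring that the tail mass beyond degree $24$ plus the certification slack is less than about $0.218$. Judging from the decay visible in Table~\ref{tab:certified-coefficients}, this is plausible but is exactly the constraint that dictated the choice $M=24$.

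\textbf{Certificate and main obstacle.} I will convert $q$ to the degree-$24$ Bernstein basis on $[0,1]$ using exact rational arithmetic. Any subinterval with a negative coefficient is bisected recursively and re-expanded on each half, until every piece has nonnegative Bernstein coefficients. The resulting list of subintervals and coefficients is the finite certificate, and Lemma~\ref{lem:bernstein-nonnegativity} turns it into the conclusion. The main obstacle is ensuring that $q$ is strictly positive on $[0,1]$, since subdivision cannot terminate if $q$ has a zero. In particular, the minimum of $q$, presumably at or near an interior point where $P(\rho)/\rho$ is minimized, must carry enough margin above the coefficient-certification error, which is about $10^{-40}$ per coefficient and therefore negligible. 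If $q$ turned out nonpositive somewhere, the fallback would be to raise $M$ to $32$, but that data covers only $k\le16$, which still includes $k=4,5$. I expect a modest number of bisections to suffice near $\rho\approx1$.
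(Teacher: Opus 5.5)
Your plan is exactly the paper's proof: write $\underline P_{24}(\rho)-0.7818\rho$ on $[0,1]$ as a degree-$24$ polynomial with rational coefficients, expand it in the Bernstein basis, bisect wherever a coefficient is negative, and check everything in exact rational arithmetic via Lemma~\ref{lem:bernstein-nonnegativity}. Your remarks about where the difficulty lies are off, though the certificate does not depend on them: the endpoint $\rho=1$ is far from binding, since the degree-$8$ data of Table~\ref{tab:certified-coefficients} already give $\underline P_{8}(1)\approx 0.827>0.7818$; the tight spot is the interior point where $P'(\rho)=0.7818$ (roughly $\rho\approx 0.75$), which is where the degree-$24$ truncation actually matters; and the coefficient error is dominated by the $R=8$ tail bounds (of order $e^{-32}$ times polynomial factors for the terms with $r(\lambda)=0$), not by the $40$-digit rounding, although it is still negligible for the lower bounds used here.
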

For the negative-edge inequality, we do not need to verify the hypothesis~\eqref{eq:hypothesis-negative}, 
since the inequality itself can be shown analytically. Indeed, since $a=0$,
\begin{equation*}
1-P(a)=1-P(0)=1-(0.364\, P_4(0)+0.636\, P_5(0))=1-(0.364/4+0.636/5) = 0.7818. 
\end{equation*}

Thus, we have the desired result: 
\begin{theorem}\label{thm:maxagree}
For \maxagree, Algorithm~\ref{alg:maxagree}
has an expected approximation ratio of $0.7818$. 
\end{theorem}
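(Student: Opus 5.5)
The plan is to assemble Theorem~\ref{thm:maxagree} from the edge-wise analysis of Lemma~\ref{lem:edge-wise-analysis} with $a=0$. We apply it not to Algorithm~\ref{alg:maxagree} directly, but to the auxiliary randomized rounding that runs $4$-spokes with probability $0.364$ and $5$-spokes with probability $0.636$.

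First, check the hypotheses of Lemma~\ref{lem:edge-wise-analysis}.
\begin{itemize}
\item $\mathrm{SDP}(0)$ is a relaxation of \maxagree. Given any clustering, assign to every node of the $t$-th cluster the standard basis vector $e_t$. The constraints $\bm{v}_i\cdot\bm{v}_i=1$ and $\bm{v}_i\cdot\bm{v}_j\ge0$ hold. A positive edge contributes $s^+_0(\rho)=\rho=\1[\text{same cluster}]$, and a negative edge contributes $s^-_0(\rho)=1-\rho=\1[\text{different clusters}]$. So every clustering has SDP value equal to its $\mathsf{val}$, and $\mathrm{OPT}_\mathrm{SDP}\ge\mathrm{OPT}$.
\item Feasibility is trivial, since every clustering is feasible for \maxagree.
\item Both inequalities in \eqref{eq:edge-inequalities} hold with $\gamma=0.7818$. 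The positive-edge inequality follows from Proposition~\ref{prop:maxagree-positive} combined with Lemma~\ref{lem:machinery}, because $\underline P_{24}\le P$ on $[0,1]$ by Corollary~\ref{cor:theory-P-LB-UB}. The negative-edge inequality is the computation already displayed: $P_k(0)=1/k$ by \eqref{eq:Pk-zero}, so $1-P(0)=1-0.364/4-0.636/5=0.7818$ exactly. Monotonicity (Corollary~\ref{cor:negative-ratio-monotonicity}, applicable since $[0,1)\subset[-1/2,1)$) then extends this to all $\rho\in[0,1]$ inside the lemma.
\end{itemize}
Hence the mixture satisfies $\E[\mathsf{val}(\mathcal{C}_{\mathrm{mix}})]\ge0.7818\,\mathrm{OPT}$.

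Second, transfer the guarantee to Algorithm~\ref{alg:maxagree}. Couple the two: the algorithm computes both $\mathcal{C}_4$ and $\mathcal{C}_5$ from the same SDP optimum. Pointwise,
\[
\max\{\mathsf{val}(\mathcal{C}_4),\mathsf{val}(\mathcal{C}_5)\}\ge 0.364\,\mathsf{val}(\mathcal{C}_4)+0.636\,\mathsf{val}(\mathcal{C}_5).
\]
Taking expectations, and using that $\mathcal{C}_4,\mathcal{C}_5$ have the marginal laws of $4$-spokes and $5$-spokes, gives
\[
\E[\mathsf{val}(\mathrm{ALG})]\ge 0.364\,\E[\mathsf{val}(\mathcal{C}_4)]+0.636\,\E[\mathsf{val}(\mathcal{C}_5)]=\E[\mathsf{val}(\mathcal{C}_{\mathrm{mix}})]\ge 0.7818\,\mathrm{OPT}.
\]
The $\mathcal{C}_4$ and $\mathcal{C}_5$ may even share randomness; only linearity of expectation is used.

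The only nontrivial ingredient is Proposition~\ref{prop:maxagree-positive}, which is taken as given here since it is stated earlier. Its validity rests on the certified coefficient data and on the Bernstein certificate of Lemma~\ref{lem:bernstein-nonnegativity} for the degree-$24$ rational polynomial $\underline P_{24}(\rho)-0.7818\rho$ on $[0,1]$.

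One point needs care. At $\rho=0$ this polynomial equals $c_{4,0}\cdot0.364+c_{5,0}\cdot0.636=0.2182>0$. The positive-edge inequality is therefore not tight at an endpoint, and the tight region should be interior. So the certificate needs subdivision near the minimizer of $P(\rho)/\rho$, but it should terminate because the inequality is expected to be strict there. Beyond that, the proof of the theorem is a straightforward assembly.
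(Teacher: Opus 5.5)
Your proposal is correct and follows the paper's own route. The paper also reduces to the $0.364/0.636$ mixture of $4$- and $5$-spokes, obtains the positive-edge inequality from Proposition~\ref{prop:maxagree-positive} via Lemma~\ref{lem:machinery}, computes $1-P(0)=0.7818$ exactly for the negative edges, and concludes with Lemma~\ref{lem:edge-wise-analysis}. You spell out two steps the paper leaves implicit, namely that $\mathrm{SDP}(0)$ is a relaxation and that the maximum of the two clusterings dominates the convex combination.
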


We complement this approximation guarantee with two near-optimality results in later sections. 
In Section~\ref{sec:raghavendra-gap}, we study its quality from a complexity-theoretic perspective and show, under the UGC, 
an inapproximability threshold of $0.802+\epsilon$. 
In Section~\ref{sec:UB}, we study the limitations of rounding families under the edge-wise analysis and show that, 
within a broad finite family of rounding schemes, no mixture can certify an approximation ratio larger than $0.78187$.

\subsection{\maxagreek}
\maxagreek\ is a variant of \maxagree\ and is obtained from the general clustering problem by taking $\mathcal{F}$ to be the family of all clusterings of $V$ with at most $K$ clusters, for a fixed integer $2\leq K\leq n$.
For $K=2$, \maxagreek\ admits the Goemans--Williamson approximation ratio $\alpha_{\mathrm{GW}}\approx0.8785$ for Max Cut~\cite{Giotis+06_j}. 
This guarantee is optimal under the Unique Games Conjecture, since Max Cut is the special case in which $E=E_-$. 
We therefore focus on $K\ge3$.
Let $a_K=-1/(K-1)$. We use $\mathrm{SDP}(a)$ with $a=a_K$, which coincides with the SDP relaxation used by \citet{Swamy04} for \maxagreek\ with $K\leq 5$. 
Our algorithm runs the $k$-spokes roundings for all $k=1,\dots, K$ and returns the clustering with the largest objective value. 
The algorithm is summarized in Algorithm~\ref{alg:maxagree-k}.

\begin{algorithm}[h]
\SetKwInput{Input}{Input}
\caption{Our algorithm for \maxagreek}\label{alg:maxagree-k}
\Input{$V$, $E_+$, $E_-$, $w_{ij}>0$ for $\{i,j\}\in E_+\cup E_-$}
Solve $\mathrm{SDP}(a_K)$ and obtain an optimal solution $(\bm{v}^*_i)_{i\in V}$\;
\textbf{for} $k=1,\dots, K$ \textbf{do} $\mathcal{C}_k\leftarrow k\text{-spokes}((\bm{v}^*_i)_{i\in V})$\;
$k^*\leftarrow \argmax_{k\in \{1,\dots, K\}}\textsf{val}(\mathcal{C}_k)$\; 
\Return $\mathcal{C}_{k^*}$.
\end{algorithm}

Our goal in this section is to prove that for each $K=3,\dots, 16$, 
Algorithm~\ref{alg:maxagree-k} has an expected approximation ratio $\gamma_K$ presented in Table~\ref{tab:maxagree-k}. 
To this end, for each $K=3,\dots, 16$, 
we consider the randomized mixture with probabilities $\bm{\lambda}^{(K)}=\left(\lambda^{(K)}_k\right)_{k\in [K]}$ presented in Table~\ref{tab:maxagree-k}. 
Since our algorithm only improves the objective value over the mixtures, it suffices to prove an approximation ratio of $\gamma_K$ for the mixtures. 

\begin{table}[t]
\centering
\caption{Approximation ratios of Algorithm~\ref{alg:maxagree-k} for \maxagreek.
The best known approximation ratios are $0.77$ for $K=3,4,5$ and $0.7666$ for $K\geq 6$~\cite{Swamy04}.
The mixture is the one that achieves the approximation ratio, where only positive elements are displayed and $\lambda^{(K)}_k$ is written as $\lambda_k$, for simplicity.
Note that $U^\mathrm{cert}_K$ is a certified upper bound on the approximation ratio that can be established 
by any mixture of the $k$-spokes roundings for $1\le k\le K$ within the edge-wise analysis framework; see Section~\ref{sec:UB}.}
\label{tab:maxagree-k}
\begin{tabular}{rccc}
\toprule
$K$ & $\gamma_K$ & Mixture $\bm{\lambda}^{(K)}$ & $U^\mathrm{cert}_K$\\
\midrule
$3$
& $0.8151$
& $(\lambda_2,\lambda_3)=(0.123,0.877)$
& $0.81523$\\

$4$
& $0.8025$
& $(\lambda_3,\lambda_4)=(0.720,0.280)$
& $0.80254$\\

$5$
& $0.7964$
& $(\lambda_3,\lambda_4)=(0.453,0.547)$
& $0.79652$\\

$6$
& $0.7934$
& $(\lambda_3,\lambda_4)=(0.291,0.709)$
& $0.79342$\\

$7$
& $0.7914$
& $(\lambda_3,\lambda_4)=(0.182,0.818)$
& $0.79155$\\

$8$
& $0.7902$
& $(\lambda_3,\lambda_4)=(0.100,0.900)$
& $0.79029$\\

$9$
& $0.7893$
& $(\lambda_3,\lambda_4)=(0.040,0.960)$
& $0.78939$\\

$10$
& $0.7885$
& $(\lambda_4,\lambda_5)=(0.986,0.014)$
& $0.78866$\\

$11$
& $0.7878$
& $(\lambda_4,\lambda_5)=(0.925,0.075)$
& $0.78790$\\

$12$
& $0.7871$
& $(\lambda_4,\lambda_5)=(0.878,0.122)$
& $0.78728$\\

$13$
& $0.7866$
& $(\lambda_4,\lambda_5)=(0.832,0.168)$
& $0.78678$\\

$14$
& $0.7863$
& $(\lambda_4,\lambda_5)=(0.800,0.200)$
& $0.78636$\\

$15$
& $0.7859$
& $(\lambda_4,\lambda_5)=(0.769,0.231)$
& $0.78601$\\

$16$
& $0.7856$
& $(\lambda_4,\lambda_5)=(0.742,0.258)$
& $0.78571$\\

\bottomrule
\end{tabular}
\end{table}

Fix $K \in \{3,\dots, 16\}$. 
Let $P(\rho)$ be the collision probability of the mixture for $K$, and for $M\in \mathbb{N}_0$, 
let $\underline P_M(\rho)$ and $\overline P_M(\rho)$ denote the certified lower and upper bounds associated with this mixture, as defined in Corollary~\ref{cor:theory-P-LB-UB}. 
By Lemma~\ref{lem:machinery}, it remains to verify the hypotheses~\eqref{eq:hypothesis-positive} and \eqref{eq:hypothesis-negative} with some $M\in \mathbb{N}_0$, $a=a_K$, and $\gamma=\gamma_K$. 
We verify them in exact rational arithmetic (using Bernstein certificates for the hypothesis for the positive-edge inequality): 
\begin{proposition}
For each $K=3,\dots, 16$, with the exact rational lower and upper bounds on $c_{k,m}$ generated as described in Section~\ref{subsec:coefficient-implementation}, 
\begin{equation*}
\underline{P}_{32}(\rho)\geq \gamma_K s^+_{a_K}(\rho) = \frac{\gamma_K(\rho-a_K)}{1-a_K}\quad \text{for all }\, \rho \in [a_K,1], 
\end{equation*}
and 
\begin{equation*}
1-\overline{P}_{32}(a_K)\geq \gamma_K, 
\end{equation*}
where $\gamma_K$ is that presented in Table~\ref{tab:maxagree-k}. 
\end{proposition}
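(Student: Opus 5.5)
This is a finite computer-assisted verification, and I would organize it per $K\in\{3,\dots,16\}$ as follows. First, fix the exact rational mixture weights $\lambda^{(K)}_k$ from Table~\ref{tab:maxagree-k} and the rational certified bounds $\underline c_{k,m},\overline c_{k,m}$ for $2\le k\le16$, $0\le m\le32$ from the second coefficient collection. For $k=1$ we have $P_1\equiv1$, so I take $\underline c_{1,0}=\overline c_{1,0}=1$ and all higher coefficients zero. This choice makes the tail term $1-\sum_m\underline c_{1,m}$ vanish. Then assemble $\underline P_{32}$ and $\overline P_{32}$ via Proposition~\ref{prop:theory-Pk-LB-UB} and Corollary~\ref{cor:theory-P-LB-UB}. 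On each of $[a_K,0]$ and $[0,1]$ these are polynomials of degree at most $33$ with rational coefficients, and $a_K=-1/(K-1)$ and $\gamma_K$ are rational.

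The negative-edge claim is a single exact evaluation. I would compute $1-\overline P_{32}(a_K)$ in $\mathbb{Q}$ using the $\rho\le0$ branch: upper bounds on even coefficients, lower bounds on odd ones, plus $|a_K|^{33}(1-\sum\underline c_{k,m})$. Then compare the result with $\gamma_K$. The slack should be comfortable because the tail term is at most $(1/2)^{33}$.

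For the positive-edge claim, I would form $q_-(\rho)=\underline P_{32}(\rho)-\gamma_K(\rho-a_K)/(1-a_K)$ on $[a_K,0]$ and the analogous $q_+$ on $[0,1]$. I would convert each to the Bernstein basis on its interval exactly, using the standard change-of-basis formulas after the affine substitution. If some Bernstein coefficient is negative, I would bisect that subinterval recursively at dyadic-rational points and recompute via de Casteljau until every piece has all coefficients nonnegative. Lemma~\ref{lem:bernstein-nonnegativity} then certifies $q_\pm\ge0$. The subdivision tree together with the coefficients is the exact certificate, and Lemma~\ref{lem:machinery} converts it into the stated inequalities.

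The main obstacle is the positive-edge check near points where the inequality is nearly tight. I expect $\rho=1$ to be the critical point: there $\underline P_{32}(1)=\sum\underline c_{k,m}<1$ by roughly the tail mass beyond degree $32$, while $\gamma_K s^+(1)=\gamma_K$. Since $\gamma_K<1$ with margin, this should be fine. The truly delicate spots are the interior near-touching points that determine $\gamma_K$, as the near-matching $U^{\mathrm{cert}}_K$ values suggest. There the margin may be around $10^{-4}$, which forces deep subdivision. Truncation at $M=32$ must also lose less than this margin. If it does not, I would first try increasing the degree $M$ before revisiting the mixture weights.
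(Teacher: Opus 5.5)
Your proposal is correct and follows the paper's approach. The paper likewise assembles $\underline P_{32}$ and $\overline P_{32}$ from the exact rational coefficient data and checks $1-\overline P_{32}(a_K)\ge\gamma_K$ by a single exact evaluation. It certifies the positive-edge polynomial inequality separately on $[a_K,0]$ and $[0,1]$ via Bernstein coefficients with recursive subdivision, all in exact rational arithmetic.

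Two small points. First, the Bernstein certificate gives the stated inequalities directly; Lemma~\ref{lem:machinery} is what afterwards turns them into the edge-wise inequalities for $P$, not what produces the stated ones. Second, your $k=1$ convention is harmless but unneeded, since the mixtures in Table~\ref{tab:maxagree-k} put no weight on $k=1$.
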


Thus, we have the desired result: 
\begin{theorem}\label{thm:maxagree-k}
For \maxagreek, Algorithm~\ref{alg:maxagree-k} has an expected approximation ratio of $\gamma_K$ presented in Table~\ref{tab:maxagree-k}. 
\end{theorem}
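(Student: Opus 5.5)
The argument is a direct assembly of results established earlier; no new analysis is needed. Fix $K\in\{3,\dots,16\}$ and set $a=a_K=-1/(K-1)\in[-1/2,0]$. I will apply Lemma~\ref{lem:edge-wise-analysis} to the randomized mixture $\bm{\lambda}^{(K)}$ of Table~\ref{tab:maxagree-k}, with $\gamma=\gamma_K$.

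\textbf{Step 1: the hypotheses of Lemma~\ref{lem:edge-wise-analysis} hold.}

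\emph{Relaxation.} Take any clustering with at most $K$ clusters. Assign to the $p$-th cluster the $p$-th vertex of a regular simplex centred at the origin, with unit vectors $u_1,\dots,u_K$ satisfying $u_p\cdot u_q=-1/(K-1)$ for $p\neq q$. Then every pair has $\bm{v}_i\cdot\bm{v}_j\in\{1,a_K\}$, so the constraints of $\mathrm{SDP}(a_K)$ are met. Moreover $s^+_{a}(1)=1$, $s^+_a(a)=0$, $s^-_a(1)=0$ and $s^-_a(a)=1$, so the SDP objective equals $\mathsf{val}$ of the clustering. Hence $\mathrm{OPT}_{\mathrm{SDP}}\ge\mathrm{OPT}$.

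\emph{Feasibility.} Every $k$-spokes rounding with $k\le K$ produces at most $k\le K$ nonempty clusters, so every component of the mixture outputs a feasible clustering.

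\textbf{Step 2: the two inequalities in \eqref{eq:edge-inequalities}.}

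By Lemma~\ref{lem:machinery}, with $M=32$, it suffices to have $\underline P_{32}\ge\gamma_K s^+_{a_K}$ on $[a_K,1]$ and $1-\overline P_{32}(a_K)\ge\gamma_K$. Both are exactly the content of the preceding proposition.

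That proposition is the certified computation, and it is the only real work in the argument. The coefficient bounds come from Proposition~\ref{prop:main-one-dimensional-certificate}. Corollary~\ref{cor:theory-P-LB-UB} turns them into piecewise polynomial bounds on $P$. The inequality at the single point $a_K$ is a rational check. The interval inequality is split into polynomial nonnegativity on $[a_K,0]$ and on $[0,1]$, each certified with subdivided Bernstein coefficients via Lemma~\ref{lem:bernstein-nonnegativity}.

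The part I expect to be the main obstacle is the tightness near the worst correlations. The chosen $\gamma_K$ lies within about $10^{-4}$ of the certified upper bound $U^{\mathrm{cert}}_K$. Near $\rho\approx a_K$, the lower bound $\underline P_{32}$ loses the tail term $|\rho|^{33}(1-\sum\underline c)$, which is small since $|\rho|\le1/2$. Near $\rho\to1$, however, the lower bound discards all coefficients of degree above $32$, so the tail is not small there. This is harmless only because $\gamma_K s^+_{a_K}(1)=\gamma_K<1$ leaves slack of order $1-\gamma_K$ at that end. The margin that really matters is at the interior tangency points, which are handled by subdivision.

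\textbf{Step 3: conclude.}

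Lemma~\ref{lem:edge-wise-analysis} then gives $\E[\mathsf{val}]\ge\gamma_K\,\mathrm{OPT}$ for the mixture. Algorithm~\ref{alg:maxagree-k} returns the best of $\mathcal{C}_1,\dots,\mathcal{C}_K$, so its value pointwise dominates the value of a clustering drawn according to $\bm{\lambda}^{(K)}$ from the same runs. Its expected value is therefore at least $\gamma_K\,\mathrm{OPT}$.
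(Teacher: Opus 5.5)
Your proposal is correct and follows the paper's route. You apply Lemma~\ref{lem:edge-wise-analysis} to the tabulated mixture $\bm{\lambda}^{(K)}$, reduce the two edge inequalities via Lemma~\ref{lem:machinery} to the certified $M=32$ computation, and note that returning the best of $\mathcal{C}_1,\dots,\mathcal{C}_K$ dominates the mixture; your simplex argument that $\mathrm{SDP}(a_K)$ is a relaxation, and your feasibility check, are correct details that the paper leaves implicit.
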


The improvements in approximation ratios achieved by our algorithm are substantial, and are considerably larger than the improvement obtained for \maxagree. 
The most notable case is $K=3$, for which our algorithm achieves an approximation ratio of $0.8151$, 
substantially improving upon the previous state-of-the-art ratio of $0.77$ due to \citet{Swamy04}.

The mixtures $\bm{\lambda}^{(K)}$ exhibit an interesting and intuitive trend. 
For $K=3$, the mixture is supported on the $2$-spokes and $3$-spokes roundings. 
As $K$ increases, the support shifts first to the $3$-spokes and $4$-spokes roundings 
and eventually to the $4$-spokes and $5$-spokes roundings, 
while within each regime the probability mass gradually shifts toward the rounding with the larger number of spokes. 
This is consistent with the mixture used for \maxagree.

Finally, the certified upper bounds $U_K^{\mathrm{cert}}$ show that 
there is little room to improve the approximation ratios $\gamma_K$ by changing only the mixture of the \(k\)-spokes roundings 
within the same edge-wise analysis framework. For details, see Section~\ref{sec:UB}.

\subsection{Max \texorpdfstring{$K$}{K}-Cut}\label{sec:max-k-cut}

Max $K$-Cut is the special case of the general clustering problem 
obtained by taking $E_+=\emptyset$, $E_-=E$, and $\mathcal{F}$ to be the family of all clusterings of $V$ with at most $K$ clusters, 
for a fixed integer $2\leq K\leq n$.
Since the case $K=2$ is equivalent to Max Cut, we henceforth focus on $K\geq 3$. 
Without loss of generality, we assume $K\leq n$. 
For this problem, we revisit the $K$-Cut algorithm presented by \citet*{Frieze+97}: 
in our context, the algorithm solves $\mathrm{SDP}(a_K)$ and simply applies the $K$-spokes rounding (see Algorithm~\ref{alg:max-k-cut}).
\begin{algorithm}[h]
\SetKwInput{Input}{Input}
\caption{The $K$-Cut algorithm~\cite{Frieze+97}}\label{alg:max-k-cut}
\Input{$V$, $E_-$, $w_{ij}>0$ for $\{i,j\}\in E_-$.}
Solve $\mathrm{SDP}(a_K)$ and obtain an optimal solution $(\bm{v}^*_i)_{i\in V}$\;
$\mathcal{C}_{K}\leftarrow \text{$K$-spokes}((\bm{v}^*_i)_{i\in V})$\;
\Return $\mathcal{C}_{K}$.
\end{algorithm}

Our goal in this section is to significantly advance our understanding of the approximation guarantee of this algorithm. 
\citet{Frieze+97} showed that for every $K$, the algorithm admits an expected approximation ratio of 
\begin{equation*}
\alpha^\mathrm{FJ}_K \coloneqq \min_{\rho\in [a_K,1)}A_K(\rho),
\end{equation*}
where
\begin{equation*}
A_K(\rho)=\frac{K(1-P_K(\rho))}{(K-1)(1-\rho)}\quad \text{for } \rho\in [a_K,1).
\end{equation*}

We first prove that for every $K\geq 3$, the minimum of $A_K(\rho)$ is achieved at the left endpoint $\rho=a_K$, yielding a simpler representation of $\alpha_K^{\mathrm{FJ}}$. This result was previously known only for $K=3$~\cite{Goemans+04,deKlerk+04} and has remained open for $K\geq 4$.
\begin{theorem}\label{thm:max-k-cut-approx-ratio}
For every $K\geq 3$, the minimum of $A_K(\rho)$ is achieved at the left endpoint $\rho=a_K$. Consequently, it holds that 
\begin{equation*}
\alpha^\mathrm{FJ}_K = 1 - P_K(a_K).
\end{equation*}
\end{theorem}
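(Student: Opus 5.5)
The plan is to derive the statement directly from Theorem~\ref{thm:negative-ratio-monotonicity}. Observe that
\[
A_K(\rho)=\frac{K}{K-1}\,R_K(\rho)\qquad\text{for }\rho\in[a_K,1),
\]
where $R_K(\rho)=(1-P_K(\rho))/(1-\rho)$ is the normalized separation probability. Since $K\ge3$, we have $a_K=-1/(K-1)\in[-1/2,0)$, so the whole interval $[a_K,1)$ lies inside $[-1/2,1)$. Theorem~\ref{thm:negative-ratio-monotonicity} with $k=K\ge 2$ says that $R_K$ is nondecreasing on $[-1/2,1)$. The positive factor $K/(K-1)$ preserves this, so $A_K$ is nondecreasing on $[a_K,1)$. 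Hence the infimum over $[a_K,1)$ is attained at the left endpoint:
\[
\alpha^{\mathrm{FJ}}_K=A_K(a_K).
\]
This also settles the implicit question of whether the minimum exists, since the interval is half-open at $1$ and monotonicity makes that issue moot.

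It remains to evaluate $A_K(a_K)$. We have $1-a_K=1+1/(K-1)=K/(K-1)$, so
\[
A_K(a_K)=\frac{K}{K-1}\cdot\frac{1-P_K(a_K)}{K/(K-1)}=1-P_K(a_K).
\]
This gives the claimed formula.

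There is essentially no obstacle here, since the monotonicity theorem already does all the work. The only points to check are the containment $a_K\ge-1/2$, which is exactly where $K\ge3$ is used (for $K=2$ we would have $a_2=-1$, outside the range of the theorem), and that $P_K$ is the collision probability of the single $K$-spokes rounding, so that Theorem~\ref{thm:negative-ratio-monotonicity} applies with $k=K$.
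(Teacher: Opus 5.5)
Your proposal is correct and follows the paper's proof exactly: you write $A_K=\frac{K}{K-1}R_K$, invoke Theorem~\ref{thm:negative-ratio-monotonicity} using $a_K\ge -1/2$ for $K\ge 3$ to place the minimum at $\rho=a_K$, and then use $1-a_K=K/(K-1)$ to get $\alpha^{\mathrm{FJ}}_K=1-P_K(a_K)$. Your remarks on where $K\ge 3$ is needed and on attainment over the half-open interval are accurate.
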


\begin{proof}
Recall $R_k(\rho)$ defined in Theorem~\ref{thm:negative-ratio-monotonicity}. Then we have 
\begin{equation*}
A_K(\rho)=\frac{K}{K-1}R_K(\rho). 
\end{equation*}
Theorem~\ref{thm:negative-ratio-monotonicity} guarantees that $R_K(\rho)$ is nondecreasing over $\rho\in [-1/2,1)$. Since $-1/2\leq a_K$, so is $A_K(\rho)$ over $\rho \in [a_K,1)$. Therefore, its minimum is achieved at the left endpoint $\rho=a_K$. 
Using this, we have 
\begin{equation*}
\alpha^\mathrm{FJ}_K = A_K(a_K) = \frac{K(1-P_K(a_K))}{(K-1)(1-a_K)} = 1 - P_K\left(a_K\right).
\end{equation*}
This proves the theorem. 
\end{proof}

Based on this, we can prove the tightness of $\alpha^\mathrm{FJ}_K$, meaning that it is impossible to certify an approximation ratio better than $\alpha^\mathrm{FJ}_K$ for Algorithm~\ref{alg:max-k-cut}. This was previously known only for $K=3$~\cite{Goemans+04}. 
Note that the case $K=2$ is more classic~\cite{Karloff99}. 
\begin{theorem}\label{thm:max-k-cut-FJ-tightness}
For every $K\geq 3$, the approximation ratio $\alpha^\mathrm{FJ}_K$ is tight for Algorithm~\ref{alg:max-k-cut}. 
\end{theorem}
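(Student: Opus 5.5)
To prove tightness, I will exhibit, for every $\epsilon>0$, an instance of Max $K$-Cut together with an optimal solution to $\mathrm{SDP}(a_K)$ such that the expected value of the $K$-spokes rounding applied to that solution is at most $(\alpha^\mathrm{FJ}_K+\epsilon)\,\mathrm{OPT}$. By Theorem~\ref{thm:max-k-cut-approx-ratio}, $\alpha^\mathrm{FJ}_K=1-P_K(a_K)$. So the goal is a graph with three properties: (i) $\mathrm{OPT}$ equals the total edge weight $W$ (a perfectly $K$-colorable graph), (ii) the SDP optimum also equals $W$, and (iii) there is an optimal SDP solution in which every edge has inner product exactly $a_K$. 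Given these, every edge is separated with probability $1-P_K(a_K)$, so the expected value is $(1-P_K(a_K))W=\alpha^\mathrm{FJ}_K\,\mathrm{OPT}$, and the ratio is attained exactly.

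\textbf{Construction.} Take the complete graph on $K$ vertices with unit weights, which is feasible since $K\le n$. Assign to its vertices the vertices of a regular simplex centered at the origin: $K$ unit vectors $\bm{v}_1,\dots,\bm{v}_K$ with $\bm{v}_i\cdot\bm{v}_j=-1/(K-1)=a_K$ for all $i\ne j$. These exist because the Gram matrix $\frac{K}{K-1}I-\frac{1}{K-1}J$ is positive semidefinite, with eigenvalues $K/(K-1)$ and $0$. Every pairwise inner product is $a_K\ge a_K$, so the solution is feasible for $\mathrm{SDP}(a_K)$. Its objective value is $\sum_{\{i,j\}} s^-_{a_K}(a_K)=\binom{K}{2}$. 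On the other hand, every feasible SDP solution has objective at most $\binom{K}{2}$, since $s^-_{a}(\rho)\le1$ for $\rho\ge a$. Hence the simplex solution is optimal. Coloring each vertex with its own cluster cuts all edges, so $\mathrm{OPT}=\binom{K}{2}=\mathrm{OPT}_\mathrm{SDP}$.

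\textbf{Conclusion.} Applying Algorithm~\ref{alg:max-k-cut} to this optimal solution gives expected cut value $\binom{K}{2}(1-P_K(a_K))=\alpha^\mathrm{FJ}_K\,\mathrm{OPT}$. Therefore no ratio larger than $\alpha^\mathrm{FJ}_K$ can be guaranteed for the algorithm.

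\textbf{Main obstacle.} The SDP may have several optimal solutions, and the algorithm might return a different one. If the claim is about worst-case behavior, the adversarial choice of optimal solution suffices. To make the example robust, I would argue that the optimum is unique up to rotation. The value $\binom{K}{2}$ requires every pairwise inner product to equal $a_K$, which forces the Gram matrix to be exactly $\frac{K}{K-1}I-\frac{1}{K-1}J$. Since the collision probability depends only on inner products, every optimal solution yields the same expected value. If needed, one can also pad the graph with isolated vertices to reach $n$ vertices; their vectors do not affect the edge contributions.
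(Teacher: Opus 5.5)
Your proposal is correct and matches the paper's proof essentially step for step. Both use the unit-weight complete graph on $K$ vertices and the regular-simplex solution attaining the trivial upper bound $\binom{K}{2}$. Both note that any optimal SDP solution must have every pairwise inner product equal to $a_K$, and both invoke Theorem~\ref{thm:max-k-cut-approx-ratio} to conclude that the expected value is exactly $\binom{K}{2}(1-P_K(a_K))=\alpha^\mathrm{FJ}_K\,\mathrm{OPT}$.
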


\begin{proof}
Consider the $K$-node complete graph with normalized edge weights, i.e., $w_{ij}=1$ for every edge $\{i,j\}$. 
Clearly, the optimal value for this instance is $\binom{K}{2}$. 
For $\mathrm{SDP}(a_K)$ on this instance, let us consider the solution $(\bm{v}'_i)_{i\in V}$ such that 
\begin{equation*}
\bm{v}'_i = \sqrt{\frac{K}{K-1}}\left(\bm{\mathrm{e}}_i-\frac{1}{K}\bm{1}\right)\quad \text{for } i\in V, 
\end{equation*}
where $\bm{\mathrm{e}}_i$ is the $i$-th standard basis vector of $\mathbb{R}^K$ and $\bm{1}$ is the all-ones vector in $\mathbb{R}^K$. 
These vectors form the $(K-1)$-dimensional regular simplex: 
$\bm{v}'_i\cdot \bm{v}'_i=1$ for all $i\in V$ and $\bm{v}'_i\cdot \bm{v}'_j= a_K$ for all distinct $i,j\in V$, 
meaning that the solution $(\bm{v}'_i)_{i\in V}$ is feasible for $\mathrm{SDP}(a_K)$. 
Indeed, the solution is optimal: its SDP objective value is $\binom{K}{2} s^-_{a_K}(a_K) = \binom{K}{2}$, 
a trivial upper bound on the SDP optimal value. 
Hence, we see that any optimal solution $(\bm{v}^*_i)_{i\in V}$ satisfies $\bm{v}^*_i\cdot \bm{v}^*_j=a_K$; 
otherwise, it would not be able to achieve the optimal value $\binom{K}{2}$, a contradiction. 
Thus, the expected objective value achieved by Algorithm~\ref{alg:max-k-cut} is $\binom{K}{2}(1-P_K(a_K))$, 
which equals $\binom{K}{2}\alpha^\mathrm{FJ}_K$ by Theorem~\ref{thm:max-k-cut-approx-ratio}. 
Since the optimal value is $\binom{K}{2}$, the ratio between the expected objective value of the algorithm and the optimal value is exactly $\alpha^\mathrm{FJ}_K$.
This proves the theorem. 
\end{proof}

In view of the above tightness result, it is all the more important to determine the value of $\alpha^\mathrm{FJ}_K$ precisely. 
In what follows, we establish rigorous lower and upper bounds $\underline{\alpha}^\mathrm{FJ}_K$ and $\overline{\alpha}^\mathrm{FJ}_K$, respectively, 
as presented in Table~\ref{tab:max-k-cut}.
Using Theorem~\ref{thm:max-k-cut-approx-ratio} and Proposition~\ref{prop:theory-Pk-LB-UB}, certifying a rigorous lower bound $\underline{\alpha}^\mathrm{FJ}_K$ reduces to verifying the hypothesis~\eqref{eq:hypothesis-negative} with some $M\in \mathbb{N}_0$, $a=a_K$, and $\gamma=\underline \alpha^\mathrm{FJ}_K$. 
A rigorous upper bound can similarly be certified using the lower bound $\underline{P}_{K,M}(a_K)$ from Proposition~\ref{prop:theory-Pk-LB-UB}. 
We summarize these two conditions in the following proposition: 
\begin{proposition}
Fix $M\in \mathbb{N}_0$. Suppose that for each $m=0,\dots, M$, we have certified lower and upper bounds $\underline{c}_{k,m}\leq c_{k,m}\leq \overline{c}_{k,m}$. 
Recall $\underline{P}_{k,M}(\rho)$ and $\overline{P}_{k,M}(\rho)$ defined in Proposition~\ref{prop:theory-Pk-LB-UB}. 
If 
\begin{equation}\label{eq:hypothesis-max-k-cut}
\underline{\alpha}^\mathrm{FJ}_K \leq 1 - \overline{P}_{K,M}(a_K) \quad 
\text{and}\quad 1 - \underline{P}_{K,M}(a_K) \leq \overline{\alpha}^\mathrm{FJ}_K, 
\end{equation}
then it holds that 
\begin{equation*}
\underline{\alpha}^\mathrm{FJ}_K\leq \alpha^\mathrm{FJ}_K\leq \overline{\alpha}^\mathrm{FJ}_K. 
\end{equation*}
\end{proposition}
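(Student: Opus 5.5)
The proof will chain two earlier results, and the only analytic input is Theorem~\ref{thm:max-k-cut-approx-ratio}. That theorem already gives the exact identity $\alpha^\mathrm{FJ}_K = 1-P_K(a_K)$ for $K\ge 3$, since the minimum of $A_K$ over $[a_K,1)$ is attained at the left endpoint. So both bounds reduce to enclosing the single number $P_K(a_K)$, and I would not study $A_K$ over the whole interval.

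The enclosure comes from Proposition~\ref{prop:theory-Pk-LB-UB} applied with $k=K$ and the given certified coefficient bounds up to degree $M$. It yields
\[
\underline P_{K,M}(\rho)\le P_K(\rho)\le \overline P_{K,M}(\rho)\quad\text{for all }\rho\in[-1,1].
\]
I then specialize to $\rho=a_K=-1/(K-1)$, which lies in $[-1/2,0)\subset[-1,1]$ because $K\ge 3$. At this point the negative-correlation branch of the definitions applies, but the proposition covers it, so no extra case analysis is needed.

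The two inequalities follow directly. For the lower bound, I combine the first hypothesis in \eqref{eq:hypothesis-max-k-cut} with $P_K(a_K)\le \overline P_{K,M}(a_K)$:
\[
\underline\alpha^\mathrm{FJ}_K\le 1-\overline P_{K,M}(a_K)\le 1-P_K(a_K)=\alpha^\mathrm{FJ}_K.
\]
For the upper bound, I use $P_K(a_K)\ge \underline P_{K,M}(a_K)$ together with the second hypothesis:
\[
\alpha^\mathrm{FJ}_K=1-P_K(a_K)\le 1-\underline P_{K,M}(a_K)\le \overline\alpha^\mathrm{FJ}_K.
\]

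There is no real obstacle here. The only point needing care is that Theorem~\ref{thm:max-k-cut-approx-ratio} requires $K\ge 3$, which guarantees $a_K\ge -1/2$ and hence the monotonicity from Theorem~\ref{thm:negative-ratio-monotonicity}. The substantive work, namely the exact rational checks of \eqref{eq:hypothesis-max-k-cut} for specific $K$, lies outside this proposition.
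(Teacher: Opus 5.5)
Your proposal is correct and follows the paper's proof: you enclose $P_K(a_K)$ between $\underline P_{K,M}(a_K)$ and $\overline P_{K,M}(a_K)$ via Proposition~\ref{prop:theory-Pk-LB-UB}, combine this with the two hypotheses, and identify $1-P_K(a_K)$ with $\alpha^\mathrm{FJ}_K$ using Theorem~\ref{thm:max-k-cut-approx-ratio}.
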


\begin{proof}
By Proposition~\ref{prop:theory-Pk-LB-UB}, we have $\underline{P}_{K,M}(a_K)\leq P_K(a_K)\leq \overline{P}_{K,M}(a_K)$. 
Hence, the hypotheses in~\eqref{eq:hypothesis-max-k-cut} imply $\underline{\alpha}^\mathrm{FJ}_K \leq 1 - P_{K}(a_K) \leq \overline{\alpha}^\mathrm{FJ}_K$, 
and by Theorem~\ref{thm:max-k-cut-approx-ratio}, $\underline{\alpha}^\mathrm{FJ}_K\leq \alpha^\mathrm{FJ}_K\leq \overline{\alpha}^\mathrm{FJ}_K$. 
\end{proof}

\begin{table}[t]
\centering
\caption{Rigorous lower and upper bounds $\underline{\alpha}^\mathrm{FJ}_K$ and $\overline{\alpha}^\mathrm{FJ}_K$ on the tight approximation ratio $\alpha^\mathrm{FJ}_K$ of Algorithm~\ref{alg:max-k-cut}. The best known explicit approximation ratios from \cite{Goemans+04,Frieze+97} are raw values reported there. The best known ratios from \cite{Newman18} are the rounded-down values computed using the closed-form representation of the collision probability of the rounding scheme proposed there. The numerical estimates of $\alpha^\mathrm{FJ}_K$ are obtained through numerical integrations~\cite{deKlerk+04}, and thus are not necessarily rigorous.}\label{tab:max-k-cut}
\begin{tabular}{rcclc}
\toprule
$K$ & $\underline{\alpha}^\mathrm{FJ}_K$ & $\overline{\alpha}^\mathrm{FJ}_K$ & Best explicit ratio & Numerical estimate \\
\midrule
$3$
& $0.836008$
& $0.836009$
& $0.836008$~\cite{Goemans+04}
& $0.836008$~\cite{deKlerk+04} \\

$4$
& $0.857487$
& $0.857488$
& $0.850304$~\cite{Frieze+97}
& $0.857487$~\cite{deKlerk+04} \\

$5$
& $0.876609$
& $0.876610$
& $0.874243$~\cite{Frieze+97}
& $0.876610$~\cite{deKlerk+04} \\

$6$
& $0.891534$
& $0.891535$
& $0.877011$~\cite{Newman18}
& $0.891543$~\cite{deKlerk+04} \\

$7$
& $0.903259$
& $0.903260$
& $0.889378$~\cite{Newman18}
& $0.903259$~\cite{deKlerk+04} \\

$8$
& $0.912664$
& $0.912665$
& $0.899750$~\cite{Newman18}
& $0.912664$~\cite{deKlerk+04} \\

$9$
& $0.920366$
& $0.920367$
& $0.908479$~\cite{Newman18}
& $0.920367$~\cite{deKlerk+04} \\

$10$
& $0.926786$
& $0.926787$
& $0.926642$~\cite{Frieze+97}
& $0.926788$~\cite{deKlerk+04} \\

$11$
& $0.932222$
& $0.932223$
& $0.922227$~\cite{Newman18}
& NA \\

$12$
& $0.936885$
& $0.936886$
& $0.927708$~\cite{Newman18}
& NA \\

$13$
& $0.940931$
& $0.940932$
& $0.932486$~\cite{Newman18}
& NA \\

$14$
& $0.944476$
& $0.944477$
& $0.936684$~\cite{Newman18}
& NA \\

$15$
& $0.947609$
& $0.947610$
& $0.940400$~\cite{Newman18}
& NA \\

$16$
& $0.950398$
& $0.950399$
& $0.943710$~\cite{Newman18}
& NA \\

\bottomrule
\end{tabular}
\end{table}

We verify the hypotheses in~\eqref{eq:hypothesis-max-k-cut} in exact rational arithmetic: 
\begin{proposition}
For each $K=3,\dots, 16$, with the exact rational lower and upper bounds on $c_{k,m}$ generated as described in Section~\ref{subsec:coefficient-implementation}, 
\begin{equation*}
\underline{\alpha}^\mathrm{FJ}_K \leq 1 - \overline{P}_{K,32}(a_K) \quad 
\text{and}\quad 1 - \underline{P}_{K,32}(a_K) \leq \overline{\alpha}^\mathrm{FJ}_K, 
\end{equation*}
where $\underline{\alpha}^\mathrm{FJ}_K$ and $\overline{\alpha}^\mathrm{FJ}_K$ are those presented in Table~\ref{tab:max-k-cut}. 
\end{proposition}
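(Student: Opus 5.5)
This statement is a finite, exactly checkable computation, so the plan is to evaluate both sides of each inequality in exact rational arithmetic. For each $K\in\{3,\dots,16\}$ we use the point $a_K=-1/(K-1)$, which is rational and lies in $[-1/2,0)$. We also use the rational coefficient bounds $\underline c_{K,m},\overline c_{K,m}$ for $0\le m\le 32$ from the second reusable collection ($2\le k\le16$, $0\le m\le32$) of Section~\ref{subsec:coefficient-implementation}; their validity is guaranteed by Proposition~\ref{prop:main-one-dimensional-certificate}. Since $a_K\le 0$, only the negative-correlation branch of Proposition~\ref{prop:theory-Pk-LB-UB} is needed.

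Concretely, we set $T_K=1-\sum_{m=0}^{32}\underline c_{K,m}$, using $\underline c_{K,0}=1/K$. Then
\[
\overline P_{K,32}(a_K)=\sum_{m\ \mathrm{even}}\overline c_{K,m}a_K^m+\sum_{m\ \mathrm{odd}}\underline c_{K,m}a_K^m+|a_K|^{33}T_K,
\]
and $\underline P_{K,32}(a_K)$ is obtained symmetrically, with the upper and lower bounds swapped and the tail term subtracted. All quantities are rationals, and powers of $-1/(K-1)$ are exact. We then compare $1-\overline P_{K,32}(a_K)$ with the tabulated decimal $\underline\alpha^{\mathrm{FJ}}_K$ (read as an exact rational), and compare $1-\underline P_{K,32}(a_K)$ with $\overline\alpha^{\mathrm{FJ}}_K$. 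This is fourteen pairs of rational comparisons, which a short verification script checks with no floating point.

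The only genuine risk is whether the bracket is tight enough: the two table entries differ by only $10^{-6}$. The width of $[\underline P_{K,32}(a_K),\overline P_{K,32}(a_K)]$ has two sources. The first is the coefficient enclosure widths. These are at the scale $10^{-40}$ (40-digit rounding and an $R=8$ tail), so they are negligible. The second is the truncation term $2|a_K|^{33}T_K$, which is the binding one. For $K=3$ we have $|a_K|=1/2$, so $2^{-32}T_3\lesssim 10^{-10}$, since $T_3<1$. For larger $K$ the term is even smaller. Hence the total width is far below $10^{-6}$.

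The remaining point is the placement of the true value $P_K(a_K)$, which is estimated numerically, relative to the 6-digit grid. The table values must be chosen by rounding $1-\overline P$ down and $1-\underline P$ up; the grid spacing is then $10^{-6}$ whenever the $\sim10^{-10}$-wide enclosure does not straddle a grid point. If it did straddle one, I would simply widen by one ulp, but the table as stated indicates this did not occur. Finally, I would double-check that the parity assignment of lower and upper coefficient bounds matches the sign of $a_K^m$, since a swap there is the most plausible source of error.
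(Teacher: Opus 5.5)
Your proposal is correct and is essentially the paper's own argument. The proposition is established by directly evaluating, in exact rational arithmetic, the $\rho\le0$ branch of Proposition~\ref{prop:theory-Pk-LB-UB} at $a_K=-1/(K-1)$ using the $M=32$ coefficient data, and comparing the results against the tabulated values.

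One side remark is inaccurate but harmless. The coefficient enclosures are not $10^{-40}$-wide. For pairs with $r(\lambda)=0$, the $R=8$ tail bound is of order $e^{-32}\sum_j|s_j|\,8^{j-1}$. This is already about $10^{-14}$ at $m=1$, and it grows quickly with $m$ because the coefficientwise bound ignores the cancellation in $H_m$. However, each degree-$m$ error enters the bracket multiplied by $|a_K|^m\le 2^{-m}$, so your conclusion that the bracket width is far below $10^{-6}$ still holds. In any case, the exact check does not rely on this estimate.
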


Thus, we have the desired result: 
\begin{theorem}\label{thm:max-k-cut}
For each $K=3,\dots, 16$, it holds that $\underline{\alpha}^\mathrm{FJ}_K\leq \alpha^\mathrm{FJ}_K\leq \overline{\alpha}^\mathrm{FJ}_K$, where $\underline{\alpha}^\mathrm{FJ}_K$ and $\overline{\alpha}^\mathrm{FJ}_K$ are those presented in Table~\ref{tab:max-k-cut}. 
\end{theorem}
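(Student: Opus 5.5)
The plan is to obtain Theorem~\ref{thm:max-k-cut} as an immediate combination of the two preceding propositions. Fix $K\in\{3,\dots,16\}$ and take $M=32$. The coefficient data for $2\le k\le 16$ and $0\le m\le 32$ generated in Section~\ref{subsec:coefficient-implementation} supply certified rational bounds $\underline c_{K,m}\le c_{K,m}\le \overline c_{K,m}$ for all $m\le 32$; this is justified by the second part of Proposition~\ref{prop:main-one-dimensional-certificate}. The verified proposition immediately above then states that the two inequalities in \eqref{eq:hypothesis-max-k-cut} hold with $M=32$ and with the tabulated values $\underline\alpha^{\mathrm{FJ}}_K$ and $\overline\alpha^{\mathrm{FJ}}_K$. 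Feeding this into the previous proposition, which relies on Proposition~\ref{prop:theory-Pk-LB-UB} and Theorem~\ref{thm:max-k-cut-approx-ratio}, gives $\underline\alpha^{\mathrm{FJ}}_K\le 1-P_K(a_K)=\alpha^{\mathrm{FJ}}_K\le\overline\alpha^{\mathrm{FJ}}_K$.

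The substantive checking sits inside the computational proposition. Since $a_K=-1/(K-1)\in[-1/2,0)$, we evaluate the negative-$\rho$ branches of $\underline P_{K,32}$ and $\overline P_{K,32}$ at the rational point $a_K$. These branches are finite rational expressions: sums of $\underline c$ or $\overline c$ times $a_K^m$, chosen according to the parity of $m$, plus or minus the tail term $|a_K|^{33}\bigl(1-\sum_{m\le 32}\underline c_{K,m}\bigr)$. We compute them in exact arithmetic over $\mathbb Q$ and compare with the table entries.

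The main obstacle is precision rather than logic. The window between $1-\overline P_{K,32}(a_K)$ and $1-\underline P_{K,32}(a_K)$ must fit inside the $10^{-6}$ gap given in the table. This requires two things. First, the tail term must be tiny: $|a_K|^{33}\le 2^{-33}\approx 10^{-10}$ for $K=3$, and it is smaller for larger $K$. Second, the coefficient enclosures must be tight. With $R=8$, working precision of 256 bits and 40-digit rounding, the coefficient intervals have width far below $10^{-6}$. The case $K=3$ has the slowest tail decay and is therefore where I would check the margin first.
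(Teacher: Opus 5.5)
Your proposal is correct and matches the paper's argument: the theorem is obtained by combining the sufficient-condition proposition (built on Proposition~\ref{prop:theory-Pk-LB-UB} and Theorem~\ref{thm:max-k-cut-approx-ratio}) with the exact rational verification of \eqref{eq:hypothesis-max-k-cut} at $M=32$. One caveat, which affects only your precision heuristic and not the proof: with $R=8$, the tail bound of Lemma~\ref{lem:coefficient-tail-bound} for the pair $a=m$, $\lambda=\{0,\dots,0\}$ (decay rate $\mu=1$) makes the enclosures of $c_{K,m}$ for $m$ near $24$--$32$ much wider than $10^{-6}$, so the window at $a_K$ closes because degree-$m$ widths are damped by $|a_K|^m\le 2^{-m}$, not because all coefficient intervals are uniformly tight.
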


As in Table~\ref{tab:max-k-cut}, the gap between our rigorous lower and upper bounds $\underline{\alpha}^\mathrm{FJ}_K$ and $\overline{\alpha}^\mathrm{FJ}_K$ is at most $10^{-6}$, thereby pinning down the tight approximation ratio $\alpha^\mathrm{FJ}_K$ to within $10^{-6}$. 
In particular, our lower bounds improve upon the previous state-of-the-art explicit approximation ratios for $K=4,\dots, 16$. 
It is worth noting that in the literature, the numerical estimates of $\alpha^\mathrm{FJ}_K$ given by \citet*{deKlerk+04} have often been treated as rigorous lower bounds on $\alpha^\mathrm{FJ}_K$. However, for $K=6$ and $10$, their reported estimates exceed our rigorous upper bounds $\overline{\alpha}^\mathrm{FJ}_K$, confirming that these values should be regarded only as numerical approximations rather than certified lower bounds.

\subsection{Modularity Maximization}\label{sec:modularity}

Let $G=(V,E)$ be a simple undirected graph consisting of $n=|V|$ nodes and $m=|E|$ edges. Assume that $G$ is nonempty, i.e., $|E|\geq 1$. 
We denote by $A$ the adjacency matrix of $G$, that is, $A_{ij}=1$ if $\{i,j\}\in E$ and $A_{ij}=0$ otherwise. 
We also denote by $d_i$ the degree of $i\in V$, i.e., $d_i=\sum_{j\in V}A_{ij}$. 
Let $\mathcal{C}$ be a clustering of $V$. 
The modularity of $\mathcal{C}$, introduced by \citet{Newman+04}, can be written as 
\begin{equation*}
Q(\mathcal{C})=\frac{1}{2m}\sum_{(i,j)\in V^2}\left(A_{ij}-\frac{d_id_j}{2m}\right)\1[\mathcal{C}(i)=\mathcal{C}(j)], 
\end{equation*}
where $V^2$ denotes $V\times V$. 
Modularity maximization is the problem to find a clustering $\mathcal{C}$ of $V$ that maximizes $Q(\mathcal{C})$. 
Let $\mathrm{OPT}$ be the optimal value of the problem. 

Let us introduce some notation. For simplicity, we define 
\begin{equation*}
q_{ij}=\frac{A_{ij}}{2m}-\frac{d_id_j}{4m^2}\quad \text{for each } (i,j)\in V^2. 
\end{equation*}
Then the modularity can be rewritten as 
\begin{equation*}
Q(\mathcal{C})=\sum_{(i,j)\in V^2}q_{ij}\,\1[\mathcal{C}(i)=\mathcal{C}(j)].
\end{equation*}
Moreover, the set $V^2$ can be divided into two disjoint subsets: 
\begin{equation*}
V^2_{\geq 0}=\{(i,j)\in V^2 : q_{ij}\geq 0\}\quad \text{and}\quad V^2_{<0}=\{(i,j)\in V^2 : q_{ij}<0\}. 
\end{equation*}
Clearly, we have 
\begin{equation*}
\sum_{(i,j)\in V^2_{\geq 0}}q_{ij}+\sum_{(i,j)\in V^2_{<0}}q_{ij}=\sum_{(i,j)\in V^2}q_{ij}=0,
\end{equation*}
and thus, 
\begin{equation*}
\sum_{(i,j)\in V^2_{\geq 0}}q_{ij} = \sum_{(i,j)\in V^2_{<0}}-q_{ij}. 
\end{equation*}
We denote this value by $q$, i.e., 
\begin{equation*}
q=\sum_{(i,j)\in V^2_{\geq 0}}q_{ij}. 
\end{equation*}
Clearly, we have $q < 1$. 

We employ the following standard SDP relaxation used by \citet*{Dinh+15} and \citet*{Kawase+21}: 
\begin{alignat*}{4}
&\text{maximize} &\quad &\sum_{(i,j)\in V^2} q_{ij}(\bm{v}_i\cdot \bm{v}_j)\\
&\text{subject to} &&\bm{v}_i\cdot \bm{v}_i = 1 &&\text{for all } i\in V,\\
&                  &&\bm{v}_i\cdot \bm{v}_j \geq 0 &&\text{for all } i,j\in V,\\
&                  &&\bm{v}_i\in \mathbb{R}^n &&\text{for all } i\in V.
\end{alignat*}
Let $\mathrm{OPT}_\mathrm{SDP}$ be the optimal value of this SDP, 
and let $(\bm{v}^*_i)_{i\in V}$ be its optimal solution. Define 
\begin{equation*}
z^*_+=\frac{1}{q}\sum_{(i,j)\in V^2_{\geq 0}}q_{ij}(\bm{v}^*_i\cdot \bm{v}^*_j)\quad 
\text{and}\quad 
z^*_-=\frac{1}{q}\sum_{(i,j)\in V^2_{<0}}q_{ij}(\bm{v}^*_i\cdot \bm{v}^*_j). 
\end{equation*}
Clearly, we have $0\leq z^*_+\leq 1$ and $-1\leq z^*_-\leq 0$. 
Moreover, $\mathrm{OPT}_\mathrm{SDP}=q(z^*_+ + z^*_-)\geq \mathrm{OPT}$. 

We now present our algorithm. 
It first solves the SDP relaxation and obtains an optimal solution $(\bm{v}^*_i)_{i\in V}$. 
Then, it applies the $7$-spokes rounding to $(\bm{v}^*_i)_{i\in V}$ and outputs the resulting clustering $\mathcal{C}_7$. 
The algorithm is presented in Algorithm~\ref{alg:modularity}. 
\begin{algorithm}[h]
\SetKwInput{Input}{Input}
\caption{Our algorithm for modularity maximization}\label{alg:modularity}
\Input{$G=(V,E)$}
Solve the SDP relaxation and obtain an optimal solution $(\bm{v}^*_i)_{i\in V}$\;
$\mathcal{C}_7\leftarrow \text{$7$-spokes}((\bm{v}^*_i)_{i\in V})$\;
\Return $\mathcal{C}_7$.
\end{algorithm}

Our goal in this section is to prove that Algorithm~\ref{alg:modularity} has an expected additive approximation error of $0.3790$. 
Unlike our analyses for the previous problems, the analysis here is not edge-wise. 
Indeed, we use the following key lemma, which is the $k$-spokes rounding counterpart of \cite[Lemma~3]{Kawase+21}: 
\begin{lemma}\label{lem:modularity-key}
For any integer $k\geq 1$, it holds that 
\begin{equation*}
\E[Q(\mathcal{C}_{k})] \geq \mathrm{OPT} - q \left(z^*_+ - P_k(z^*_+) + \frac{1}{k} \right). 
\end{equation*}
\end{lemma}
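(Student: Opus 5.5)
We will prove the lemma by computing $\E[Q(\mathcal{C}_k)]$ pair by pair and treating the two sign classes $V^2_{\geq 0}$ and $V^2_{<0}$ separately. Write $\rho_{ij}=\bm{v}^*_i\cdot\bm{v}^*_j\in[0,1]$. For $i=j$ we have $\rho_{ii}=1$ and $P_k(1)=1$. By linearity,
\[
\E[Q(\mathcal{C}_k)]=\sum_{(i,j)\in V^2_{\geq 0}}q_{ij}P_k(\rho_{ij})+\sum_{(i,j)\in V^2_{<0}}q_{ij}P_k(\rho_{ij}).
\]
Since $\mathrm{OPT}\le \mathrm{OPT}_\mathrm{SDP}=q(z^*_++z^*_-)$, it suffices to show
\[
\E[Q(\mathcal{C}_k)]\ \ge\ q\bigl(P_k(z^*_+)-\tfrac1k\bigr)+q z^*_- .
\]
We bound the two sums separately against these two terms.

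\emph{Negative part.} For $(i,j)\in V^2_{<0}$ we have $q_{ij}<0$, so we need an upper bound $P_k(\rho)\le \rho+\tfrac1k$ on $[0,1]$. This should follow from the Hermite expansion~\eqref{eq:Pk-hermite-series}. For $\rho\in[0,1]$, $P_k(\rho)=\tfrac1k+\sum_{m\ge1}c_{k,m}\rho^m\le \tfrac1k+\rho\sum_{m\ge1}c_{k,m}=\tfrac1k+(1-\tfrac1k)\rho\le \rho+\tfrac1k$, using $c_{k,m}\ge0$ and $\sum_{m\ge1}c_{k,m}=1-1/k$. Multiplying by $q_{ij}<0$ and summing gives
\[
\sum_{V^2_{<0}}q_{ij}P_k(\rho_{ij})\ \ge\ \sum_{V^2_{<0}}q_{ij}\rho_{ij}+\tfrac1k\sum_{V^2_{<0}}q_{ij}=qz^*_- - \tfrac qk .
\]
The slightly stronger bound $\tfrac1k+(1-\tfrac1k)\rho$ is not needed.

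\emph{Positive part.} Here $q_{ij}\ge0$ and we want $\sum_{V^2_{\geq0}}q_{ij}P_k(\rho_{ij})\ge qP_k(z^*_+)$. The weights $q_{ij}/q$ form a probability distribution on $V^2_{\ge0}$ whose mean of $\rho_{ij}$ is $z^*_+$. So this is Jensen's inequality, provided $P_k$ is convex on $[0,1]$. Convexity holds because on $[0,1]$, $P_k$ is a power series with nonnegative coefficients, hence $P_k''\ge0$ on $[0,1)$. Continuity at $1$, where the series converges absolutely since $\sum c_{k,m}=1$, extends this to $[0,1]$. The case $k=1$ is trivial since $P_1\equiv1$, or it is covered by the same argument. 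Adding the two parts gives the target inequality:
\[
\E[Q(\mathcal{C}_k)]\ \ge\ q\bigl(P_k(z^*_+)+z^*_--\tfrac1k\bigr)=\mathrm{OPT}_\mathrm{SDP}-q\bigl(z^*_+-P_k(z^*_+)+\tfrac1k\bigr).
\]

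The main care point is the sign bookkeeping: the negative pairs need an \emph{upper} bound on $P_k$, while the positive pairs need convexity via Jensen. The convexity step is the one place where the Hermite structure, namely nonnegative coefficients, is genuinely used, and we must make sure it is justified up to the endpoint $\rho=1$. Diagonal pairs $(i,i)$ need no separate treatment, since $q_{ii}=-d_i^2/4m^2\le0$ and $\rho_{ii}=1$ fits both bounds.
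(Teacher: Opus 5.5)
Your proposal is correct and follows essentially the same route as the paper. Both use Jensen's inequality, via convexity of $P_k$ on $[0,1]$, for the pairs in $V^2_{\geq 0}$, and the chord bound $P_k(\rho)\le \frac1k+(1-\frac1k)\rho$ for the pairs in $V^2_{<0}$, then discard the nonpositive $\frac1k z^*_-$ contribution. The only cosmetic differences are that you derive the chord bound directly from coefficient nonnegativity rather than from convexity, and you drop the $\rho/k$ term pair by pair rather than after summing (the paper drops $P_k(0)z^*_-$ at the end).
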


\begin{proof}
By the Hermite expansion~\eqref{eq:Pk-hermite-series}, we know that $P_k(\rho)$ is convex over $[0,1]$. Then we have 
\begin{align*}
\E[Q(\mathcal{C}_k)] 
&=\sum_{(i,j)\in V^2}q_{ij}P_k(\bm{v}^*_i\cdot \bm{v}^*_j)\\
&=q\left(\sum_{(i,j)\in V^2_{\geq 0}}\frac{q_{ij}}{q}P_k(\bm{v}^*_i\cdot \bm{v}^*_j) + \sum_{(i,j)\in V^2_{<0}}\frac{q_{ij}}{q}P_k(\bm{v}^*_i\cdot \bm{v}^*_j)\right)\\
&\geq q\left(P_k(z^*_+) - \sum_{(i,j)\in V^2_{<0}}\left(-\frac{q_{ij}}{q}\right)P_k(\bm{v}^*_i\cdot \bm{v}^*_j)\right)\\
&\geq q\left(P_k(z^*_+) - \sum_{(i,j)\in V^2_{<0}}\left(-\frac{q_{ij}}{q}\right)\left(P_k(0)+(1-P_k(0))(\bm{v}^*_i\cdot \bm{v}^*_j)\right)\right)\\
&= q\left(P_k(z^*_+) - P_k(0) + (1-P_k(0))z^*_- \right), 
\end{align*}
where the first inequality follows from Jensen's inequality, 
and the second inequality follows from the fact that the convex function never goes beyond the straight line connecting its extreme points. 
We then relate this to $\mathrm{OPT}_\mathrm{SDP}$ (and thus to $\mathrm{OPT}$): 
\begin{align*}
&q\left(P_k(z^*_+) - P_k(0) + (1-P_k(0))z^*_- \right) \\
&= \left(\mathrm{OPT}_\mathrm{SDP}-q(z^*_+ + z^*_-)\right) + q\left(P_k(z^*_+) - P_k(0) + (1-P_k(0))z^*_- \right) \\
&= \mathrm{OPT}_\mathrm{SDP} - q\left(z^*_+ - P_k(z^*_+) + P_k(0) + P_k(0)z^*_- \right) \\
&\geq \mathrm{OPT}_\mathrm{SDP} - q\left(z^*_+ - P_k(z^*_+) + P_k(0) \right) \\
&\geq \mathrm{OPT} - q\left(z^*_+ - P_k(z^*_+) + \frac{1}{k} \right), 
\end{align*}
where the first inequality follows from the nonpositivity of $z^*_-$ and the positivity of $P_k(0)=1/k$. 
This completes the proof. 
\end{proof}

This lemma suggests that the additive approximation error $0.3790$ can be certified using the lower bound $\underline P_{7,M}(\rho)$ from Proposition~\ref{prop:theory-Pk-LB-UB}: 
\begin{proposition}
Fix $M\in \mathbb{N}_{0}$. Suppose that for each $m=0,\dots, M$, we have certified lower bounds $\underline{c}_{7,m}\leq c_{7,m}$. 
Recall $\underline{P}_{k,M}(\rho)$ defined in Proposition~\ref{prop:theory-Pk-LB-UB}.
If 
\begin{equation}\label{eq:modularity-hypothesis}
x - \underline{P}_{7,M}(x) + \frac{1}{7} \leq 0.3790 \quad \text{for all } x \in [0,1], 
\end{equation}
then it holds that 
\begin{equation*}
\E[Q(\mathcal{C}_7)] > \mathrm{OPT} - 0.3790.
\end{equation*}
\end{proposition}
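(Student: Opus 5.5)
The plan is to combine Lemma~\ref{lem:modularity-key} with $k=7$ and Proposition~\ref{prop:theory-Pk-LB-UB}, evaluated at the single point $x=z^*_+\in[0,1]$, and then to use $q<1$ to turn the factor $q$ into a strict inequality. Lemma~\ref{lem:modularity-key} gives
\[
\E[Q(\mathcal{C}_7)] \ge \mathrm{OPT} - q\left(z^*_+ - P_7(z^*_+) + \frac17\right).
\]
Since $0\le z^*_+\le 1$, Proposition~\ref{prop:theory-Pk-LB-UB} (branch $\rho\in[0,1]$) yields $P_7(z^*_+)\ge \underline{P}_{7,M}(z^*_+)$. Therefore
\[
z^*_+ - P_7(z^*_+) + \frac17 \le z^*_+ - \underline{P}_{7,M}(z^*_+) + \frac17 \le 0.3790,
\]
where the last inequality is hypothesis~\eqref{eq:modularity-hypothesis} at $x=z^*_+$.

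Before multiplying by $q$, I would check that the error term $e:=z^*_+ - P_7(z^*_+)+\frac17$ is nonnegative. Convexity of $P_7$ on $[0,1]$ (nonnegative Hermite coefficients) puts $P_7$ below its chord from $(0,\frac17)$ to $(1,1)$. Hence $P_7(x)\le \frac17+\frac67 x$, so $e\ge \frac{x}{7}\ge 0$. With $0\le e\le 0.3790$ and $0\le q<1$, we get $q e\le 0.3790\,q<0.3790$. This gives $\E[Q(\mathcal{C}_7)]\ge \mathrm{OPT}-qe>\mathrm{OPT}-0.3790$.

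The only delicate point is the strictness, which rests on $q<1$. I would justify it directly: $q=\sum_{(i,j)\in V^2_{\ge0}} q_{ij}\le \sum_{(i,j):A_{ij}=1}\frac{A_{ij}}{2m}=1$, since the diagonal pairs and non-adjacent pairs have $q_{ij}\le 0$. The inequality is strict because each edge pair loses $\frac{d_id_j}{4m^2}>0$ and $m\ge1$. Otherwise the argument is a direct substitution, and the real content (the Bernstein or exact check of~\eqref{eq:modularity-hypothesis}) is deferred to the computational verification.
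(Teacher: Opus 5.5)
Your proof is correct and follows the paper's argument: apply Lemma~\ref{lem:modularity-key} with $k=7$, replace $P_7(z^*_+)$ by $\underline{P}_{7,M}(z^*_+)$ using the $[0,1]$ branch of Proposition~\ref{prop:theory-Pk-LB-UB}, apply the hypothesis at $x=z^*_+$, and use $0\le q<1$ for strictness. The convexity check that $e\ge 0$ is harmless but unnecessary, since $qe\le 0.3790\,q$ already follows from $q\ge 0$ and $e\le 0.3790$; your explicit derivation of $q<1$ fills in a step the paper only calls clear.
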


\begin{proof}
By Proposition~\ref{prop:theory-Pk-LB-UB}, we have $P_7(x)\geq \underline{P}_{7,M}(x)$ for all $x\in [0,1]$. 
Together with Lemma~\ref{lem:modularity-key}, we have 
\begin{equation*}
\E[Q(\mathcal{C}_7)] \geq \mathrm{OPT} - q\left(z^*_+ - P_7(z^*_+) + \frac{1}{7} \right) 
\geq \mathrm{OPT} - q\left(z^*_+ - \underline{P}_{7,M}(z^*_+) + \frac{1}{7} \right). 
\end{equation*}
Hence, the hypothesis~\eqref{eq:modularity-hypothesis} implies $\E[Q(\mathcal{C}_7)]\geq \mathrm{OPT}-0.3790\,q>\mathrm{OPT}-0.3790$. 
\end{proof}

We verify the hypothesis~\eqref{eq:modularity-hypothesis} in exact rational arithmetic using Bernstein certificates: 
\begin{proposition}
With the exact rational lower bounds on $c_{k,m}$ generated as described in Section~\ref{subsec:coefficient-implementation}, 
\begin{equation*}
x - \underline{P}_{7,32}(x) + \frac{1}{7} \leq 0.3790 \quad \text{for all } x \in [0,1]. 
\end{equation*}
\end{proposition}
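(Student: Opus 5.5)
The statement is a finite, computer-checkable inequality. Since the argument is restricted to $x\in[0,1]$, Proposition~\ref{prop:theory-Pk-LB-UB} gives $\underline P_{7,32}(x)=\sum_{m=0}^{32}\underline c_{7,m}x^m$, a polynomial with exact rational coefficients taken from the stored certified data (with $\underline c_{7,0}=1/7$). The plan is to form
\[
q(x)=0.3790-\frac17-x+\sum_{m=0}^{32}\underline c_{7,m}x^m=\frac{3790}{10000}-x+\sum_{m=1}^{32}\underline c_{7,m}x^m,
\]
which is a rational polynomial of degree $32$, and to certify $q\ge0$ on $[0,1]$ using Lemma~\ref{lem:bernstein-nonnegativity}. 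The constant terms $1/7$ cancel exactly, so no rounding enters.

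The steps are as follows.

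First, load the exact rationals $\underline c_{7,m}$ for $1\le m\le 32$ from the second coefficient collection ($2\le k\le16$, $m\le32$).

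Second, convert $q$ to the Bernstein basis of degree $32$ on $[0,1]$ via the standard exact formula
\[
\beta_i=\sum_{j\le i}\binom{i}{j}\binom{32}{j}^{-1}a_j,
\]
where $a_j$ are the monomial coefficients.

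Third, if every $\beta_i$ is nonnegative, stop. Otherwise, recursively bisect any subinterval carrying a negative coefficient, using de Casteljau subdivision in exact arithmetic, until all pieces have nonnegative Bernstein coefficients. Termination is guaranteed whenever $q>0$ strictly on $[0,1]$, as noted after Lemma~\ref{lem:bernstein-nonnegativity}. The resulting partition together with its coefficients is the certificate.

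At the endpoints the margin can be read off directly. At $x=0$ we get $q(0)=0.379>0$. At $x=1$ we get $q(1)=0.379-1+\sum_{m\ge1}\underline c_{7,m}$. Here $\sum_{m\ge1} c_{7,m}=6/7$, so this value is about $0.379-1+0.857-\text{(missing tail mass)}\approx 0.236$ minus the truncated tail beyond degree $32$. That margin should be comfortable.

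I expect the main obstacle to be the interior. The quantity $x-P_7(x)$ is concave-like in the relevant range: it is $0$ at $x=1$ and $-1/7$ at $x=0$, with a maximum in the interior near where $P_7'(x)=1$. The value $0.3790$ was presumably chosen close to $\max_x\bigl(x-P_7(x)+1/7\bigr)$, so $q$ will have a small positive minimum near that interior point. Since the approximation error of the truncated lower bound is small at moderate $x$ (the truncation loss is at most $x^{33}$ times the tail mass), the slack should be on the order of $10^{-4}$. This should suffice, but it may force fine subdivision around the minimizer; in that case I would first bisect adaptively only near the minimizer, located numerically beforehand. If the certificate fails there, the fallback is to tighten the coefficient data; the truncation error is not the issue at interior $x$. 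All computations are carried out in exact rational arithmetic, so the verification is rigorous.
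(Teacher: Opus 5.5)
Your proposal is correct and takes the paper's approach. On $[0,1]$, $\underline P_{7,32}$ is the rational polynomial $\sum_{m=0}^{32}\underline c_{7,m}x^m$, and the paper certifies nonnegativity of $0.3790-\frac17-x+\underline P_{7,32}(x)$ there in exact rational arithmetic using Bernstein coefficients with subdivision (Lemma~\ref{lem:bernstein-nonnegativity}), just as you describe.
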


Thus, we have the desired result: 
\begin{theorem}\label{thm:modularity}
Algorithm~\ref{alg:modularity} has an expected additive approximation error of $0.3790$ for modularity maximization. 
\end{theorem}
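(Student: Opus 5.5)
The theorem follows by combining the preceding two propositions, mirroring the pattern used for \maxagree\ and Max $K$-Cut. First I would invoke Lemma~\ref{lem:modularity-key} with $k=7$, which gives
\[
\E[Q(\mathcal{C}_7)]\ge \mathrm{OPT}-q\bigl(z^*_+-P_7(z^*_+)+\tfrac17\bigr).
\]
The only input needed from the SDP solution is $z^*_+\in[0,1]$, and this holds because every inner product in the SDP is nonnegative and at most $1$.

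Next I would replace $P_7$ by the certified lower bound $\underline P_{7,32}$ from Proposition~\ref{prop:theory-Pk-LB-UB}, which is valid on all of $[0,1]$. This gives
\[
z^*_+-P_7(z^*_+)+\tfrac17\le z^*_+-\underline P_{7,32}(z^*_+)+\tfrac17 .
\]
The preceding computer-verified proposition bounds the right-hand side by $0.3790$ uniformly in $x\in[0,1]$. So the error term is at most $0.3790\,q$. Since $0\le q<1$, we get $\E[Q(\mathcal{C}_7)]\ge \mathrm{OPT}-0.3790$, which is exactly the additive guarantee claimed.

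The only step that is not bookkeeping is the uniform polynomial inequality. On $[0,1]$ the function $x-\underline P_{7,32}(x)+\frac17$ is a single rational polynomial. I would verify nonnegativity of $0.3790-x+\underline P_{7,32}(x)-\frac17$ via Bernstein certificates (Lemma~\ref{lem:bernstein-nonnegativity}) with adaptive bisection. The expected difficulty is near the maximizer of $x-P_7(x)$, where the slack is tight. There one may need fine subdivision, or the degree-$32$ truncation may be what makes the tail term $\rho^{M+1}$ negligible. I would also check that the truncation loss $\sum_{m>32}c_{7,m}x^m$ is small enough near $x=1$. This is not a real issue here, because $x-P_7(x)\to 0$ at $x=1$, which leaves ample slack away from the interior maximizer.

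Polynomial-time solvability of the SDP (up to negligible additive error) and of the rounding is standard, so the theorem follows.
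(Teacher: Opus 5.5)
Your proposal is correct and follows the paper's proof essentially verbatim. You apply Lemma~\ref{lem:modularity-key} with $k=7$, use $P_7\ge \underline P_{7,32}$ on $[0,1]$ at the point $z^*_+\in[0,1]$, invoke the Bernstein-certified inequality $x-\underline P_{7,32}(x)+\frac17\le 0.3790$, and finish with $q<1$. One small inaccuracy in your commentary: on $[0,1]$ the lower bound $\underline P_{7,M}$ has no $\rho^{M+1}$ tail term, since that term appears only in $\overline P_{7,M}$; the truncation simply discards the nonnegative tail $\sum_{m>32}c_{7,m}x^m$, which is the loss you then correctly observe is harmless near $x=1$.
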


Finally, we note that as long as the analysis relies on Lemma~\ref{lem:modularity-key} followed only by the bound $q<1$, 
Algorithm~\ref{alg:modularity} is nearly optimal within the family consisting of the $k$-spokes roundings for $1\leq k\leq 32$ 
and the $k$-hyperplane roundings for $k\geq 1$. 
For any rounding with collision probability $P(\rho)$ that is convex over $\rho \in [0,1]$, define
\begin{equation*}
\mathrm{error}_P(x)=x-P(x)+P(0)\quad \text{for } x\in [0,1]. 
\end{equation*}
This is the additive approximation error derived by applying the analysis in Lemma~\ref{lem:modularity-key} to the general rounding. 
To show the near optimality of Algorithm~\ref{alg:modularity}, since $\mathrm{error}_P$ is linear in $P$,  
it suffices to find a single point $x'$ at which every rounding in the family has a large error close to $0.3790$. 
We use $x'=4/5$.

For the $k$-spokes roundings, $P_k(0)=1/k$. 
With the exact rational lower bounds on $c_{k,m}$ 
generated as described in Section~\ref{subsec:coefficient-implementation}, 
we can verify the following in exact rational arithmetic: 
\begin{equation*}
\mathrm{error}_{P_k}(4/5) = \frac45-P_k(4/5)+\frac1k \geq \frac45-\overline{P}_{k,24}(4/5)+\frac1k \ge 0.3785 \quad \text{for each } k=1,\ldots,32.
\end{equation*}
For the $k$-hyperplane roundings, letting $P_k^{\mathrm{GW}}(x)$ be its collision probability, 
\begin{equation*}
P_k^{\mathrm{GW}}(x)
=
\left(1-\frac{\arccos x}{\pi}\right)^k\quad \text{and}\quad  P_k^{\mathrm{GW}}(0)=2^{-k}.
\end{equation*}
Since $\cos(\pi/5)=(1+\sqrt5)/4>4/5$, we have
\begin{equation*}
1-\frac{\arccos(4/5)}{\pi}<\frac45,
\end{equation*}
and therefore, 
\begin{equation*}
\mathrm{error}_{P_k^{\mathrm{GW}}}(4/5) > \frac45-\left(\frac45\right)^k+2^{-k} \ge \frac{41}{100} >0.3785\quad \text{for every }k\geq 1. 
\end{equation*}
Thus, for any mixture over the $k$-spokes roundings with $1\le k\le 32$ and the $k$-hyperplane roundings with $k\ge 1$, we have 
\begin{equation*}
\mathrm{error}_P(4/5) \ge 0.3785.
\end{equation*}
Consequently, under the analysis of Lemma~\ref{lem:modularity-key} followed only by the bound $q<1$,
no such mixture can certify an additive approximation error smaller than $0.3785$, 
verifying the near-optimality of Algorithm~\ref{alg:modularity} within the family under the same analysis.

\section{Inapproximability of \maxagree\ under the UGC}
\label{sec:raghavendra-gap}
We follow the approach outlined in Section~\ref{subsec:raghavendra-overview}
and give the details here.
Write $\operatorname{SDP}_{\mathrm{std}} = \operatorname{SDP}(0)$ for the standard
relaxation~\eqref{eq:maxagree_sdp} with $a=0$.  It maximizes
\[
  \sum_{ij\in E_+}w_{ij}\langle v_i,v_j\rangle
  +\sum_{ij\in E_-}w_{ij}\bigl(1-\langle v_i,v_j\rangle\bigr)
\]
over unit vectors $(v_i)_{i\in V}$ such that $\langle v_i,v_j\rangle\geq0$
for every pair of nodes.

Throughout this section, we allow nonnegative edge weights and assume
$E_+\cup E_-=\binom{V}{2}$, adding zero-weight edges on missing pairs.
This leaves the value of every clustering unchanged, and it matches
$\operatorname{SDP}_{\mathrm{std}}$, which constrains every pair of nodes,
not only the edges.  Accordingly, the canonical SDP below has a local
distribution for every pair of nodes.  The convention does not change the
problem, but it does change the canonical SDP in Raghavendra's original form,
which has local distributions only on the pairs of positive weight.
To establish an unconditional integrality gap, we need to take care of this distinction
(see Section~\ref{subsec:ugc-gap-transfer}).

For a maximization relaxation, we use the convention
\[
  \operatorname{gap}(\operatorname{SDP})
  :=\sup_I\frac{\operatorname{SDP}(I)}{\operatorname{OPT}(I)}\geq 1,
\]
where $I$ ranges over instances. Thus,
it corresponds to an integrality ratio of 
$1/\operatorname{gap}$.

\begin{theorem}[SDP integrality gap and UGC hardness]
\label{thm:maxagree-sdp-gap-hardness}
Let $\operatorname{SDP}_{\mathrm{std}}$ be the standard SDP relaxation in~\eqref{eq:maxagree_sdp},
and let
$
 \alpha_{\mathrm{SDP}}
 :=\frac1{\operatorname{gap}(\operatorname{SDP}_{\mathrm{std}})}.
$
Then 
    $
 \alpha_{\mathrm{SDP}}\leq 0.802. 
 $

Moreover, for every constant $\varepsilon>0$, it is UG-hard to approximate
weighted \maxagree\ within a factor $\alpha_{\mathrm{SDP}}+\varepsilon$.  In particular, it is UG-hard
to achieve a factor $0.802+\varepsilon$.
\end{theorem}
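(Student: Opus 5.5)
The plan follows the outline of Section~\ref{subsec:raghavendra-overview} and has three steps.

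\emph{Reduction to a fixed alphabet.} First, we pass to $\maxagree[k]$, the $2$-CSP over $[k]$ with payoffs $\operatorname{EQ}$ on positive edges and $\operatorname{NEQ}$ on negative edges. Two comparisons are needed.

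\begin{itemize}
\item[(i)] Integral optima. Map the clusters of an optimal clustering to $k$ uniformly random buckets. Every positive agreement is kept, and each negative agreement survives with probability $1-1/k$. Hence $(1-1/k)\operatorname{OPT}\leq\operatorname{OPT}_k\leq\operatorname{OPT}$.
\item[(ii)] Relaxations. Symmetrize a canonical-SDP solution over all label permutations; the only surviving statistic is $y_{ij}=\Pr[x_i=x_j]$. This turns the canonical SDP into the program with $y_{ii}=1$, $y_{ij}\geq0$ and $kY-J\succeq0$. In one direction, $kY-J\succeq0$ implies $Y\succeq0$, so $\operatorname{SDP}_k\leq\operatorname{SDP}_{\mathrm{std}}$. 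In the other, take an optimal Gram matrix $Y$ for $\operatorname{SDP}_{\mathrm{std}}$ and set $Y'=\tfrac1k J+(1-\tfrac1k)Y$. Its entries are still nonnegative, $kY'-J=(k-1)Y\succeq0$, and its objective is at least $(1-1/k)$ times that of $Y$.
\end{itemize}

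Together, (ii) gives $(1-\tfrac1k)\operatorname{SDP}_{\mathrm{std}}\le\operatorname{SDP}_k\le\operatorname{SDP}_{\mathrm{std}}$.

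\emph{UG-hardness.} Raghavendra's theorem says that for the fixed CSP $\maxagree[k]$ it is UG-hard to beat the integrality ratio of its canonical SDP by any constant. Taking $k$ large, the sandwiches above show that the canonical SDP's ratio for $\maxagree[k]$ tends to $\alpha_{\mathrm{SDP}}$. An instance of $\maxagree[k]$ is an instance of \maxagree, with optima within a factor $1-1/k$. So a $(\alpha_{\mathrm{SDP}}+\varepsilon)$-approximation for \maxagree\ would beat that CSP's ratio for $k\gg1/\varepsilon$. One subtlety: Raghavendra's canonical SDP only has local distributions on pairs of positive weight. For the hardness statement I would apply his theorem with that weaker SDP and our explicit gap instance $J$, which also yields the numerical $0.802+\varepsilon$.

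\emph{The numerical gap.} This is the main obstacle. The auxiliary instance lives on Gaussian space $\mathbb R^d$. It puts mass $w_-$ on negative edges between independent Gaussians and mass $w_+$ on positive edges between $3/4$-correlated Gaussians.

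\begin{itemize}
\item[(a)] Integral side. For a partition into cells of measures $\mu_c$, the negative agreement is $w_-(1-\sum_c\mu_c^2)$. The positive agreement is $w_+\sum_c\operatorname{Stab}_{3/4}(A_c)$. By Borell's inequality, each cell's stability is at most that of a halfspace of measure $\mu_c$. I would bound the halfspace stability by a quadratic $\alpha\mu+\beta\mu^2$, fitted tangentially with $\beta\ge0$ (to be checked numerically but rigorously, e.g.\ with interval arithmetic). Then choose $w_+\beta=w_-$ so that the $\sum\mu_c^2$ terms cancel; using $\sum_c\mu_c=1$, every clustering is bounded by a constant, targeted at $1203/1750$.
\item[(b)] Fractional side. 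Use the vectors $k^{-1/2}e_0+(1-1/k)^{1/2}x/\|x\|$. Negative edges get inner product near $1/k$, and positive edges get expected $\langle r_x,r_y\rangle$ close to $3/4$ as $d\to\infty$. This gives an SDP value tending to $6/7$ once the negative-inner-product pairs, which have vanishing weight, are deleted.
\item[(c)] Discretize to a finite instance $J$ by standard Feige--Schechtman arguments.
\end{itemize}

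This gives a gap of about $(1203/1750)/(6/7)<0.802$ for the weak relaxation. To get the unconditional bound for $\operatorname{SDP}_{\mathrm{std}}$, which constrains all pairs, I would apply the Raghavendra--Steurer strong-gap theorem to $J$. It produces an instance whose value for relaxations with consistent local distributions on boundedly many nodes (hence for $\operatorname{SDP}_k$) is preserved, while the integral value is kept. Combined with the sandwich, this yields $\alpha_{\mathrm{SDP}}\le0.802$.
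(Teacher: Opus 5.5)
Your proposal is correct and follows essentially the same route as the paper:
\begin{itemize}
\item hashing into $k$ buckets together with the sandwich lemma;
\item Raghavendra's gap-to-hardness transfer, using $\operatorname{SDP}_k\le\operatorname{SDP}^{\mathrm{cp}}_k$;
\item the Gaussian instance with positive edges of weight $\frac47$ drawn from $\nu_{3/4}$ and negative edges of weight $\frac37$ drawn from $\nu_0$, Borell applied cell by cell, and a quadratic bound whose $p^2$-coefficient $\frac34$ cancels $\sum_c\mu_c^2$;
\item the shifted vectors, for which you should state explicitly that their Gram matrix satisfies $kY-J_n=(k-1)(c_{xy})\succeq0$, since this is what makes them feasible for the canonical SDP of $\maxagree[k]$ to which the strong-gap theorem applies;
\item discretization, and finally Raghavendra--Steurer combined with the sandwich and hashing bounds.
\end{itemize}

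The only differences are minor. The paper proves $\Lambda_{3/4}(p)\le\frac34p^2+\frac{453}{1000}p$ by a convex--concave--convex tangent-line argument and a Jensen lower bound on $\Pr[U\le -r<V]$, which reduces it to two certified inequalities rather than interval arithmetic over all of $[0,1]$. Also, $(1203/1750)/(6/7)=401/500$ is exactly $0.802$, not strictly below it.
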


\subsection{Reduction to a Fixed Number of Clusters}
For an integer $k\geq 2$, let $\operatorname{OPT}_k(I)$ be the optimum
of $\maxagree[k]$, and let $\operatorname{OPT}(I)$ be the unrestricted
optimum.
The restriction to a fixed alphabet loses only a factor $1-1/k$:

\begin{lemma}
\label{lem:hash-clusters}
For every \maxagree\ instance $I$ and every $k\geq 2$,
\[
  \operatorname{OPT}_k(I)
  \geq \left(1-\frac1k\right)\operatorname{OPT}(I).
\]
\end{lemma}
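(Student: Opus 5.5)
We prove Lemma~\ref{lem:hash-clusters} by the random-hashing argument sketched in the overview. Fix an optimal unrestricted clustering $\mathcal{C}^*=\{C_1,\dots,C_t\}$ of $I$, so that $\mathsf{val}(\mathcal{C}^*)=\operatorname{OPT}(I)$; here $t$ may be as large as $n$. Draw a uniformly random map $h\colon[t]\to[k]$, with the values $h(1),\dots,h(t)$ independent and uniform, and define the merged clustering $\mathcal{C}_h$ by putting node $i$ in bucket $h(\ell)$ whenever $i\in C_\ell$. Discarding empty buckets, $\mathcal{C}_h$ has at most $k$ clusters, so it is feasible for $\maxagree[k]$. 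It is enough to show
\[
\E_h[\mathsf{val}(\mathcal{C}_h)]\geq \left(1-\tfrac1k\right)\mathsf{val}(\mathcal{C}^*),
\]
because some realization of $h$ then attains at least the expectation, which gives $\operatorname{OPT}_k(I)\geq(1-1/k)\operatorname{OPT}(I)$.

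The expectation is computed edge by edge using linearity of expectation.
\begin{itemize}
\item Positive edge $\{i,j\}\in E_+$ that $\mathcal{C}^*$ gets right: $i,j$ share a cluster $C_\ell$, so they stay together under every $h$. The edge contributes $w_{ij}\geq(1-1/k)w_{ij}$.
\item Negative edge $\{i,j\}\in E_-$ that $\mathcal{C}^*$ gets right: $i\in C_\ell$ and $j\in C_{\ell'}$ with $\ell\neq\ell'$. Since $h(\ell)$ and $h(\ell')$ are independent and uniform, they coincide with probability exactly $1/k$. The edge is therefore still satisfied with probability $1-1/k$.
\item Edges that $\mathcal{C}^*$ gets wrong: they contribute $0$ to $\mathsf{val}(\mathcal{C}^*)$, and their contribution to $\mathsf{val}(\mathcal{C}_h)$ is nonnegative because weights are nonnegative.
\end{itemize}
Summing over all edges yields the displayed inequality.

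No step here is a real obstacle. The only care needed is to take the hash values independent across clusters, which makes the collision probability exactly $1/k$ even when $t<k$ or $t>k$. The convention of zero-weight edges on missing pairs is harmless, since those edges contribute nothing to either side.
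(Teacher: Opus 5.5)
Your proposal is correct and follows the paper's own argument: hash each cluster of an optimal clustering independently and uniformly into $k$ buckets, note that satisfied positive edges are always kept, satisfied negative edges survive with probability $1-1/k$, and unsatisfied edges can only add value. Your closing step, that some realization of $h$ attains the expectation, is the same final step the paper uses implicitly when it applies the bound to an optimal clustering.
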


\begin{proof}
Independently hash every cluster of a given clustering $\mathcal C$ into one
of $k$ buckets, and merge clusters that receive the same bucket. 
Let $\operatorname{hash}_k(\mathcal C)$ be the resulting clustering. 
Let $A_+$ and $A_-$ be, respectively, the weights of the positive and negative edges
satisfied by $\mathcal C$.  Every satisfied positive edge remains satisfied.
The endpoints of a satisfied negative edge are in distinct original clusters,
and their buckets differ with probability $1-1/k$.  Edges not satisfied by
$\mathcal C$ can only increase the resulting value.  Consequently,
\[
  \mathbb E[\mathsf{val}(\operatorname{hash}_k(\mathcal C))]
  \geq A_++\left(1-\frac1k\right)A_-
  \geq \left(1-\frac1k\right)\mathsf{val}(\mathcal C).
\]
Applying this to an optimal clustering yields the result.
\end{proof}
The procedure can be derandomized: assign the clusters of an optimal \maxagree\ solution to $k$ buckets, each time choosing the bucket with the least negative edge weight to previously assigned clusters.

\subsection{The Canonical SDP for \texorpdfstring{$\maxagree[k]$}{MaxAgree[k]}}
\label{subsec:canonical-sdp}
We use the canonical CSP relaxation
of~\citet{Raghavendra08}, called the LC relaxation
in~\citet[Section~4.5]{Raghavendra09Thesis}, with the complete weighted graph
convention above: every pair of nodes, including pairs of zero weight, is
treated as a constraint and receives a local distribution.  There is one vector $u_{i,a}$ for each node $i$ and
label $a\in[k]$.  By~\citet[Observations~4.5.1--4.5.2]{Raghavendra09Thesis},
we may assume that the label vectors at each node are pairwise
orthogonal and sum to a common unit vector $u_0$.
Since each constraint involves two variables, its local distribution is
determined by the label-vector inner products.  Eliminating these
probability variables gives the equivalent vector program
\begin{alignat}{3}
\operatorname{SDP}_k(I)=\max\quad
 &\sum_{ij\in E_+}w_{ij}\sum_{a\in[k]}
       \langle u_{i,a},u_{j,a}\rangle
 +\sum_{ij\in E_-}w_{ij}\sum_{\substack{a,b\in[k]\\a\neq b}}
       \langle u_{i,a},u_{j,b}\rangle                                     \notag\\
\text{subject to}\quad
 &\|u_0\|^2=1,                                                             \notag\\
 &\sum_{a\in[k]}u_{i,a}=u_0
 &&(i\in V),                                                               \label{eq:basic-csp-sdp}\\
 &\langle u_{i,a},u_{i,b}\rangle=0
 &&(i\in V,\ a\neq b),                                                     \notag\\
 &\langle u_{i,a},u_{j,b}\rangle\geq0
 &&(ij\in\binom{V}{2},\ a,b\in[k]).                                      \notag
\end{alignat}
Conversely, given vectors feasible for~\eqref{eq:basic-csp-sdp}, the numbers
$
 \mu_{ij}^{a,b}:=\langle u_{i,a},u_{j,b}\rangle
 $
are nonnegative, sum to $\|u_0\|^2=1$, and have marginals $\|u_{i,a}\|^2$ and
$\|u_{j,b}\|^2$.  Together with the orthogonality of the label vectors at each
node, this is exactly what the LC relaxation requires of the local
distribution $\mu_{ij}$ of the pair $ij$.  The positive and negative payoffs
are respectively $\sum_a\mu_{ij}^{a,a}$ and $\sum_{a\neq b}\mu_{ij}^{a,b}$.

\paragraph{Symmetrization.}
The SDP admits a useful exact reduction to one scalar $y_{ij}$ per pair.
Both payoffs are invariant under applying the same permutation to all $k$
labels.  Take the normalized direct sum $\bigoplus$ of the $k!$ permutations
for each vector:
\[
  u'_{i,a}=\frac1{\sqrt{k!}}\bigoplus_{\pi\in S_k}u_{i,\pi(a)},\qquad
  u'_{0}=\frac1{\sqrt{k!}}\bigoplus_{\pi\in S_k}u_{0}.
\]
Their inner products are invariant under label permutations.  Hence, fixing
two distinct labels $1$ and $2$,
\begin{equation*}
  \langle u'_{i,a},u'_{j,b}\rangle
  =\frac1{k!}\sum_{\pi\in S_k}
  \langle u_{i,\pi(a)},u_{j,\pi(b)}\rangle =
  \begin{cases}
  \langle u'_{i,1},u'_{j,1}\rangle ,&a=b,\\
  \langle u'_{i,1},u'_{j,2}\rangle ,&a\ne b.
  \end{cases}
\end{equation*}
This symmetrization preserves feasibility and the objective function value.
Moreover,
\[
  1=\|u'_0\|^2
  =\Bigl\langle\sum_{a}u'_{i,a},\sum_{b}u'_{j,b}\Bigr\rangle
  =k\langle u'_{i,1},u'_{j,1}\rangle+k(k-1)\langle u'_{i,1},u'_{j,2}\rangle.
\]
Define $y_{ij}:=k\langle u'_{i,1},u'_{j,1}\rangle\in[0,1]$.  Then $y_{ii}=1$
and
\begin{equation}
\label{eq:symmetric-label-inner-products}
  \langle u'_{i,a},u'_{j,b}\rangle=
  \begin{cases}
  \frac{y_{ij}}{k},&a=b,\\
  \frac{1-y_{ij}}{k(k-1)},&a\ne b.
  \end{cases}
\end{equation}
Let
\[
  r_i=\sqrt{\frac{k}{2}}\,(u'_{i,1}-u'_{i,2}).
\]
Using~\eqref{eq:symmetric-label-inner-products},
\[
  \langle r_i,r_j\rangle
  =\frac{k}{2}\left(\frac{2y_{ij}}k
  -\frac{2(1-y_{ij})}{k(k-1)}\right)
  =\frac{ky_{ij}-1}{k-1},
  \qquad\text{that is,}\qquad
  y_{ij}=\frac1k+\Bigl(1-\frac1k\Bigr)\langle r_i,r_j\rangle.
\]
In particular, $\langle r_i,r_j\rangle\geq-1/(k-1)$ and
$\langle r_i,r_i\rangle=1$.  With $a=-1/(k-1)$, the payoffs
in~\eqref{eq:maxagree_sdp} are $s^+_a(\rho)=\frac1k+(1-\frac1k)\rho$ and
$s^-_a(\rho)=(1-\frac1k)(1-\rho)$, so $s^+_a(\langle r_i,r_j\rangle)=y_{ij}$
and $s^-_a(\langle r_i,r_j\rangle)=1-y_{ij}$.  Thus the $r_i$ are feasible
for $\operatorname{SDP}(-1/(k-1))$, the relaxation used by our $\maxagree[k]$
algorithm, and have the same value.

Conversely, let $(v_i)_{i\in V}$, with $v_i\in\mathbb R^n$, be feasible for
$\operatorname{SDP}(-1/(k-1))$, and put
$y_{ij}=\frac1k+(1-\frac1k)\langle v_i,v_j\rangle$, so that
$0\leq y_{ij}\leq1$.  Let $e_1,\ldots,e_k$ be the standard basis of
$\mathbb R^k$, put $\bar e=\frac1k\sum_ae_a$, and let $u_0$ be a unit vector
orthogonal to $\mathbb R^k\otimes\mathbb R^n$ (here $\otimes$ is the tensor
product).  The vectors
\begin{equation}
\label{eq:label-vectors-from-sdp-a}
  u_{i,a}=\frac1k\,u_0+\frac1{\sqrt k}\,(e_a-\bar e)\otimes v_i
  \qquad(i\in V,\ a\in[k])
\end{equation}
satisfy $\sum_au_{i,a}=u_0$, because $\sum_a(e_a-\bar e)=0$, and, since
$\langle e_a-\bar e,e_b-\bar e\rangle=\mathbf1[a=b]-\frac1k$,
\[
  \langle u_{i,a},u_{j,b}\rangle
  =\frac1{k^2}+\frac1k\Bigl(\mathbf1[a=b]-\frac1k\Bigr)
   \langle v_i,v_j\rangle.
\]
For $i=j$ this equals $\mathbf1[a=b]/k$, so the label vectors of each node 
are orthogonal; for $i\neq j$ it equals $y_{ij}/k$ if $a=b$ and
$(1-y_{ij})/(k(k-1))\geq0$ otherwise.  Hence these vectors are feasible
for~\eqref{eq:basic-csp-sdp}, already symmetric in the sense
of~\eqref{eq:symmetric-label-inner-products}, and their value
$\sum_{ij\in E_+}w_{ij}y_{ij}+\sum_{ij\in E_-}w_{ij}(1-y_{ij})$ is the value
of $(v_i)$ in $\operatorname{SDP}(-1/(k-1))$.  Together with the forward
direction, this shows that $\operatorname{SDP}_k(I)$ is the
optimum of $\operatorname{SDP}(-1/(k-1))$ on $I$.

In terms of $Y=(y_{ij})$ and the all-ones matrix $J_n$, the same program
reads
\begin{alignat}{3}
\operatorname{SDP}_k(I)=\max\quad
 &\sum_{ij\in E_+}w_{ij}y_{ij}
  +\sum_{ij\in E_-}w_{ij}(1-y_{ij})                                      \notag\\
\text{subject to}\quad
 &y_{ii}=1 &&(i\in V),                                                     \label{eq:reduced-basic-sdp}\\
 &y_{ij}\geq0 &&(ij\in\binom{V}{2}),                                     \notag\\
 &kY-J_n\succeq0,                                                         \notag
\end{alignat}
because $(kY-J_n)/(k-1)$ is exactly the Gram matrix of the $v_i$: its
diagonal is $1$, and its entries are at least $-1/(k-1)$ precisely when
$y_{ij}\geq0$.

\paragraph{SDP comparison.}
Write $\operatorname{SDP}_{\mathrm{std}}(I)$ for the optimal value
$\mathrm{OPT}_{\mathrm{SDP}}$ of the standard SDP in~\eqref{eq:maxagree_sdp}
with parameter $a=0$.
The only difference from~\eqref{eq:reduced-basic-sdp} is that the standard
SDP requires $Y\succeq0$ in place of $kY-J_n\succeq0$.
The following comparison quantifies the effect of this additional
constraint, which vanishes as $k\to\infty$.

\begin{lemma}[Sandwich]
\label{lem:basic-sdp-sandwich}
For every \maxagree\ instance $I$ and every $k\geq2$,
\[
  \left(1-\frac1k\right)\operatorname{SDP}_{\mathrm{std}}(I)
  \;\leq\;
  \operatorname{SDP}_k(I)
  \;\leq\;
  \operatorname{SDP}_{\mathrm{std}}(I).
\]
\end{lemma}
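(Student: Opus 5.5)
The plan is to work entirely with the reduced program~\eqref{eq:reduced-basic-sdp}, which the preceding discussion shows equals $\operatorname{SDP}_k(I)$, and to compare it with $\operatorname{SDP}_{\mathrm{std}}(I)$. In terms of $Y=(y_{ij})$ with $y_{ij}=\langle v_i,v_j\rangle$, the standard SDP maximizes the same linear objective $\sum_{E_+}w_{ij}y_{ij}+\sum_{E_-}w_{ij}(1-y_{ij})$. Its constraints are $y_{ii}=1$, $y_{ij}\ge0$, and $Y\succeq0$.

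\emph{Upper bound.} I will show that every $Y$ feasible for~\eqref{eq:reduced-basic-sdp} is feasible for the standard SDP. The only constraint to check is $Y\succeq0$. This follows from $kY=(kY-J_n)+J_n$, which is a sum of two positive semidefinite matrices, since $J_n=\mathbf 1\mathbf 1^\top\succeq0$. The objectives coincide, so $\operatorname{SDP}_k(I)\le\operatorname{SDP}_{\mathrm{std}}(I)$.

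\emph{Lower bound.} Take an optimal $Y^*$ for the standard SDP and shrink it toward $J_n$. Define
\[
Y'=\frac1k J_n+\Bigl(1-\frac1k\Bigr)Y^*,
\]
that is, $y'_{ij}=\frac1k+(1-\frac1k)y^*_{ij}$. This is exactly the map $y=\frac1k+(1-\frac1k)\rho$ used earlier, applied with $\rho=y^*_{ij}\ge0$.

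Feasibility of $Y'$:
\begin{itemize}
\item $y'_{ii}=1$.
\item $y'_{ij}\ge \frac1k\ge0$.
\item $kY'-J_n=(k-1)Y^*\succeq0$.
\end{itemize}
Equivalently, the $v_i$ themselves are feasible for $\operatorname{SDP}(-1/(k-1))$, because $0\ge -1/(k-1)$.

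Value of $Y'$:
\begin{itemize}
\item A positive edge contributes $y'_{ij}\ge(1-\frac1k)y^*_{ij}$.
\item A negative edge contributes $1-y'_{ij}=(1-\frac1k)(1-y^*_{ij})$.
\end{itemize}
Since all weights are nonnegative, summing gives $\operatorname{SDP}_k(I)\ge(1-\frac1k)\operatorname{SDP}_{\mathrm{std}}(I)$.

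There is no real obstacle here. The only point needing care is relying on the already established equivalence between $\operatorname{SDP}_k$ and the reduced program~\eqref{eq:reduced-basic-sdp}, that is, on the symmetrization argument and the explicit converse construction~\eqref{eq:label-vectors-from-sdp-a}. With that in hand, both inequalities are one-line matrix checks.
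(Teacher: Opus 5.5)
Your proposal is correct and follows essentially the same route as the paper. For the upper bound you write $Y$ as a sum of $\frac1k(kY-J_n)$ and $\frac1kJ_n$. For the lower bound you use the shift $Y'=(1-\frac1k)Y+\frac1kJ_n$, with the same feasibility check $kY'-J_n=(k-1)Y\succeq0$ and the same edge-by-edge value comparison.
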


\begin{proof}
\emph{Upper bound.}  As noted above, the two programs differ only in their
positive semidefiniteness constraints, and $kY-J_n\succeq0$ implies
$Y\succeq0$, because
$
  Y=\frac1k(kY-J_n)+\frac1kJ_n
  $
is a sum of two positive semidefinite matrices.  Hence every $Y$ feasible
for~\eqref{eq:reduced-basic-sdp} is feasible for
$\operatorname{SDP}_{\mathrm{std}}$, with the same value.

\emph{Lower bound.}  Let $Y$ be feasible for the
program defining $\operatorname{SDP}_{\mathrm{std}}$ and put
\[
  Y'=\left(1-\frac1k\right)Y+\frac1kJ_n,
  \qquad\text{that is,}\qquad
  y'_{ij}=y_{ij}+\frac{1-y_{ij}}{k}.
\]
Then $y'_{ii}=1$, $y'_{ij}\geq0$ on every pair, and
$kY'-J_n=(k-1)Y\succeq0$, so $Y'$ is feasible for
\eqref{eq:reduced-basic-sdp}.  On a positive edge
$y'_{ij}\geq(1-1/k)\,y_{ij}$, and on a negative edge
$1-y'_{ij}=(1-1/k)(1-y_{ij})$.  Hence the value of $Y'$ is at least
$(1-1/k)$ times the value of $Y$.
\end{proof}
The map $Y\mapsto Y'$ has a random hashing interpretation: if $y_{ij}$
is read as the probability that
$i$ and $j$ share a cluster, then $y'_{ij}$ is the probability that they
share a bucket after the clusters are hashed into $k$ buckets, and
\[
  \mu_{ij}(a,a)=\frac{y_{ij}}{k}+\frac{1-y_{ij}}{k^2},
  \qquad
  \mu_{ij}(a,b)=\frac{1-y_{ij}}{k^2}\quad(a\neq b),
\]
is the joint distribution of the two bucket labels; compare
\eqref{eq:symmetric-label-inner-products}.  As in
Lemma~\ref{lem:hash-clusters}, positive agreements are preserved and negative
agreements retain at least a $1-1/k$ fraction.

\subsection{UGC-Based Inapproximability}
\label{subsec:ugc-gap-transfer}
We prove the hardness statement of
Theorem~\ref{thm:maxagree-sdp-gap-hardness} in this subsection.
Raghavendra's canonical SDP has local distributions only on the pairs of
positive weight, which form the support of the instance.  Write
$\operatorname{SDP}^{\mathrm{cp}}_k(I)$, where ``$\mathrm{cp}$'' stands for constraint pairs, 
for this relaxation, that is,~\eqref{eq:basic-csp-sdp} with cross-node nonnegativity required only
for pairs $ij$ of positive weight, and measure its integrality gap against
$\operatorname{OPT}_k$.  Equivalently, in~\eqref{eq:reduced-basic-sdp},
require $y_{ij}\geq0$ only on these pairs.  Every solution feasible for
$\operatorname{SDP}_k(I)$ is feasible for
$\operatorname{SDP}^{\mathrm{cp}}_k(I)$, with the same objective value, so
\[
 \operatorname{SDP}_k(I)\leq\operatorname{SDP}^{\mathrm{cp}}_k(I).
\]

To derive the inapproximability results under the UGC,
normalize the total
constraint weight of a fixed CSP instance to one.  If
$\operatorname{SDP}^{\mathrm{cp}}_k$ has value at least $c$ on it and its integral
optimum is at most $s$, then, assuming the Unique Games Conjecture, for every
$\eta>0$ there is a hard promise problem whose completeness is arbitrarily
close to $c$ and whose soundness is arbitrarily close to
$s$~\citep{Raghavendra08}.  In particular, a gap point of
$\operatorname{SDP}^{\mathrm{cp}}_k$ gives a matching UGC hardness ratio, and
so does a gap point of $\operatorname{SDP}_k$, because
$\operatorname{SDP}_k(I)\leq\operatorname{SDP}^{\mathrm{cp}}_k(I)$.

To transfer hardness to unrestricted \maxagree, view the instances of this
promise problem as \maxagree\ instances.  Since
$\operatorname{OPT}_k\leq\operatorname{OPT}\leq\operatorname{OPT}_k/(1-1/k)$ by
Lemma~\ref{lem:hash-clusters}, those with $\operatorname{OPT}_k\geq c-\eta$
have $\operatorname{OPT}\geq c-\eta$, and those with
$\operatorname{OPT}_k\leq s+\eta$ have
$\operatorname{OPT}\leq(s+\eta)/(1-1/k)$.  Consequently, under the UGC, a gap
point $(c,s)$ of $\operatorname{SDP}^{\mathrm{cp}}_k$ rules out approximating
\maxagree\ within any ratio strictly greater than $s/[c(1-1/k)]$.

\begin{proof}[Proof of the hardness statement in Theorem~\ref{thm:maxagree-sdp-gap-hardness}]
Fix $\alpha>\alpha_{\mathrm{SDP}}$.  By the
definition of the integrality gap, there is a finite weighted instance $I$
satisfying
\[
 \frac{\operatorname{OPT}(I)}{\operatorname{SDP}_{\mathrm{std}}(I)}<\alpha.
\]
Choose a sufficiently large constant $k$ so that
\[
 \frac{\operatorname{OPT}(I)}
      {(1-1/k)^2\operatorname{SDP}_{\mathrm{std}}(I)}<\alpha.
\]
Lemma~\ref{lem:basic-sdp-sandwich} and
$\operatorname{OPT}_k(I)\leq\operatorname{OPT}(I)$ imply
\[
 \frac{\operatorname{OPT}_k(I)}
      {(1-1/k)\operatorname{SDP}_k(I)}
 \leq
 \frac{\operatorname{OPT}(I)}
      {(1-1/k)^2\operatorname{SDP}_{\mathrm{std}}(I)}<\alpha.
\]
After normalizing the constraint weights, apply the gap-to-hardness
transfer above to this fixed instance and choose its additive error small
enough to preserve the strict inequality.  It is therefore UG-hard to achieve approximation factor
$\alpha$ for unrestricted \maxagree.  Taking
$\alpha=\alpha_{\mathrm{SDP}}+\varepsilon$ proves the claim.
\end{proof}

\subsection{A Gaussian Construction with Integrality Ratio at Most
\texorpdfstring{$0.802$}{0.802}}
\label{sec:gaussian-maxagree-gap}
\label{sssec:gaussian-gap-instance}
This subsection proves the unconditional integrality-gap bound in
Theorem~\ref{thm:maxagree-sdp-gap-hardness}.  We first construct a gap for
the relaxation on constraint pairs, then transfer it to
$\operatorname{SDP}_{\mathrm{std}}$ in
Theorem~\ref{thm:gaussian-maxagree-gap}.

Recall that $\gamma_d$ denotes the standard Gaussian probability measure on $\mathbb R^d$.
Write
$\operatorname{vol}_\gamma(A)=\gamma_d(A)$ for the Gaussian volume of a
measurable set $A\subseteq\mathbb R^d$.  We use $\mathbb R^d$ as a
continuous node set.  An instance is specified by a distribution over signed pairs of nodes,
in the same manner           
as a finite weighted instance can be specified by sampling an edge
proportionally to its weight.

For $\rho\in[-1,1]$, let $\nu_\rho$ denote the joint distribution of a pair
of $\rho$-correlated standard Gaussian vectors in $\mathbb R^d$.
In particular,
$\nu_0=\gamma_d\otimes\gamma_d$ (the product measure) describes independent
Gaussians.
Our instance has the following positive and negative edge-weight measures in $\mathbb R^{2d}$:
\begin{equation*}
 \mu^+=\frac47\nu_{3/4},\qquad
 \mu^-=\frac37\nu_0.
\end{equation*}
The weights and correlation are motivated by the parameter optimization
discussed below.
This deterministic weighted instance can be described by a random constraint sampler:
with probability $4/7$, the constraint distribution samples a $3/4$-correlated Gaussian pair
and asks for equal labels; and with probability $3/7$ it samples independent
Gaussian endpoints and asks for different labels.  

Both measures are symmetric, and their masses sum to one.  Since the
payoffs are symmetric in the endpoints, we may regard each sampled pair
as an undirected edge.

Note that in order to obtain a finite graph approximation of this construction, it does not suffice to sample constraints;
instead, the instance needs to be discretized, as explained later.

\paragraph{Integral solutions and noise stability of partitions.}
A $k$-clustering is a measurable labeling
$\ell:\mathbb R^d\to[k]$, giving rise to a partition
$\mathcal A=(A_1,\ldots,A_k)$ with $A_i=\ell^{-1}(i)$.

Write $p_i=\operatorname{vol}_\gamma(A_i)$, so $\sum_i p_i=1$.
The \emph{noise stability of the partition} is
the 
probability that a  $\rho$-correlated perturbation of a random Gaussian vector
preserves its label in the partition:
\begin{equation*}
\begin{aligned}
 P_{\mathcal A}(\rho)
 =\Pr_{(X,Y)\sim\nu_\rho}
       [\ell(X)=\ell(Y)]
 =\sum_{i=1}^k
   \Pr_{(X,Y)\sim\nu_\rho}
       [X\in A_i,\ Y\in A_i].
\end{aligned}
\end{equation*}
Each summand is the Gaussian noise stability
$\operatorname{Stab}_\rho[\mathbf1_{A_i}]$ in the notation of
\citet[Chapter~11]{o2014analysis}.  
At $\rho=0$, independence gives, for $p_i>0$,
$\Pr[Y\in A_i\mid X\in A_i]=p_i$, and hence
$P_{\mathcal A}(0)=\sum_i p_i^2$.  

The agreement value is the probability that a sampled constraint is satisfied:
\begin{equation}
\label{eq:gaussian-gap-integral-objective}
\begin{aligned}
 A(\mathcal A)
 &=\int\mathbf1[\ell(x)=\ell(y)]\,d\mu^+(x,y)
   +\int\mathbf1[\ell(x)\ne\ell(y)]\,d\mu^-(x,y)\\
 &=\frac47P_{\mathcal A}(3/4)
   +\frac37\left(1-\sum_i p_i^2\right).
\end{aligned}
\end{equation}
The fractional objective below uses the same measures and normalization.
Positive edges favor partitions whose labels survive a correlated
perturbation.  Negative edges favor spreading the Gaussian mass over many
cells, so that independent endpoints rarely receive the same label.  The
gap comes from the incompatibility of these two requirements.

\subsubsection{Borell's inequality and a partition-independent bound}
\label{sssec:gaussian-gap-borell}

Following \citet[Chapter~11]{o2014analysis}, for $\rho\in[-1,1]$ and
$p,q\in(0,1)$ define the \emph{Gaussian quadrant probability}, with
$\nu_\rho$ taken in dimension $d=1$,
\begin{equation*}
 \Lambda_\rho(p,q)=
 \Pr_{(U,V)\sim \nu_\rho }
       [U\le\Phi^{-1}(p),\ V\le\Phi^{-1}(q)],
 \qquad
 \Lambda_\rho(p)=\Lambda_\rho(p,p).
\end{equation*}
Extend these functions continuously to $p,q\in[0,1]$; in particular,
$\Lambda_\rho(0)=0$ and $\Lambda_\rho(1)=1$.
The diagonal value $\Lambda_\rho(p)$ is
$\operatorname{Stab}_\rho[\mathbf1_H]$ for any halfspace $H$ of Gaussian
volume $p$, such as $H=\{x:x_1\le\Phi^{-1}(p)\}$.  

\begin{theorem}[Borell's Gaussian noise-stability inequality~\citep{Borell85}]
\label{thm:gaussian-gap-borell}
For every $d\ge1$, $0<\rho<1$, and measurable $A\subseteq\mathbb R^d$,
\[
 \Pr_{(X,Y)\sim\nu_\rho}
       [X\in A,Y\in A]
 =\operatorname{Stab}_\rho[\mathbf1_A]
 \le \Lambda_\rho(\operatorname{vol}_\gamma(A)).
\]
\end{theorem}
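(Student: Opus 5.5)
We would not reprove Borell's inequality from scratch by symmetrization. Instead we would follow the semigroup argument of Mossel and Neeman, which proves the stronger two-function statement and specializes to the claim. The target is: for measurable $f,g\colon\mathbb R^d\to[0,1]$ and $0<\rho<1$,
\[
 \E_{(X,Y)\sim\nu_\rho}[f(X)g(Y)]\le \Lambda_\rho\bigl(\E f,\E g\bigr).
\]
Taking $f=g=\mathbf 1_A$ recovers $\operatorname{Stab}_\rho[\mathbf 1_A]\le\Lambda_\rho(\operatorname{vol}_\gamma(A))$. The starting observation is that $J:=\Lambda_\rho$ satisfies $J(a,b)=ab$ whenever $a,b\in\{0,1\}$. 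Hence $\E[J(f(X),g(Y))]=\E[f(X)g(Y)]$ when $f,g$ are indicators, and in general the left-hand side is at most $\E[J(f(X),g(Y))]$, since $\Lambda_\rho$ dominates the product by a Fréchet-type bound for quadrant probabilities. Only the indicator case is needed here.

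\textbf{Interpolation.} Let $P_t$ denote the Ornstein--Uhlenbeck semigroup, so $P_t=\mathrm U_{e^{-t}}$ in the notation of the preliminaries. Define
\[
 F(t)=\E_{(X,Y)\sim\nu_\rho}\bigl[J(P_tf(X),P_tg(Y))\bigr].
\]
Then $F(0)=\E[J(f(X),g(Y))]$, and $F(t)\to J(\E f,\E g)$ as $t\to\infty$, because $P_tf\to\E f$ pointwise with bounded values. It therefore suffices to show that $F$ is nondecreasing. To avoid boundary issues (values $0$ or $1$, nonsmooth $f$), we first replace $f,g$ by $\varepsilon+(1-2\varepsilon)P_sf$ and the analogous expression for $g$. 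These are smooth and bounded away from $\{0,1\}$, so we can differentiate under the expectation, and at the end we let $\varepsilon,s\to0$.

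\textbf{Derivative computation.} We differentiate $F$ using $\partial_tP_t=\mathcal L P_t$ with $\mathcal L=\Delta-x\cdot\nabla$. We then integrate by parts with respect to the joint Gaussian $\nu_\rho$ on $\mathbb R^{2d}$, whose generator couples the two blocks through the cross term $\rho\,\nabla_x\cdot\nabla_y$. This should give
\[
 F'(t)=-\E\Bigl[\,J_{aa}\,|\nabla u|^2+2\rho J_{ab}\,\nabla u\cdot\nabla v+J_{bb}\,|\nabla v|^2\Bigr],
\]
with $u=P_tf(X)$ and $v=P_tg(Y)$. So $F'\ge0$ follows if the matrix
\[
 \begin{pmatrix}J_{aa}&\rho J_{ab}\\ \rho J_{ab}&J_{bb}\end{pmatrix}
\]
is negative semidefinite at every $(a,b)\in(0,1)^2$.

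\textbf{Main obstacle.} The main obstacle is verifying this semidefiniteness. We would try to do it by explicit computation in the coordinates $x=\Phi^{-1}(a)$ and $y=\Phi^{-1}(b)$. In these coordinates $J_a$ equals $\Phi\bigl((y-\rho x)/\sqrt{1-\rho^2}\bigr)$, and the second derivatives come out as explicit multiples of $\varphi$-ratios. We expect the determinant to vanish identically and the diagonal entries to be nonpositive, which would give semidefiniteness; this has to be checked carefully, including signs. Once that holds, integrating $F'\ge0$ from $0$ to $\infty$ and removing the regularization completes the proof.
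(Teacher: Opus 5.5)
Your proposal is correct and is essentially the argument the paper relies on: the paper does not prove Borell's inequality itself, but cites Borell and points to the semigroup proof of Mossel and Neeman (their Theorem~1.1), which is exactly the Ornstein--Uhlenbeck interpolation $F(t)$ you set up. The computation you defer does go through. With $x=\Phi^{-1}(a)$, $y=\Phi^{-1}(b)$ and $\varphi_\rho$ the $\rho$-correlated bivariate density, one gets $J_{aa}=-\rho\,\varphi_\rho(x,y)/\varphi(x)^2$, $J_{ab}=\varphi_\rho(x,y)/(\varphi(x)\varphi(y))$ and $J_{bb}=-\rho\,\varphi_\rho(x,y)/\varphi(y)^2$. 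Hence your matrix equals $-\rho\,\varphi_\rho(x,y)\,ww^{\top}$ with $w=(1/\varphi(x),-1/\varphi(y))$, which is rank one and negative semidefinite for $\rho>0$.
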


In words, among sets with a prescribed Gaussian measure, a halfspace is
most likely to contain both endpoints of a positively correlated Gaussian
pair.  
This is a theorem about arbitrary measurable sets, regardless of dimension.
Among its many proofs, the reader may check the one in \citet[Theorem~1.1]{MosselNeeman15Noise}.

Applying it separately to all the cells gives
\begin{equation*}
 P_{\mathcal A}(\rho)
 =\sum_i\operatorname{Stab}_\rho[\mathbf1_{A_i}]
 \le\sum_i\Lambda_\rho(p_i).
\end{equation*}

\paragraph{Upper-bounding the total agreement.}
Suppose that, for a fixed $0<\rho<1$ and constants $\alpha,\beta\geq0$,
\begin{equation*}
 \Lambda_\rho(p)\leq\alpha p^2+\beta p
 \qquad(0\leq p\leq1).
\end{equation*}
Writing $S=\sum_i p_i^2$ and summing over the cells gives
$P_{\mathcal A}(\rho)\leq\alpha S+\beta$.
More generally, put positive weight $a$ on $\nu_\rho$ and negative weight
$b$ on $\nu_0$, where $a,b\geq0$ and $a+b=1$.  The agreement of any
partition then satisfies
\begin{equation*}
 A_{a,b,\rho}(\mathcal A)
 =aP_{\mathcal A}(\rho)+b(1-S)
 \leq b+a\beta+(a\alpha-b)S.
\end{equation*}
We may cancel the dependence on $S$ by
choosing $b=a\alpha$, that is,
         $
 a=\frac1{1+\alpha},\ b=\frac\alpha{1+\alpha},
 $
 which yields
\begin{equation}\label{eq:A_cancel}
 A_{a,b,\rho}(\mathcal A) \leq  \frac{\alpha+\beta}{1+\alpha}.
\end{equation}

The fractional solution constructed below has value tending to
$a\rho+b=(\rho+\alpha)/(1+\alpha)$, so the resulting inverse-gap bound is
$(\alpha+\beta)/(\rho+\alpha)$.  Only $\rho$ and $\alpha$ are free: the
weights are fixed by the cancellation, and the smallest admissible $\beta$
is $\beta^*(\rho,\alpha)=\sup_{0<p\leq1}(\Lambda_\rho(p)-\alpha p^2)/p$.
Numerical exploration gives about $0.80192$, near $\rho\approx0.748$ and
$\alpha\approx0.745$.  We take the nearby simpler choices
$\rho=\alpha=3/4$, which give $a=4/7$ and $b=3/7$; then
$\beta^*\approx0.452895$ numerically, attained at $p\approx0.195$.

We verify our choice by establishing the following quadratic upper bound on Gaussian
noise stability:

\begin{lemma}
\label{lem:gaussian-gap-scalar}
For every $p\in[0,1]$,
\begin{equation*}
 \Lambda_{3/4}(p)\le\frac34p^2+\frac{453}{1000}\,p.
\end{equation*}
\end{lemma}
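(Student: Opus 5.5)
The plan is to reduce the inequality to a one-parameter statement in the threshold $t=\Phi^{-1}(p)$, and then certify it by an analytic argument on the two tails and a validated cell-by-cell check on a compact middle range, in the same spirit as the certification framework of Section~\ref{sec:details}. The key identity is the classical derivative formula for the bivariate normal orthant probability (Plackett/Sheppard):
\[
 \frac{\partial}{\partial r}\Lambda_r(p)=\frac{1}{2\pi\sqrt{1-r^2}}\exp\!\Bigl(-\frac{t^2}{1+r}\Bigr).
\]
Integrating from $r=0$, where $\Lambda_0(p)=p^2$, gives
\[
 \Lambda_{3/4}(p)=p^2+I(t),\qquad I(t):=\int_0^{3/4}\frac{1}{2\pi\sqrt{1-r^2}}\exp\!\Bigl(-\frac{t^2}{1+r}\Bigr)\,\mathrm dr.
\]
The lemma is then equivalent to
\[
 g(p):=I(\Phi^{-1}(p))+\tfrac14p^2-\tfrac{453}{1000}\,p\le0\quad\text{on }[0,1].
\]
The endpoints are immediate: $g(0)=0$ and $g(1)=-0.203$.

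For the middle range I would use monotone cell enclosures. Fix a compact window $t\in[-T_1,T_2]$ and partition it into cells $[t_0,t_1]$. On each cell, $p\in[\Phi(t_0),\Phi(t_1)]$. The integrand of $I$ is decreasing in $|t|$, so $I$ on the cell is at most $I(t^\ast)$, where $t^\ast$ is the point of the cell closest to $0$. The quadratic $\tfrac14p^2-\tfrac{453}{1000}p$ is decreasing on $[0,1]$, so on the cell it is at most its value at $\Phi(t_0)$. Hence
\[
 \max_{\text{cell}}g\le I(t^\ast)+\tfrac14\Phi(t_0)^2-\tfrac{453}{1000}\Phi(t_0).
\]
I would evaluate this upper bound in Arb ball arithmetic, using validated integration for $I$ (the integrand is smooth on $[0,3/4]$) and certified enclosures of $\Phi$, and check that the resulting upper bound is negative. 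Any cell where the check fails gets bisected.

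The two tails are handled analytically.
\begin{itemize}
 \item For $t\le-T_1$ (small $p$), bound $I(t)\le \frac{3/4}{2\pi\sqrt{1-9/16}}\,e^{-4t^2/7}$ and use the Mills-ratio lower bound $p\ge\varphi(|t|)\,|t|/(1+t^2)$. Then $I/p\lesssim (1+t^2)|t|^{-1}e^{-t^2/14}$, which falls below $0.453-\tfrac14p$ once $T_1$ is explicitly large. The choice $T_1\approx 8$ looks comfortably enough.
 \item For $t\ge T_2$ ($p$ near $1$), use $\Lambda_{3/4}(p)\le p$. It then suffices that $p\le\tfrac34p^2+0.453p$, i.e.\ $p\ge 0.547/0.75\approx0.73$, which holds for $T_2\approx0.7$. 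This also shows the right part of the window needs no numerical work.
\end{itemize}

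The main obstacle is the margin near the maximizer $p\approx0.195$ ($t\approx-0.86$). There $\beta^\ast\approx0.452895$ sits only about $10^{-4}$ below $0.453$, so the gap in $g$ is of order $2\cdot10^{-5}$. The cell bound loses roughly $|g'|\cdot$(cell width), so the cells near $t\approx-0.86$ must be about $10^{-4}$ wide or finer. I would first try adaptive bisection with a working precision of about $128$ bits. If the crude monotone bound is too lossy, I would replace it by a first-order Taylor enclosure of $g$ in $t$, using the explicit derivative $\mathrm dI/\mathrm dt$, which is again an elementary one-dimensional integral. Everything is then recorded as a finite list of cells with rational certified upper bounds, matching the style of Section~\ref{subsec:coefficient-implementation}.
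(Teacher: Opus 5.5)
Your plan is correct, but it takes a different route from the paper.

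\textbf{Shared starting point.} Your Plackett representation is essentially the paper's angular formula. Substituting $r=\cos s$ gives $I(t)=\frac1{2\pi}\int_{\theta}^{\pi/2}e^{-t^2/(1+\cos s)}\,ds$ with $\theta=\arccos\frac34$. The paper's $D=p-\Lambda_{3/4}(p)$ is the complementary integral over $[0,\theta]$.

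\textbf{Where the routes split.} The difference is how the inequality is extended to all $p\in[0,1]$.
\begin{itemize}
\item The paper never uses a grid. Writing $H=4g$, it computes $H''(p)=\frac{8}{\sqrt7}e^{3z^2/7}-6$ with $z=\Phi^{-1}(p)$, so $H$ is convex--concave--convex. A single tangent line $T$ at $p_0=\Phi(-56/65)$ then dominates $H$ on the concave middle piece. The convex end pieces lie below chords with nonpositive endpoint values. Everything thus reduces to $T(0)<0$ and $T(1)<0$. These involve only two values of $\Phi$ and a Jensen lower bound on $D$, and are checked with elementary series, with no validated quadrature.
\item Your approach needs no shape analysis and no well-chosen tangent point. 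It would also carry over unchanged to other $(\rho,\alpha,\beta)$ with positive margin. The price is a much larger certificate: validated integrals on thousands of cells, with widths of order $10^{-4}$ or less near $t\approx-0.86$, where the margin is only about $2\cdot10^{-5}$.
\end{itemize}

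\textbf{Two small corrections.} First, your monotone bound treats the two terms separately. So the loss per unit cell width is about $|I'(t)|+\varphi(t)\,\bigl|\tfrac12p-\tfrac{453}{1000}\bigr|\approx0.1+0.1$, not $|g'|$, which is nearly zero at the maximizer. Your cell-width estimate still comes out right. Second, $\tfrac14p^2-\tfrac{453}{1000}p$ is decreasing only for $p\le0.906$, not on all of $[0,1]$. This is harmless because your window ends at $t\approx0.7$, i.e.\ $p\approx0.76$.
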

Hence, by~\eqref{eq:A_cancel}, we have for any dimension and any arbitrary partition $\mathcal A$:
\begin{equation}
\label{eq:gaussian-gap-soundness}
A(\mathcal A) \le \frac{1203}{1750} \approx 0.68743.
\end{equation}

We prove this quadratic bound in
Section~\ref{sssec:gaussian-gap-certificate} using the angular
representation of the Gaussian quadrant probability, Jensen's inequality,
and elementary estimates.  

\subsubsection{The fractional solution and the resulting gap}
\label{sssec:gaussian-gap-fractional}

Write $r_x=x/\|x\|$ for $x\ne0$ and
$c_{xy}=\langle r_x,r_y\rangle$.  Reading $c_{xy}$ as the probability that
$x$ and $y$ share a cluster suggests the value $\frac{4}{7} \frac{3}{4} + \frac{3}{7} 1 = \frac{6}{7}$,
    but this directional kernel
is not a feasible solution of the canonical SDP, for two reasons.
First, it ignores the nonnegativity inner product constraints: $c_{xy}$ is negative on half of the independent pairs.  Second, it violates
the pigeonhole constraint $kY-J_n\succeq0$ of~\eqref{eq:reduced-basic-sdp},
which forces the average of $y$ over independent pairs to be at least $1/k$,
whereas $\mathbb E\,c_{XY}=0$ for independent $X$ and $Y$.  For fixed $k$,
the label vectors constructed below instead give the same-label kernel
\[
 y_k(x,y)=\frac1k+\left(1-\frac1k\right)c_{xy}.
\]
In the language of Section~\ref{subsec:canonical-sdp}, the directions $r_x$
themselves serve as the unit vectors of $\operatorname{SDP}(a)$ with
$a=-1/(k-1)$, and $y_k$ is the corresponding kernel.  The pigeonhole constraint then
holds automatically, since $kY-J_n=(k-1)(c_{xy})$ and $(c_{xy})$ is a Gram
matrix.
Equivalently, $y_k$ is the image of $c$ under the hashing map
$Y\mapsto(1-\frac1k)Y+\frac1kJ_n$ of Lemma~\ref{lem:basic-sdp-sandwich}.
The shift by $1/k$ also absorbs mildly negative values of $c_{xy}$: the
kernel $y_k$ is nonnegative on pairs with $c_{xy}\geq-1/(k-1)$.  For each
fixed $k$, the total weight of pairs violating this condition tends to zero
as $d\to\infty$.  We delete these pairs to obtain an auxiliary CSP, which
Lemma~\ref{lem:gaussian-gap-finite} discretizes.

For $(X,Y)\sim\nu_\rho$, the law of large numbers gives
$c_{XY}\to\rho$ in probability as $d\to\infty$.
For a same-label kernel $h$, the fractional objective uses the same
measures as~\eqref{eq:gaussian-gap-integral-objective}:
\begin{equation*}
 F(h)=\int h(x,y)\,d\mu^+(x,y)
      +\int(1-h(x,y))\,d\mu^-(x,y).
\end{equation*}
After deleting the violating pairs, the fractional value for fixed $k$
therefore approaches
\begin{equation}
\label{eq:gaussian-gap-limiting-value}
 \frac47\left(\frac34+\frac1{4k}\right)
 +\frac37\left(1-\frac1k\right)
 =\frac67-\frac{2}{7k},
\end{equation}
which tends to $6/7$ as $k\to\infty$.

Write $\operatorname{SDP}^{\mathrm{cp}}_{\mathrm{std}}(I)$ for the value of
$\operatorname{SDP}_{\mathrm{std}}$ when $\langle v_i,v_j\rangle\geq0$ is
required only for pairs $ij$ of positive weight.  The upper-bound argument in
Lemma~\ref{lem:basic-sdp-sandwich} also gives
$\operatorname{SDP}^{\mathrm{cp}}_k(I)\leq
\operatorname{SDP}^{\mathrm{cp}}_{\mathrm{std}}(I)$.

\begin{lemma}[Explicit gap instances]
\label{lem:gaussian-gap-finite}
Fix $k\ge2$.  For every $\varepsilon>0$, there is a finite weighted
\maxagree\ instance $J$, of total constraint weight one, such that
\[
 \operatorname{SDP}^{\mathrm{cp}}_{\mathrm{std}}(J)
 \geq\operatorname{SDP}^{\mathrm{cp}}_k(J)
 \ge D_k-\varepsilon,
 \qquad
 \operatorname{OPT}(J)\le s+\varepsilon,
\]
where
\begin{equation*}
 D_k=\frac67-\frac{2}{7k},
 \qquad s=\frac{1203}{1750}.
\end{equation*}
Here a pair may carry both an equality and an inequality constraint.
\end{lemma}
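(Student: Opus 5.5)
The plan is to discretize the Gaussian instance $(\mu^+,\mu^-)$ in a high dimension $d$, keeping the fractional solution built from the directions $r_x$ while controlling the integral optimum through the partition-independent bound~\eqref{eq:gaussian-gap-soundness}.

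\emph{Step 1: dimension and deleted pairs.} Fix $k$ and $\varepsilon$. Choose $d$ large enough that, under $\nu_{3/4}$, $c_{XY}$ is within $\varepsilon$ of $3/4$ except on mass $\varepsilon$, and under $\nu_0$, $c_{XY}<-1/(k-1)$ has mass at most $\varepsilon$. Both follow from concentration of $\langle X,Y\rangle/(\|X\|\|Y\|)$. Deleting the violating $\mu^-$ pairs changes every objective by at most $\varepsilon$, and $y_k$ then has value at least $D_k-O(\varepsilon)$ by~\eqref{eq:gaussian-gap-limiting-value}.

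\emph{Step 2: discretization.} Partition $\mathbb R^d$ into finitely many cells: a fine partition of a large ball, chosen as cones intersected with radial shells, together with one outer cell of tiny Gaussian mass. The nodes of $J$ are the cells, and the weight on a signed pair of cells is the $\mu^\pm$-mass of the product cell. I will use cells that are angularly fine, so that $c_{xy}$ is nearly constant on every product of cells. Each cell $C$ gets the vector $r_{x_C}$ for a representative $x_C$. These vectors are feasible for $\operatorname{SDP}(-1/(k-1))$, which by Section~\ref{subsec:canonical-sdp} gives a feasible $\operatorname{SDP}_k$ solution with kernel $y_k$. For $\operatorname{SDP}^{\mathrm{cp}}_k$ we only need nonnegativity of $y_k$ on the supported pairs. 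I will therefore delete the weight on every cell pair that is not entirely within $c>-1/(k-1)+\delta$. Because $c$ is nearly constant on each product of cells, this removes only $O(\varepsilon)$ mass, and the fractional value stays at least $D_k-O(\varepsilon)$. Afterwards the total weight is renormalized to one.

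\emph{Step 3: soundness.} Any clustering of the cells lifts to a measurable partition $\mathcal A$ of $\mathbb R^d$ that is constant on cells. The agreement of this clustering in $J$ differs from $A(\mathcal A)$ only by the deleted weight and the renormalization, both $O(\varepsilon)$. Inequality~\eqref{eq:gaussian-gap-soundness} holds for every partition with any number of parts, so $\operatorname{OPT}(J)\le s+O(\varepsilon)$. The first inequality, $\operatorname{SDP}^{\mathrm{cp}}_{\mathrm{std}}\ge\operatorname{SDP}^{\mathrm{cp}}_k$, was already noted just before the lemma. Rescaling $\varepsilon$ finishes the proof.

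\emph{Main obstacle.} The delicate step is Step 2: ensuring the cells are fine enough in angle that near-constancy of $c_{xy}$ holds on all but $O(\varepsilon)$ of the product mass in dimension $d$. The number of cells may be astronomically large, but it is finite and that suffices. I would handle this with a uniform-continuity argument for $c$ on the compact set $\{x:r\le\|x\|\le R\}^2$, which has Gaussian mass $1-\varepsilon$ for suitable $r,R$.
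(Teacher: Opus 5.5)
Your proposal is correct and follows essentially the paper's route. It discretizes the Gaussian instance on a compact annulus in high dimension, attaches label vectors built from the directions $r_x$ of the cell representatives, keeps only pairs on which $y_k\ge0$, and bounds every clustering by lifting it to a partition that is constant on cells and applying~\eqref{eq:gaussian-gap-soundness}. The one difference is that the paper tests retention at the representatives and uses dominated convergence, since the boundary $c_{xy}=-1/(k-1)$ is a null set, whereas you use a safety margin $\delta$.

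Two small fixes are needed:
\begin{itemize}
\item For $k\ge3$ the vectors $r_{x_C}$ are not feasible for $\operatorname{SDP}(-1/(k-1))$ on all pairs, because nearly antipodal cones violate the constraint. This does not hurt you: the construction~\eqref{eq:label-vectors-from-sdp-a} still yields valid label vectors, and their cross inner products are nonnegative exactly on the retained pairs, which is all $\operatorname{SDP}^{\mathrm{cp}}_k$ requires.
\item Same-cell pairs must be dropped, since a \maxagree\ instance has no loops. This costs only vanishing mass as the cells shrink.
\end{itemize}
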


\begin{proof}
We first construct and discretize an auxiliary Gaussian CSP.  Apply the
construction~\eqref{eq:label-vectors-from-sdp-a} to the directions $r_x$,
with $\mathbb R^d$ in place of $\mathbb R^n$ and a unit vector
$u_0=e_0\perp\mathbb R^k\otimes\mathbb R^d$:
\begin{equation}
\label{eq:gaussian-gap-label-vectors}
 u_{x,a}=\frac1k e_0+\frac1{\sqrt k}(e_a-\bar e)\otimes r_x
 \qquad(a\in[k]).
\end{equation}
By the computation following~\eqref{eq:label-vectors-from-sdp-a}, the label
vectors of each node are orthogonal and sum to $u_0$, and
$\langle u_{x,a},u_{y,b}\rangle$ equals $y_k(x,y)/k$ if $a=b$ and
$(1-y_k(x,y))/(k(k-1))$ otherwise.  Neither of the two terms
of~\eqref{eq:gaussian-gap-label-vectors} is a choice.  The component along
$u_0$ is $\langle u_{x,a},u_0\rangle u_0=\|u_{x,a}\|^2u_0=u_0/k$, the uniform
marginal of a label-symmetric solution, and the remainder, which must sum
to zero over the labels, places the labels at the nodes $e_a-\bar e$ of a
regular simplex centered at the origin, tensored with $r_x$.  Indeed,
by~\eqref{eq:symmetric-label-inner-products}, every label-symmetric solution
has the same Gram matrix as the vectors~\eqref{eq:label-vectors-from-sdp-a}
built from its own vectors $r_i$; the only choice made here is to take the
directions $r_x$ for them.

Retain only constraint pairs satisfying
\begin{equation}
\label{eq:gaussian-gap-retention}
 c_{xy}\ge-\frac1{k-1},
\end{equation}
that is, $y_k(x,y)\geq0$.  On each retained pair, the inner products above
are nonnegative, so $\mu_{xy}(a,b)=\langle u_{x,a},u_{y,b}\rangle$ is a local
distribution with uniform marginals, same-label probability $y_k(x,y)$ and
different-label probability $1-y_k(x,y)$.  Explicitly,
\[
 \mu_{xy}(a,b)=c_{xy}\,\frac{\mathbf1[a=b]}k+(1-c_{xy})\,\frac1{k^2}.
\]
For $c_{xy}\geq0$, this is the law of two labels that are, with probability
$c_{xy}$, equal and uniformly random, and otherwise independent and
uniformly random.  For $c_{xy}<0$, the labels agree less often than
independent ones, and~\eqref{eq:gaussian-gap-retention} is exactly the
condition $\mu_{xy}(a,a)\geq0$.
For fixed $k$, the removed constraint weight tends to zero
as $d\to\infty$.  Removing constraints cannot increase any clustering's
unnormalized value, so \eqref{eq:gaussian-gap-soundness} still applies.
Boundedness of the kernel and the inner-product convergence give the
limiting fractional objective~\eqref{eq:gaussian-gap-limiting-value}, which
equals $D_k$.

We now pass to a finite instance.  For fixed $d\ge2$, restrict to a
compact annulus $K$ of Gaussian measure close to one, partition it into
finitely many measurable cells $C_1,\ldots,C_N$ of diameter at most $h$,
and choose a representative $x_i$ in each cell.  For $i<j$, retain the edge
$ij$ exactly when $c_{x_ix_j}\ge-1/(k-1)$, and give it weights
\[
 W^\pm_{ij}
 =\mu^\pm\bigl((C_i\times C_j)\cup(C_j\times C_i)\bigr).
\]
Here the masses use the original measures $\mu^\pm$; retention is tested
on the representatives.  Omit same-cell pairs and all other edges.
Attach the label vectors $u_{x_i,a}$
from~\eqref{eq:gaussian-gap-label-vectors} to node $i$.  Every retained
edge satisfies~\eqref{eq:gaussian-gap-retention}, so these vectors and the
local distributions $\mu_{x_ix_j}$ are feasible for
$\operatorname{SDP}^{\mathrm{cp}}_k$, with same-label probabilities
$y_{ij}=y_k(x_i,x_j)$.

Every clustering of the finite nodes extends to a measurable partition
of $\mathbb R^d$, constant on each $C_i$ and arbitrary outside $K$.
Its unnormalized finite objective is exactly the contribution of the
retained cell pairs to the full continuum objective.  All other
contributions are nonnegative, so~\eqref{eq:gaussian-gap-soundness} bounds
the finite objective by $s$.

We check that the discretization preserves the fractional value and total
weight.  On $K$, uniform continuity of $x\mapsto r_x$ implies that
$c_{x_ix_j}$ and $y_k(x_i,x_j)$ converge uniformly to $c_{xy}$ and
$y_k(x,y)$ for $(x,y)\in C_i\times C_j$ as $h\to0$.
Consequently, the retention indicators converge except on
$c_{xy}=-1/(k-1)$, a set of measure zero under both joint Gaussian laws
when $d\ge2$.  Same-cell pairs satisfy $\|x-y\|\le h$, so their mass tends
to zero as well, since these laws give the diagonal zero mass.
Dominated convergence therefore shows that, as $h\to0$, the finite total
weight and fractional objective approach those of the retained continuum
instance restricted to $K\times K$.  The mass outside $K\times K$ tends
to zero as the annulus expands to $\mathbb R^d\setminus\{0\}$.
Thus, for any $0<\delta<1$, we may first choose $d$ large, then $K$
sufficiently large, and finally $h$ sufficiently small, so that the finite
total weight $T$ and fractional objective $Q$ satisfy
$1-\delta\le T\le1$ and $Q\ge D_k-\delta$.

Normalize all finite weights by $T$, obtaining an instance $J$ of total
weight one.  The same feasible vectors and the integral bound above give
\[
 \operatorname{SDP}^{\mathrm{cp}}_k(J)\ge\frac QT\ge D_k-\delta,
 \qquad
 \operatorname{OPT}(J)\le\frac sT\le\frac{s}{1-\delta}.
\]
Taking $\delta$ sufficiently small gives the claimed bounds.
Finally,
$\operatorname{SDP}^{\mathrm{cp}}_{\mathrm{std}}(J)\geq\operatorname{SDP}^{\mathrm{cp}}_k(J)$
by the comparison preceding the lemma.
\end{proof}

Letting $k\to\infty$, Lemma~\ref{lem:gaussian-gap-finite} shows that the
inverse gap of $\operatorname{SDP}^{\mathrm{cp}}_{\mathrm{std}}$ is at most
$s/(6/7)=401/500$, with explicit instances.  For
$\operatorname{SDP}_{\mathrm{std}}$, these vectors do not suffice: completing
$J$ with zero-weight edges brings back the nonnegativity constraints on the
deleted pairs, which they violate.  This is where the strong-gap theorem
enters.

\begin{theorem}[Gaussian integrality gap]
\label{thm:gaussian-maxagree-gap}
For every fixed $k\ge2$, requiring local distributions on all pairs does not
change the integrality gap,
$\operatorname{gap}(\operatorname{SDP}_k)
 =\operatorname{gap}(\operatorname{SDP}^{\mathrm{cp}}_k)$, and
\begin{equation}
\label{eq:gaussian-gap-finite-k-ratio}
 \frac1{\operatorname{gap}(\operatorname{SDP}_k)}
 \le \frac{s}{D_k}
 =\frac{401/500}{1-1/(3k)}.
\end{equation}
Moreover, unconditionally,
\begin{equation*}
 \operatorname{gap}(\operatorname{SDP}_{\mathrm{std}})
 \ge\frac{500}{401}.
\end{equation*}
\end{theorem}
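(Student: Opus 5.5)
The plan has three steps: compare the two gaps in both directions, then read off the finite-$k$ bound, then pass to $\operatorname{SDP}_{\mathrm{std}}$.

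\emph{Equality of gaps.} One direction is immediate. Every solution feasible for $\operatorname{SDP}_k(I)$ is feasible for $\operatorname{SDP}^{\mathrm{cp}}_k(I)$, so $\operatorname{SDP}_k(I)\le\operatorname{SDP}^{\mathrm{cp}}_k(I)$ on every instance. This gives $\operatorname{gap}(\operatorname{SDP}_k)\le\operatorname{gap}(\operatorname{SDP}^{\mathrm{cp}}_k)$.

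For the reverse inequality I will invoke the strong-gap theorem of \citet{RaghavendraSteurer09Gaps}. Its content is that, for a fixed CSP over alphabet $[k]$, any instance with canonical SDP value $c$ and integral optimum $s$ can be converted, for every $\eta>0$, into another instance $I'$ with the following properties:
\begin{itemize}
\item its optimum is at most $s+\eta$;
\item it admits a solution with value at least $c-\eta$ whose local distributions are consistent on all subsets of at most $t$ nodes, for any constant $t$.
\end{itemize}
I take $t=2$ and use all pairs of $I'$, including pairs of zero weight. The consistent local distributions on pairs, together with the Gram vectors the theorem provides, are then feasible for $\operatorname{SDP}_k$ in the complete-graph convention of Section~\ref{subsec:canonical-sdp}. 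Concretely, the nonnegativity constraints $\langle u_{i,a},u_{j,b}\rangle\ge0$ on every pair come from the pairwise marginals.

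Applying this to an instance approaching the supremum in $\operatorname{gap}(\operatorname{SDP}^{\mathrm{cp}}_k)$ and letting $\eta\to0$ gives equality of the two gaps. This is the step I expect to be the main obstacle. It requires checking that the vector solution produced by the theorem matches exactly the vector program~\eqref{eq:basic-csp-sdp}: orthogonal label vectors summing to a common unit vector $u_0$, with inner products equal to the local probabilities. If needed, I will use the normalization in \citet[Observations~4.5.1--4.5.2]{Raghavendra09Thesis} to put it in that form.

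\emph{Finite-$k$ ratio.} Fix $k\ge2$ and $\varepsilon>0$. Lemma~\ref{lem:gaussian-gap-finite} gives instances $J$ with $\operatorname{SDP}^{\mathrm{cp}}_k(J)\ge D_k-\varepsilon$ and $\operatorname{OPT}_k(J)\le\operatorname{OPT}(J)\le s+\varepsilon$. Hence $1/\operatorname{gap}(\operatorname{SDP}^{\mathrm{cp}}_k)\le s/D_k$ after $\varepsilon\to0$, and by the equality just proved the same bound holds for $\operatorname{SDP}_k$. The arithmetic is routine:
\[
D_k=\tfrac67\bigl(1-\tfrac1{3k}\bigr), \qquad s\cdot\tfrac76=\tfrac{401}{500},
\]
so $s/D_k=\frac{401/500}{1-1/(3k)}$, which is~\eqref{eq:gaussian-gap-finite-k-ratio}.

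\emph{Passing to $\operatorname{SDP}_{\mathrm{std}}$.} The upper bound in Lemma~\ref{lem:basic-sdp-sandwich} gives $\operatorname{SDP}_k(I)\le\operatorname{SDP}_{\mathrm{std}}(I)$. For each $k$, take $I$ nearly attaining the $\operatorname{SDP}_k$ gap, with respect to $\operatorname{OPT}_k$. Lemma~\ref{lem:hash-clusters} gives $\operatorname{OPT}(I)\le\operatorname{OPT}_k(I)/(1-1/k)$. Combining these,
\[
\frac{\operatorname{OPT}(I)}{\operatorname{SDP}_{\mathrm{std}}(I)}\le\frac{\operatorname{OPT}_k(I)}{(1-1/k)\operatorname{SDP}_k(I)}\lesssim\frac{401/500}{(1-1/k)(1-1/(3k))}.
\]
Letting $k\to\infty$ yields $\operatorname{gap}(\operatorname{SDP}_{\mathrm{std}})\ge500/401$, and $401/500=0.802$ gives the unconditional bound in Theorem~\ref{thm:maxagree-sdp-gap-hardness}.
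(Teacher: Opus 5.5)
Your proposal is correct and follows the paper's proof essentially step for step. The strong-gap theorem of \citet{RaghavendraSteurer09Gaps}, restricted to pairs, gives $\operatorname{gap}(\operatorname{SDP}_k)\ge\operatorname{gap}(\operatorname{SDP}^{\mathrm{cp}}_k)$, and containment gives the reverse inequality; Lemma~\ref{lem:gaussian-gap-finite} then gives the finite-$k$ bound, and Lemmas~\ref{lem:basic-sdp-sandwich} and~\ref{lem:hash-clusters} followed by $k\to\infty$ give $500/401$. The only detail the paper adds is that these instances may carry both signs on one pair, which is harmless: subtracting $\min(u,v)$ from both weights lowers every integral and every SDP value by the same constant $C\le A<D$, and $(A-C)/(D-C)\le A/D$.
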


\begin{proof}
This is the one place where we use the strong-gap theorem
of~\citet[Theorem~1.3]{RaghavendraSteurer09Gaps}.  Given a normalized gap
point $(c,s')$ of $\operatorname{SDP}^{\mathrm{cp}}_k$, that is, an instance
of total constraint weight one on which $\operatorname{SDP}^{\mathrm{cp}}_k\geq c$
and $\operatorname{OPT}_k\leq s'$, it gives, for every $\eta>0$, a new
instance $I$ with $\operatorname{OPT}_k(I)\leq s'+\eta$, together with label
vectors of value at least $c-\eta$ and a local label distribution on every
pair of nodes, whose second moments agree with the inner products of the
label vectors.  (The theorem provides consistent local distributions on all
sets of at most $R$ nodes, for any fixed $R$; we only need $R=2$.)  The new instance is obtained by running Raghavendra's reduction on
an integrality-gap instance of Unique Games
(Section~\ref{subsec:raghavendra-overview}).

These label vectors and local distributions are feasible for
$\operatorname{SDP}_k$ on $I$.  Hence
$\operatorname{gap}(\operatorname{SDP}_k)\geq
\operatorname{gap}(\operatorname{SDP}^{\mathrm{cp}}_k)$, and the reverse
inequality holds because $\operatorname{SDP}_k\leq\operatorname{SDP}^{\mathrm{cp}}_k$
on every instance (Section~\ref{subsec:ugc-gap-transfer}).  Applied to the gap points
$(D_k-\varepsilon,s+\varepsilon)$ of Lemma~\ref{lem:gaussian-gap-finite},
with $\varepsilon,\eta\to0$, this gives~\eqref{eq:gaussian-gap-finite-k-ratio}.

For the same instances $I$, Lemma~\ref{lem:basic-sdp-sandwich} gives
$\operatorname{SDP}_{\mathrm{std}}(I)\geq\operatorname{SDP}_k(I)
\geq D_k-\varepsilon-\eta$, and Lemma~\ref{lem:hash-clusters} gives
$\operatorname{OPT}(I)\leq\operatorname{OPT}_k(I)/(1-1/k)
\leq(s+\varepsilon+\eta)/(1-1/k)$.  Hence
\[
 \operatorname{gap}(\operatorname{SDP}_{\mathrm{std}})
 \geq\left(1-\frac1k\right)\frac{D_k-\varepsilon-\eta}{s+\varepsilon+\eta},
\]
and letting $\varepsilon,\eta\to0$ and then $k\to\infty$ gives
$(6/7)/(1203/1750)=500/401$.

There is no difficulty in restricting to signed graphs with at most one
sign on each pair.  If a pair has weights $u,v\ge0$ of opposite signs,
subtract $\min(u,v)$ from both.  This removes the same constant contribution
from every integral solution and every feasible SDP solution.  If the
removed total constant is $C$, and the old optimum and displayed SDP value
are $A<D$, then
\[
 \frac{A-C}{D-C}\le\frac{A}{D}.
\]
Here $C\le A$, so the denominator stays positive.  Thus this simplification
can only strengthen the gap.  The same observation applies after the
strong-gap transfer.
\end{proof}

Since $\alpha_{\mathrm{SDP}}=1/\operatorname{gap}(\operatorname{SDP}_{\mathrm{std}})$,
Theorem~\ref{thm:gaussian-maxagree-gap} proves the bound
$\alpha_{\mathrm{SDP}}\leq401/500=0.802$ in
Theorem~\ref{thm:maxagree-sdp-gap-hardness}.  Together with the hardness
statement proved in Section~\ref{subsec:ugc-gap-transfer}, this also
gives its $0.802+\varepsilon$ inapproximability conclusion.

\subsubsection{A quadratic bound on noise stability}
\label{sssec:gaussian-gap-certificate}

We prove Lemma~\ref{lem:gaussian-gap-scalar}.  Put $c=453/250$ and
$H(p)=4\Lambda_{3/4}(p)-3p^2-cp$; we must show that $H\leq0$ on $[0,1]$.
With $z=\Phi^{-1}(p)$, we have
$\frac{d}{dz}\Lambda_{3/4}(\Phi(z))=2\varphi(z)\Pr[V\leq z\mid U=z]
=2\varphi(z)\Phi(z/\sqrt7)$ for $\frac34$-correlated standard Gaussians
$U,V$, so $\Lambda'_{3/4}(p)=2\Phi(z/\sqrt7)$ and
\[
 H'(p)=8\Phi(z/\sqrt7)-6p-c,
 \qquad
 H''(p)=\frac8{\sqrt7}e^{3z^2/7}-6.
\]
Since $H''$ is increasing in $|z|$, negative at $z=0$, and unbounded, $H$ is
convex on $[0,a]$, concave on $[a,1-a]$ and convex on $[1-a,1]$ for some
$a\in(0,\frac12)$.  Let $T$ be the tangent to $H$ at $p_0=\Phi(-r)$, where
$r=56/65$.  Since $r<1$ and $e^{3/7}<(1-\frac37)^{-1}=\frac74$, we have
$H''(p_0)<2\sqrt7-6<0$, so $p_0\in(a,1-a)$.  It suffices to show that
$T(0)<0$ and $T(1)<0$.  Then $T<0$ on $[0,1]$, so $H\leq T<0$ on $[a,1-a]$,
and on the two convex pieces $H$ lies below its chords, whose endpoint
values $H(0)=0$, $H(a)$, $H(1-a)$ and $H(1)=1-c$ are nonpositive.

Let $q_0=\Phi(-r/\sqrt7)$ and $D=p_0-\Lambda_{3/4}(p_0)=\Pr[U\leq-r<V]$.  Then
\[
 T(0)=4p_0-4D+3p_0^2-8p_0q_0,
 \qquad
 T(1)=T(0)+8q_0-6p_0-c.
\]
Instead of computing the bivariate probability $D$, we bound it from below.
With $\theta=\arccos\frac34$,
\[
 D=\frac1{2\pi}\int_0^\theta\exp\Bigl(-\frac{r^2}{1+\cos s}\Bigr)\,ds,
\]
since, as functions of $r$, both sides have derivative
$-2\varphi(r)\bigl(\Phi(r/\sqrt7)-\frac12\bigr)$ (for the right side,
differentiate under the integral and substitute $u=\tan(s/2)$) and vanish as
$r\to\infty$.
The average of $1/(1+\cos s)$ over $[0,\theta]$ is
$\tan(\theta/2)/\theta=1/(\theta\sqrt7)$, and $x\mapsto e^{-r^2x}$ is convex,
so Jensen's inequality gives
\[
 D\geq\frac\theta{2\pi}\exp\Bigl(-\frac{r^2}{\theta\sqrt7}\Bigr).
\]
The claim is thus reduced to two values of $\Phi$ and one elementary
number: $p_0\approx0.194471$, $q_0\approx0.372352$, and the right-hand side
above is approximately $0.078022$.  Substituting it for $D$ gives
$T(0)<-4\cdot10^{-5}$ and $T(1)<-5\cdot10^{-5}$.  The supplied verifier,
\texttt{certify\_\allowbreak{}gaussian\_\allowbreak{}maxagree\_\allowbreak{}bound.py},
certifies both inequalities in exact rational interval arithmetic, using
alternating Taylor series for $\Phi$, $\exp$ and $\arctan$, together with
$\theta=2\arctan(1/\sqrt7)$ and Machin's formula for $\pi$.

\section{Limitations of Rounding Families under the Edge-Wise Analysis}\label{sec:UB}

In this section, we prove the limitation results announced earlier for \maxagree\ and \maxagreek.
To this end, we develop a general framework for certifying an upper bound on the best approximation ratio that can be established by any mixture over a prescribed finite family of rounding schemes under the edge-wise analysis formalized in Section~\ref{subsec:approximations-general}.
Applying this framework to broad finite families of rounding schemes, we show that the approximation guarantees of our algorithms for \maxagree\ and \maxagreek\ are nearly optimal within the corresponding families and the edge-wise analysis framework.

Our approach is based on LP duality together with certified bounds on collision probabilities.
For the $k$-spokes roundings, the required bounds are provided by our Hermite-coefficient certification framework.
We first formulate the best approximation ratio certifiable by mixtures over a prescribed rounding family as a semi-infinite LP.
We then restrict the positive-edge and negative-edge inequalities to finitely many rational test points and relax the resulting constraints using certified lower and upper bounds on the collision probabilities.
Finally, we certify an upper bound on this finite relaxed LP by constructing a rational feasible solution to its dual.
Thus, although a floating-point LP solver is used to find a candidate dual solution, the final upper bound is verified entirely in exact rational arithmetic.

\subsection{An Upper-Bound Certification Framework}\label{subsec:UB_approach}

Fix a special case of the general clustering problem introduced in Section~\ref{subsec:approximations-general} 
and take a rational $a\in[-1/2,0]$ such that $\mathrm{SDP}(a)$ is a relaxation of the problem.
Let $\mathcal{R}$ be a finite family of rounding schemes such that each rounding scheme always returns a feasible solution when applied to an optimal solution to $\mathrm{SDP}(a)$.
For each $r\in\mathcal{R}$, let $P_r(\rho)$ denote the probability that two nodes whose corresponding SDP vectors have inner product $\rho$ are assigned to the same cluster by rounding scheme $r$.
Consider a mixture over $\mathcal{R}$ with probabilities $\bm{\lambda}=(\lambda_r)_{r\in \mathcal{R}}$, where $\lambda_r\geq 0$ for every $r\in \mathcal{R}$ and $\sum_{r\in\mathcal{R}}\lambda_r=1$. 
Letting $P(\rho)$ denote the collision probability of the mixture, we have 
\begin{equation*}
P(\rho)=\sum_{r\in\mathcal{R}}\lambda_r P_r(\rho).
\end{equation*}
Recall that 
\begin{equation*}
s_a^+(\rho) = \frac{\rho-a}{1-a}\quad \text{and}\quad s_a^-(\rho) = \frac{1-\rho}{1-a}\quad \text{for }\rho\in [a,1].
\end{equation*}

To certify an approximation ratio $\gamma$ using the edge-wise analysis, we need to show that
\begin{equation*}
P(\rho)\geq \gamma s_a^+(\rho)
\quad\text{and}\quad
1-P(\rho)\geq \gamma s_a^-(\rho)
\quad
\text{for all }\rho\in[a,1].
\end{equation*}
Here, unlike Lemma~\ref{lem:edge-wise-analysis}, the negative-edge inequality should be satisfied for all $\rho\in [a,1]$, 
since our focus is not only on $k$-spokes roundings. 
The best approximation ratio that can be certified for mixtures over $\mathcal{R}$ by the edge-wise analysis is represented as the optimal value of the following semi-infinite LP:
\begin{equation*}
\begin{aligned}
\text{maximize}\quad
& \gamma\\
\text{subject to}\quad
& P(\rho)\geq \gamma s_a^+(\rho)
&& \text{for all }\rho\in[a,1],\\
& 1-P(\rho)\geq \gamma s_a^-(\rho)
&& \text{for all }\rho\in[a,1],\\
& \sum_{r\in\mathcal{R}}\lambda_r=1,\\
& \lambda_r\geq 0
&& \text{for all }r\in\mathcal{R}.
\end{aligned}
\end{equation*}

Take
$\rho_1^+,\dots,\rho_{N_+}^+\in[a,1]$
and
$\rho_1^-,\dots,\rho_{N_-}^-\in[a,1]$.
Replacing the first two constraints in the above LP by
\begin{equation*}
\begin{aligned}
P(\rho_i^+)
&\geq
\gamma s_a^+(\rho_i^+)
&& \text{for all }i\in[N_+],\\
1-P(\rho_i^-)
&\geq
\gamma s_a^-(\rho_i^-)
&& \text{for all }i\in[N_-]
\end{aligned}
\end{equation*}
leads to a finite relaxed LP, whose optimal value is therefore an upper bound on the optimal value of the original semi-infinite LP.

For every $r\in\mathcal{R}$ and $i\in[N_+]$, suppose that we have a certified upper bound
$\overline{P}_r(\rho_i^+)$ on $P_r(\rho_i^+)$.
Similarly, for every $r\in\mathcal{R}$ and $i\in[N_-]$, suppose that we have a certified lower bound
$\underline{P}_r(\rho_i^-)$ on $P_r(\rho_i^-)$.
Then
\begin{equation*}
\overline{P}(\rho_i^+)
=
\sum_{r\in\mathcal{R}}
\lambda_r\overline{P}_r(\rho_i^+)
\end{equation*}
is a certified upper bound on $P(\rho_i^+)$, and 
\begin{equation*}
\underline{P}(\rho_i^-)
=
\sum_{r\in\mathcal{R}}
\lambda_r\underline{P}_r(\rho_i^-)
\end{equation*}
is a certified lower bound on $P(\rho_i^-)$.
Further replacing the finite constraints by
\begin{equation*}
\begin{aligned}
\overline{P}(\rho_i^+)
&\geq
\gamma s_a^+(\rho_i^+)
&& \text{for all }i\in[N_+],\\
1-\underline{P}(\rho_i^-)
&\geq
\gamma s_a^-(\rho_i^-)
&& \text{for all }i\in[N_-]
\end{aligned}
\end{equation*}
leads to a further relaxed LP.
Its optimal value is therefore still an upper bound on the approximation ratio that can be certified by the edge-wise analysis.

However, the numerical optimum returned by a standard LP solver does not by itself yield a rigorous upper bound, since it is subject to floating-point roundoff errors.
We instead certify an upper bound using the dual LP:
\begin{alignat*}{4}
&\text{minimize} &\ \ & U\\
&\text{subject to} &   & \sum_{i\in[N_+]} u_i s_a^+(\rho_i^+) + \sum_{i\in[N_-]} v_i s_a^-(\rho_i^-) =1,\\
&                  &   & U \geq \sum_{i\in[N_+]} \overline{P}_r(\rho_i^+)u_i + \sum_{i\in[N_-]} \left(1-\underline{P}_r(\rho_i^-)\right)v_i &\quad & \text{for all }r\in\mathcal{R},\\
&                  &   & u_i\geq 0 & & \text{for all }i\in[N_+],\\
&                  &   & v_i\geq 0 & & \text{for all }i\in[N_-].
\end{alignat*}
We first solve this dual LP using a standard LP solver and obtain a floating-point approximate optimal solution
$\left(
(u_i^*)_{i\in[N_+]},
(v_i^*)_{i\in[N_-]},
U^*
\right)$.
We then round the approximate values $u_i^*$ and $v_i^*$ to nonnegative rational numbers $\widetilde{u}_i$ and $\widetilde{v}_i$, respectively.
Let
\begin{equation*}
Z
=
\sum_{i\in[N_+]}
\widetilde{u}_i s_a^+(\rho_i^+)
+
\sum_{i\in[N_-]}
\widetilde{v}_i s_a^-(\rho_i^-),
\end{equation*}
which is expected to be nearly equal to $1$, so assume that $Z>0$.
Define
\begin{equation}\label{eq:dual-sol-normalization}
u_i^{\mathrm{cert}}
=
\frac{\widetilde{u}_i}{Z}
\ \ \text{for } i\in[N_+]
\quad\text{and}\quad
v_i^{\mathrm{cert}}
=
\frac{\widetilde{v}_i}{Z}
\ \ \text{for } i\in[N_-].
\end{equation}
Then
\begin{equation*}
\sum_{i\in[N_+]}
u_i^{\mathrm{cert}}s_a^+(\rho_i^+)
+
\sum_{i\in[N_-]}
v_i^{\mathrm{cert}}s_a^-(\rho_i^-)
=
1.
\end{equation*}
For each $r\in\mathcal{R}$, compute
\begin{equation*}
U_r
=
\sum_{i\in[N_+]}
\overline{P}_r(\rho_i^+)
u_i^{\mathrm{cert}}
+
\sum_{i\in[N_-]}
\left(
1-\underline{P}_r(\rho_i^-)
\right)
v_i^{\mathrm{cert}}.
\end{equation*}
Then
\begin{equation*}
U^{\mathrm{cert}}
=
\max_{r\in\mathcal{R}} U_r
\end{equation*}
is the objective value of a rational feasible solution to the dual LP.
By weak duality, $U^{\mathrm{cert}}$ is a certified upper bound on the optimal value of the relaxed primal LP, 
and therefore also on the best approximation ratio that can be certified for mixtures over $\mathcal{R}$ by the edge-wise analysis.

Note that when $\mathcal{R}$ consists solely of $k$-spokes roundings, the negative-edge part of the analysis can be simplified further.
Indeed, by Corollary~\ref{cor:negative-ratio-monotonicity}, the negative-edge inequality over $[a,1]$ is equivalent to the single endpoint condition $1-P(a)\geq\gamma$.
Thus, for a family consisting solely of $k$-spokes roundings, it suffices to use a single test point for the negative-edge constraint, i.e., $N_-=1$ and $\rho_1^-=a$.

\subsection{Implementation and Certificates}
\label{subsec:upper-bound-implementation}

We implement the upper-bound certification approach using rational grids.
For a grid denominator $N\in\mathbb{N}$ and a rational $a\in[-1/2,0]$, we use the test points
\begin{equation*}
\rho_i^+
=
a+\frac{i}{N}(1-a)
\quad\text{and}\quad
\rho_i^-
=
a+\frac{i-1}{N}(1-a)
\quad \text{for } i=1,\ldots,N.
\end{equation*}
Thus, the positive-edge constraints are tested on
\begin{equation*}
\left\{
a+\frac{1-a}{N},\,
a+\frac{2(1-a)}{N},
\dots,
1
\right\},
\end{equation*}
while the negative-edge constraints are tested on
\begin{equation*}
\left\{
a,\,
a+\frac{1-a}{N},
\dots,
1-\frac{1-a}{N}
\right\}.
\end{equation*}
The omitted endpoints correspond to trivial constraints, since $s_a^+(a)=0$ and $s_a^-(1)=0$.

The dual LP is formed using certified lower and upper bounds on the collision probability of each rounding scheme at these test points.
For the $k$-spokes rounding, whose collision probability is $P_k(\rho)$, we can use the certified lower and upper bounds derived in Proposition~\ref{prop:theory-Pk-LB-UB}. 
For the $k$-hyperplane rounding, the collision probability is
\begin{equation*}
P^\mathrm{GW}_k(\rho)
=
\left(
1-\frac{\arccos(\rho)}{\pi}
\right)^k,
\end{equation*}
so the required lower and upper bounds can be obtained directly by outward-rounded ball arithmetic.
Thus, the distinction between the two types of roundings enters only in the computation of certified bounds on their collision probabilities; the LP framework itself treats all rounding schemes uniformly.

We distinguish between certificate generation and certificate checking.
A standard floating-point LP solver is used only to find a candidate solution to the dual LP.
The resulting dual variables are then rounded to nonnegative rational numbers and normalized exactly as in \eqref{eq:dual-sol-normalization}. 
We then compute the right-hand side of each dual constraint in exact rational arithmetic and set $U^\mathrm{cert}$ to their maximum, 
thereby obtaining an exactly feasible rational dual solution. 
The certificate consists of the parameter $a$, the test points, the prescribed rounding family $\mathcal{R}$, certified lower and upper bounds on the corresponding collision probabilities, and the rationalized dual variables.
The checker verifies the normalization constraint and all dual inequalities in exact rational arithmetic.
Hence, the floating-point LP solver is not part of the verification.

\subsection{Applications to \maxagree\ and \maxagreek}\label{subsec:UB-applications}
We first apply the framework to \maxagree, where $a=0$. 
Here we set $M=24$ and $N=100{,}000$. 
For the finite family consisting of the $k$-spokes roundings for $1\leq k\leq 32$ and the $k$-hyperplane roundings for $1\leq k\leq 5$, 
the exact checker verifies $U^{\mathrm{cert}}<0.78187$. 
For Swamy's subfamily, consisting of the $6$-spokes rounding and the $2$-hyperplane rounding, it verifies $U^{\mathrm{cert}}<0.77222$.
This gives the following result:
\begin{theorem}\label{thm:UB}
For \maxagree, consider the finite family consisting of the $k$-spokes roundings for $1\leq k\leq 32$ and the $k$-hyperplane roundings for $1\leq k\leq 5$.
Within the edge-wise analysis framework, no mixture over this family can certify an approximation ratio larger than $0.78187$.
Moreover, for Swamy's subfamily, consisting of the $6$-spokes rounding and the $2$-hyperplane rounding, no mixture can certify an approximation ratio larger than $0.77222$.
\end{theorem}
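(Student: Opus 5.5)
The plan is to apply the upper-bound certification framework of Section~\ref{subsec:UB_approach} twice, once to each family, with $a=0$, truncation degree $M=24$, and grid denominator $N=100{,}000$. By weak duality, any exactly feasible rational dual solution with objective value $U^{\mathrm{cert}}$ upper-bounds the relaxed finite primal LP. The relaxed LP in turn upper-bounds the semi-infinite LP whose optimum is, by definition, the best ratio certifiable by the edge-wise analysis over the family. So it suffices to exhibit, for each family, a rational dual feasible point with $U^{\mathrm{cert}}<0.78187$ and $U^{\mathrm{cert}}<0.77222$, respectively.

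\textbf{Step 1: certified collision bounds at the grid points.} For each $k$-spokes rounding with $1\le k\le 32$, I would take the stored rational bounds $\underline c_{k,m},\overline c_{k,m}$ for $m\le 24$ and form $\overline P_{k,24}(\rho_i^+)$ and $\underline P_{k,24}(\rho_i^-)$ as in Proposition~\ref{prop:theory-Pk-LB-UB}. Since $a=0$, all test points lie in $[0,1]$, so only the $\rho\ge0$ branch is needed. The case $k=1$ is trivial: $P\equiv1$.

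For the $k$-hyperplane roundings with $1\le k\le5$, I would use $P^{\mathrm{GW}}_k(\rho)=(1-\arccos(\rho)/\pi)^k$. This would be evaluated in Arb ball arithmetic and outward-rounded to rationals.

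A simplification is available only for the all-spokes case. Because the full family contains hyperplane roundings, Corollary~\ref{cor:negative-ratio-monotonicity} does not apply to every mixture, so I keep all $N$ negative-edge test points there. For Swamy's subfamily I would do the same for uniformity.

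\textbf{Step 2: find and rationalize a dual solution.} I would solve the dual LP numerically, round $u_i^*,v_i^*$ to nonnegative rationals, and normalize by $Z$ as in \eqref{eq:dual-sol-normalization}. I would then compute each $U_r$ in exact arithmetic and check that $\max_r U_r$ falls below the stated thresholds.

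Intuitively, the dual should concentrate on very few points: near $\rho=0$ for the negative edge, and on one or two interior $\rho$ values where the positive-edge constraint binds. I would therefore prune $u,v$ to their support before rationalizing. This keeps the exact check cheap.

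\textbf{Main obstacle.} The delicate step is making the bound tight enough: $0.78187$ exceeds our achieved $0.7818$ by less than $10^{-4}$. Three slack sources consume this margin: grid discretization, Hermite truncation at $M=24$, and rationalization of the dual.

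Truncation error enters $\overline P_{k,24}$ through the tail term $\rho^{25}(1-\sum\underline c_{k,m})$. This is negligible at the binding interior points, which lie well below $1$, but I would confirm this numerically before committing.

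If the margin proved insufficient, I would first refine $N$ locally near the binding points. Failing that, I would switch to the $M=32$ data for the $k\le16$ spokes, which are the ones active near the optimum.
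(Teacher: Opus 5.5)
Your proposal is correct and is essentially the paper's own argument. Like the paper, you instantiate the LP-duality certification framework of Section~\ref{subsec:UB_approach} with $a=0$, $M=24$, $N=100{,}000$, using $\overline P_{k,24}$ and $\underline P_{k,24}$ for the spokes and outward-rounded ball arithmetic for $P^{\mathrm{GW}}_k$, and then check the rationalized, $Z$-normalized dual solution in exact arithmetic to obtain $U^{\mathrm{cert}}<0.78187$ and $U^{\mathrm{cert}}<0.77222$. One small quibble: Swamy's subfamily also contains the $2$-hyperplane rounding, so you keep all negative-edge test points there for the same reason as for the full family, not merely for uniformity.
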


The first bound shows that our mixture of the $4$-spokes and $5$-spokes roundings, which achieves an approximation ratio of $0.7818$, is nearly optimal within this family and analysis framework.
The second bound shows that, within the same framework, attaining our approximation ratio of $0.7818$, or any comparable improvement over Swamy's guarantee, 
requires moving beyond Swamy's subfamily.

We next apply the framework to \maxagreek, where $a=a_K=-1/(K-1)$.
We consider the family consisting of the $k$-spokes roundings for $1\leq k\leq K$, which is precisely the family used by our proposed algorithm.
Since this family consists solely of $k$-spokes roundings, the negative-edge constraint is put only at a single point. 
The exact checker verifies the following result: 
\begin{theorem}\label{thm:UB-maxagree-k}
For each $K=3,\dots,16$, within the edge-wise analysis framework, 
no mixture over the $k$-spokes roundings
for $1\leq k\leq K$ can certify an approximation ratio larger than $U^\mathrm{cert}_K$ 
presented in Table~\ref{tab:maxagree-k}.
\end{theorem}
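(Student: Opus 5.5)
The plan is to apply the upper-bound certification framework of Section~\ref{subsec:UB_approach} once for each $K\in\{3,\dots,16\}$. Take $a=a_K=-1/(K-1)$, which is rational and lies in $[-1/2,0]$. Take the family $\mathcal{R}=\{k\text{-spokes}:1\le k\le K\}$; each member returns at most $K$ clusters and is therefore feasible for \maxagreek. The family consists only of $k$-spokes roundings, so Corollary~\ref{cor:negative-ratio-monotonicity} lets me impose the negative-edge constraint at the single test point $\rho_1^-=a_K$. For the positive-edge constraints I use the rational grid $\rho_i^+=a_K+\frac{i}{N}(1-a_K)$ with a large $N$, say $N=100{,}000$ as in the \maxagree\ case.

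The key observation is that restricting the semi-infinite LP to these finitely many constraints can only increase its optimum. The same is true when each $P_k(\rho_i^+)$ is replaced by a certified upper bound and $P_k(a_K)$ by a certified lower bound. Hence any feasible solution of the resulting dual LP bounds, by weak duality, the best ratio certifiable by the edge-wise analysis.

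For the coefficient data I take the certified bounds $\overline P_{k,M}$ and $\underline P_{k,M}$ from Proposition~\ref{prop:theory-Pk-LB-UB}, with $M=32$ from the $2\le k\le16$ collection. The case $k=1$ needs separate handling: there $P_1\equiv1$ exactly, which gives the dual constraint $U\ge\sum_i u_i s^+_{a_K}(\rho_i^+)$. Its bounds are trivially exact, so no computation is required for it.

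The certificate is built and checked in three steps.
\begin{enumerate}
\item I solve the dual LP in floating point.
\item I round the dual variables to nonnegative rationals and normalize them by $Z$ as in \eqref{eq:dual-sol-normalization}, so that the normalization equality holds exactly.
\item I compute $U_k$ for each $k\le K$ in exact rational arithmetic and set $U^{\mathrm{cert}}_K=\max_k U_k$.
\end{enumerate}
The theorem then follows once the exact checker confirms that $U^{\mathrm{cert}}_K$ does not exceed the table entry.

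The main obstacle is tightness rather than logic. The entries in Table~\ref{tab:maxagree-k} sit only about $10^{-4}$ above $\gamma_K$, so three sources of error must together stay well below that margin:
\begin{itemize}
\item the coefficient enclosures together with the truncation tail $|\rho|^{M+1}(1-\sum_{m\le M}\underline c_{k,m})$, which I expect to be small near $a_K$ since $|a_K|\le1/2$, but larger near $\rho=1$;
\item the grid discretization;
\item the rationalization of the dual.
\end{itemize}
If the first check fails, I would refine the grid near the points where the floating-point dual puts its mass, since the optimal dual support should be a few points, typically one interior tangency and $a_K$. I would also round the dual values with more digits. Because the upper bound $\overline P_{k,M}$ is weakest near $\rho=1$, dual mass near $1$ should be avoided, and the weights are likely concentrated at moderate $\rho$ anyway.
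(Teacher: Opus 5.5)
Your proposal is correct and follows the paper's route: you apply the LP-duality certification framework of Section~\ref{subsec:UB_approach} with $a=a_K$ and the family of $k$-spokes roundings for $1\le k\le K$, place a single negative-edge test point at $a_K$ (justified by Corollary~\ref{cor:negative-ratio-monotonicity}), use certified Hermite bounds on the rational grid, and check a rationalized, normalized dual solution in exact arithmetic. Your separate handling of $k=1$ via $P_1\equiv 1$, and your remarks on where the numerical slack arises, are implementation details the paper leaves implicit.
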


\section{Related Work}\label{sec:related}

\paragraph{Hermite expansions and Gaussian rounding.}
Hermite expansions have long been used to analyze Gaussian rounding.
In particular, \citet[Lemma~5]{Frieze+97} already established that the
collision probability $P_k$ has a power-series expansion with nonnegative
coefficients summing to one; their proof reduces the first few coefficients
to one-dimensional Gaussian integrals, and they observe, without an explicit
formula, that the same could be done at every degree.
Lemma~\ref{lem:one-dimensional-hermite-coefficients} carries out this
reduction in closed form for all degrees.
Our framework develops this representation into a
systematic method for certifying higher-degree coefficients and bounding
the collision function.
The Hermite representation also yields the monotonicity property behind
our proof
of the conjecture of~\citet*{deKlerk+04}.

The closest work to ours methodologically is 
the concurrent independent work
of~\citet{Heilman26Grothendieck}.
Building on the higher-dimensional
Krivine roundings of \citet*{Braverman+13},
the author improves the upper bound on Grothendieck's
constant by thresholding Hermite polynomials on a random plane. 
His one-dimensional formula for the Hermite coefficients of planar
threshold functions~\citep[Section~16.1, Eqs.~(116)--(117)]{Heilman26Grothendieck} is analogous
to~\eqref{eq:f1-hat-one-dimensional} for our argmax cells.
Both approaches combine validated integration
on a bounded interval with analytic tail bounds.

This shared reduction serves different objectives.  Heilman designs
rounding maps for Grothendieck's inequality and controls the inverse of
their correlation function to certify the resulting bound.  We analyze the
family of $k$-way argmax partitions underlying Frieze--Jerrum rounding,
across different values of $k$ and the full range of correlations.  The
Hermite coefficients of these partitions are indexed by multi-indices in
dimension $k$, whose number grows combinatorially with $k$ and the degree:
for $k\le32$ and degrees up to $24$ there are about $10^{16}$ of them.  We
exploit coordinate symmetries to reduce this to about $10^6$ integrals and
to organize the computations into reusable rational bounds on $P_k$.  These bounds support the design of rounding mixtures,
certify their guarantees for several clustering objectives, and bound
what prescribed rounding families can achieve under the edge-wise
analysis.  

\paragraph{Computer-assisted analysis and rounding mixtures.}
\citet{Zwick02} used interval arithmetic to certify analytic inequalities
arising in SDP rounding.  More recently,
\citet*{Brakensiek+26_sicomp} improved approximations for Max 2-AND and
Max Di-Cut by computationally discovering distributions of rounding
functions and certifying their guarantees.  The work
of~\citet*{Brakensiek+26_stoc} similarly improves the long-standing
$1.2965$-approximation for Multiway Cut~\cite{Sharma+14} to $1.2787$ using
mixtures of hundreds of rounding schemes.  Their search
is formulated as a game between a rounding scheme and an edge
configuration, and their analysis combines analytic arguments with
interval arithmetic.  Like ours, their algorithm improves a
long-standing guarantee through optimized mixtures of roundings, of the
simplex LP in their case and of an SDP in ours.

\paragraph{Max $K$-Cut.}
Unconditionally, Max $K$-Cut is NP-hard to
approximate within $1-1/(34K)$~\cite{Kann+97}.
We have already mentioned the
classical analyses of \citet{Frieze+97}, \citet{Goemans+04}, and
\citet*{deKlerk+04}. Beyond that, \citet{Newman18} gives an alternative rounding scheme with an explicit
collision probability for general $K$; our approach sharpens the analysis
of the original Frieze--Jerrum scheme.  \citet*{Shinde+21} pursue a
complementary goal: obtaining Max $K$-Cut and \maxagree\ algorithms using
$O(|V|+|E|)$ memory while retaining guarantees close to those
of the underlying SDP roundings.

\paragraph{Gaussian partitions and inapproximability.}
Isoperimetric methods underlie the Max-Cut integrality gap
of~\citet{FeigeSchechtman02}.  Our \maxagree\ construction applies Borell's
inequality~\cite{Borell85} to each cell separately, avoiding the need to
identify an optimal partition.  For \maxagree, the connection to the
standard SDP and UGC hardness uses the CSP framework and strong-gap results
of~\citet{Raghavendra08} and~\citet{RaghavendraSteurer09Gaps}, as explained
in Section~\ref{subsec:raghavendra-overview}.

\citet{IsakssonMossel12} connect Gaussian partition conjectures to
optimal approximation of Max $K$-Cut under the UGC.  Our endpoint
characterization shows that their standard simplex conjecture
at the single correlation $\rho=-1/(K-1)$
\citep[Conjecture~1.4(ii)]{IsakssonMossel12} would suffice, via their
reduction, to establish
matching UGC-hardness for the Frieze--Jerrum approximation ratio.
For $K=3$, where the endpoint characterization was already known,
\citet{Heilman26} recently proved the required stability inequality
(namely the three-candidate Plurality is Stablest conjecture) for
correlations in $[-1/2,2/5]$, and obtained matching UGC-hardness.

\paragraph{Correlation Clustering.}
The approximation and hardness results most directly relevant to
\maxagree\ and \maxagreek\ are reviewed in the introduction.  On unweighted
complete graphs, \maxagree\ admits a PTAS~\cite{Bansal+04}.
With a fixed number $K$ of clusters on unweighted complete graphs, both
\maxagreek\ and $\textsc{MinDisagree}[K]$ admit
PTASs~\cite{Giotis+06_j}. For the complementary \mindisagree\ objective, the general weighted case
admits an $O(\log n)$-approximation~\cite{Charikar+05,Demaine+06}.  Since
this case is equivalent to Multicut up to constant
factors~\cite{Demaine+06}, it is NP-hard to approximate within every
constant factor under the UGC~\cite{Chawla+06}.
For \mindisagree\ on unweighted complete graphs, progress from pivot algorithms and LP
rounding~\cite{Ailon+08,Chawla+15} to stronger
relaxations~\cite{Cohen-Addad+22,Cohen-Addad+23,Cao+24} has led to substantially
better guarantees.  Recent work includes faster algorithms for the cluster
LP~\cite{Cao+25}, sublinear-time approximation~\cite{Cohen-Addad+24}, and a
$(1.3865+\epsilon)$-approximation based on approximate dual separation and
generalized pivot rounding~\cite{eff_sep}.  Other directions include local
objectives~\cite{Charikar+17,Kalhan+19,Puleo+18}, query-efficient
algorithms~\cite{Garcia-Soriano+20,Kuroki+24}, 
and multilayer settings~\cite{Miyauchi+26}.

\paragraph{Modularity Maximization.}
The SDP approach to modularity maximization is related to
\maxagree: \citet*{Dinh+15} add a constant to the objective and apply
correlation-clustering rounding to obtain an additive approximation guarantee.
\citet*{Kawase+21} improve the guarantee by analyzing the additive error
directly.  Our use of Frieze--Jerrum rounding continues this line of work
with the same SDP relaxation.  Dense graphs admit additive approximation
schemes~\cite{DasGupta+13}. Practical approaches also include greedy
methods~\cite{Blondel+08,Clauset+04}, spectral
methods~\cite{Newman06,Richardson+09}, and mathematical
programming~\cite{Agarwal+08,Cafieri+14,Miyauchi+13}.
For broader background on practical community detection
algorithms, see \citet{Fortunato10} and \citet{Fortunato+16}.

\section*{Acknowledgments}
DGS is supported by the grant PID2024-155946NB-I00 funded by Ministerio de Ciencia, Innovación y Universidades (MICIU), Agencia Estatal de Investigación (AEI/10.13039/501100011033) and the European Social Fund Plus (ESF+).

\section*{AI Methodologies Statement}
We used OpenAI ChatGPT (GPT-5.6 Sol) to assist with generating portions of the implementation and verification code for the computer-assisted results.
The authors independently reviewed, tested, and validated all AI-assisted code and take full responsibility for the contents of the paper.

\bibliographystyle{abbrvnat}
\bibliography{main}

\begin{thebibliography}{74}
\providecommand{\natexlab}[1]{#1}
\providecommand{\url}[1]{\texttt{#1}}
\expandafter\ifx\csname urlstyle\endcsname\relax
  \providecommand{\doi}[1]{doi: #1}\else
  \providecommand{\doi}{doi: \begingroup \urlstyle{rm}\Url}\fi

\bibitem[Agarwal and Kempe(2008)]{Agarwal+08}
G.~Agarwal and D.~Kempe.
\newblock Modularity-maximizing graph communities via mathematical programming.
\newblock \emph{European Physical Journal B}, 66\penalty0 (3):\penalty0
  409--418, 2008.

\bibitem[Ailon et~al.(2008)Ailon, Charikar, and Newman]{Ailon+08}
N.~Ailon, M.~Charikar, and A.~Newman.
\newblock Aggregating inconsistent information: Ranking and clustering.
\newblock \emph{Journal of the ACM}, 55\penalty0 (5), 2008.

\bibitem[Bansal et~al.(2004)Bansal, Blum, and Chawla]{Bansal+04}
N.~Bansal, A.~Blum, and S.~Chawla.
\newblock Correlation clustering.
\newblock \emph{Machine Learning}, 56:\penalty0 89--113, 2004.

\bibitem[Blondel et~al.(2008)Blondel, Guillaume, Lambiotte, and
  Lefebvre]{Blondel+08}
V.~D. Blondel, J.-L. Guillaume, R.~Lambiotte, and E.~Lefebvre.
\newblock Fast unfolding of communities in large networks.
\newblock \emph{Journal of Statistical Mechanics: Theory and Experiment},
  2008:\penalty0 P10008, 2008.

\bibitem[Bonchi et~al.(2022)Bonchi, Garc{\'\i}a-Soriano, and Gullo]{Bonchi+22}
F.~Bonchi, D.~Garc{\'\i}a-Soriano, and F.~Gullo.
\newblock \emph{Correlation Clustering}, volume~19 of \emph{Synthesis Lectures
  on Data Mining and Knowledge Discovery}.
\newblock Springer, 2022.

\bibitem[Borell(1985)]{Borell85}
C.~Borell.
\newblock Geometric bounds on the {O}rnstein--{U}hlenbeck velocity process.
\newblock \emph{Zeitschrift f{\"u}r Wahrscheinlichkeitstheorie und verwandte
  Gebiete}, 70\penalty0 (1):\penalty0 1--13, 1985.

\bibitem[Brakensiek et~al.(2026{\natexlab{a}})Brakensiek, Huang, Potechin, and
  Zwick]{Brakensiek+26_sicomp}
J.~Brakensiek, N.~Huang, A.~Potechin, and U.~Zwick.
\newblock Separating {MAX} {2-AND}, {MAX DI-CUT}, and {MAX CUT}.
\newblock \emph{SIAM Journal on Computing}, 55\penalty0 (3):\penalty0
  FOCS23{\char45}268--FOCS23{\char45}308, 2026{\natexlab{a}}.

\bibitem[Brakensiek et~al.(2026{\natexlab{b}})Brakensiek, Huang, Potechin, and
  Zwick]{Brakensiek+26_stoc}
J.~Brakensiek, N.~Huang, A.~Potechin, and U.~Zwick.
\newblock Improved approximation algorithms for {M}ultiway {C}ut by large
  mixtures of new and old rounding schemes.
\newblock In \emph{STOC~'26: Proceedings of the 58th Annual ACM Symposium on
  Theory of Computing}, pages 1441--1452, 2026{\natexlab{b}}.

\bibitem[Brandes et~al.(2008)Brandes, Delling, Gaertler, Gorke, Hoefer,
  Nikoloski, and Wagner]{Brandes+08}
U.~Brandes, D.~Delling, M.~Gaertler, R.~Gorke, M.~Hoefer, Z.~Nikoloski, and
  D.~Wagner.
\newblock On modularity clustering.
\newblock \emph{IEEE Transactions on Knowledge and Data Engineering},
  20\penalty0 (2):\penalty0 172--188, 2008.

\bibitem[Braverman et~al.(2013)Braverman, Makarychev, Makarychev, and
  Naor]{Braverman+13}
M.~Braverman, K.~Makarychev, Y.~Makarychev, and A.~Naor.
\newblock The {Grothendieck} constant is strictly smaller than {Krivine}'s
  bound.
\newblock \emph{Forum of Mathematics, Pi}, 1:\penalty0 e4, 2013.

\bibitem[Cafieri et~al.(2014)Cafieri, Hansen, and Liberti]{Cafieri+14}
S.~Cafieri, P.~Hansen, and L.~Liberti.
\newblock Improving heuristics for network modularity maximization using an
  exact algorithm.
\newblock \emph{Discrete Applied Mathematics}, 163:\penalty0 65--72, 2014.

\bibitem[Cao et~al.(2024)Cao, Cohen-Addad, Lee, Li, Newman, and Vogl]{Cao+24}
N.~Cao, V.~Cohen-Addad, E.~Lee, S.~Li, A.~Newman, and L.~Vogl.
\newblock Understanding the cluster linear program for correlation clustering.
\newblock In \emph{STOC~'24: Proceedings of the 56th Annual ACM Symposium on
  Theory of Computing}, pages 1605--1616, 2024.

\bibitem[Cao et~al.(2025)Cao, Cohen-Addad, Lee, Li, Lolck, Newman, Thorup,
  Vogl, Yan, and Zhang]{Cao+25}
N.~Cao, V.~Cohen-Addad, E.~Lee, S.~Li, D.~R. Lolck, A.~Newman, M.~Thorup,
  L.~Vogl, S.~Yan, and H.~Zhang.
\newblock Solving the correlation cluster {LP} in sublinear time.
\newblock In \emph{STOC~'25: Proceedings of the 57th Annual ACM Symposium on
  Theory of Computing}, pages 1154--1165, 2025.

\bibitem[Charikar et~al.(2005)Charikar, Guruswami, and Wirth]{Charikar+05}
M.~Charikar, V.~Guruswami, and A.~Wirth.
\newblock Clustering with qualitative information.
\newblock \emph{Journal of Computer and System Sciences}, 71\penalty0
  (3):\penalty0 360--383, 2005.

\bibitem[Charikar et~al.(2017)Charikar, Gupta, and Schwartz]{Charikar+17}
M.~Charikar, N.~Gupta, and R.~Schwartz.
\newblock Local guarantees in graph cuts and clustering.
\newblock In \emph{IPCO~'17: Proceedings of the 19th Conference on Integer
  Programming and Combinatorial Optimization}, pages 136--147, 2017.

\bibitem[Chawla et~al.(2006)Chawla, Krauthgamer, Kumar, Rabani, and
  Sivakumar]{Chawla+06}
S.~Chawla, R.~Krauthgamer, R.~Kumar, Y.~Rabani, and D.~Sivakumar.
\newblock On the hardness of approximating {M}ulticut and {S}parsest-{C}ut.
\newblock \emph{Computational Complexity}, 15\penalty0 (2):\penalty0 94--114,
  2006.

\bibitem[Chawla et~al.(2015)Chawla, Makarychev, Schramm, and
  Yaroslavtsev]{Chawla+15}
S.~Chawla, K.~Makarychev, T.~Schramm, and G.~Yaroslavtsev.
\newblock Near optimal {LP} rounding algorithm for correlation clustering on
  complete and complete $k$-partite graphs.
\newblock In \emph{STOC~'15: Proceedings of the 47th Annual ACM Symposium on
  Theory of Computing}, pages 219--228, 2015.

\bibitem[Clauset et~al.(2004)Clauset, Newman, and Moore]{Clauset+04}
A.~Clauset, M.~E.~J. Newman, and C.~Moore.
\newblock Finding community structure in very large networks.
\newblock \emph{Physical Review E}, 70:\penalty0 066111, 2004.

\bibitem[Cohen-Addad et~al.(2022)Cohen-Addad, Lee, and Newman]{Cohen-Addad+22}
V.~Cohen-Addad, E.~Lee, and A.~Newman.
\newblock Correlation clustering with {S}herali-{A}dams.
\newblock In \emph{FOCS~'22: Proceedings of the 63rd IEEE Annual Symposium on
  Foundations of Computer Science}, pages 651--661, 2022.

\bibitem[Cohen-Addad et~al.(2023)Cohen-Addad, Lee, Li, and
  Newman]{Cohen-Addad+23}
V.~Cohen-Addad, E.~Lee, S.~Li, and A.~Newman.
\newblock Handling correlated rounding error via preclustering: A
  1.73-approximation for correlation clustering.
\newblock In \emph{FOCS~'23: Proceedings of the 64th IEEE Annual Symposium on
  Foundations of Computer Science}, pages 1082--1104, 2023.

\bibitem[Cohen-Addad et~al.(2024)Cohen-Addad, Lolck, Pilipczuk, Thorup, Yan,
  and Zhang]{Cohen-Addad+24}
V.~Cohen-Addad, D.~R. Lolck, M.~Pilipczuk, M.~Thorup, S.~Yan, and H.~Zhang.
\newblock Combinatorial correlation clustering.
\newblock In \emph{STOC~'24: Proceedings of the 56th Annual ACM Symposium on
  Theory of Computing}, pages 1617--1628, 2024.

\bibitem[DasGupta and Desai(2013)]{DasGupta+13}
B.~DasGupta and D.~Desai.
\newblock On the complexity of {N}ewman's community finding approach for
  biological and social networks.
\newblock \emph{Journal of Computer and System Sciences}, 79\penalty0
  (1):\penalty0 50--67, 2013.

\bibitem[de~Klerk et~al.(2004)de~Klerk, Pasechnik, and Warners]{deKlerk+04}
E.~de~Klerk, D.~V. Pasechnik, and J.~P. Warners.
\newblock On approximate graph colouring and {MAX}-$k$-{CUT} algorithms based
  on the $\theta$-function.
\newblock \emph{Journal of Combinatorial Optimization}, 8\penalty0
  (3):\penalty0 267--294, 2004.

\bibitem[Demaine et~al.(2006)Demaine, Emanuel, Fiat, and Immorlica]{Demaine+06}
E.~D. Demaine, D.~Emanuel, A.~Fiat, and N.~Immorlica.
\newblock Correlation clustering in general weighted graphs.
\newblock \emph{Theoretical Computer Science}, 361\penalty0 (2):\penalty0
  172--187, 2006.

\bibitem[Dinh et~al.(2015)Dinh, Li, and Thai]{Dinh+15}
T.~N. Dinh, X.~Li, and M.~T. Thai.
\newblock Network clustering via maximizing modularity: Approximation
  algorithms and theoretical limits.
\newblock In \emph{ICDM~'15: Proceedings of the 2015 IEEE International
  Conference on Data Mining}, pages 101--110. IEEE, 2015.

\bibitem[Farouki(2012)]{bernstein_basis}
R.~T. Farouki.
\newblock The {B}ernstein polynomial basis: A centennial retrospective.
\newblock \emph{Computer Aided Geometric Design}, 29\penalty0 (6):\penalty0
  379--419, 2012.

\bibitem[Feige and Schechtman(2002)]{FeigeSchechtman02}
U.~Feige and G.~Schechtman.
\newblock On the optimality of the random hyperplane rounding technique for
  {MAX CUT}.
\newblock \emph{Random Structures \& Algorithms}, 20\penalty0 (3):\penalty0
  403--440, 2002.

\bibitem[Fortunato(2010)]{Fortunato10}
S.~Fortunato.
\newblock Community detection in graphs.
\newblock \emph{Physics Reports}, 486\penalty0 (3):\penalty0 75--174, 2010.

\bibitem[Fortunato and Hric(2016)]{Fortunato+16}
S.~Fortunato and D.~Hric.
\newblock Community detection in networks: {A} user guide.
\newblock \emph{Physics Reports}, 659:\penalty0 1--44, 2016.

\bibitem[Frieze and Jerrum(1997)]{Frieze+97}
A.~Frieze and M.~Jerrum.
\newblock Improved approximation algorithms for {MAX} $k$-{CUT} and {MAX}
  {BISECTION}.
\newblock \emph{Algorithmica}, 18\penalty0 (1):\penalty0 67--81, 1997.

\bibitem[Frieze and Kannan(1996)]{Frieze+96}
A.~Frieze and R.~Kannan.
\newblock The regularity lemma and approximation schemes for dense problems.
\newblock In \emph{FOCS~'96: Proceedings of the 37th Annual IEEE Symposium on
  Foundations of Computer Science}, pages 12--20, 1996.

\bibitem[Garc{\'\i}a-Soriano and Schohn(2026)]{eff_sep}
D.~Garc{\'\i}a-Soriano and A.~Schohn.
\newblock Approximate dual separation for the cluster {LP}: {A} 1.387
  approximation for correlation clustering.
\newblock \emph{arXiv preprint arXiv:2607.27829}, 2026.

\bibitem[Garc\'{\i}a-Soriano et~al.(2020)Garc\'{\i}a-Soriano, Kutzkov, Bonchi,
  and Tsourakakis]{Garcia-Soriano+20}
D.~Garc\'{\i}a-Soriano, K.~Kutzkov, F.~Bonchi, and C.~E. Tsourakakis.
\newblock Query-efficient correlation clustering.
\newblock In \emph{TheWebConf~'20: Proceedings of The Web Conference 2020},
  pages 1468--1478, 2020.

\bibitem[Garey and Johnson(1979)]{Garey+02}
M.~R. Garey and D.~S. Johnson.
\newblock \emph{Computers and Intractability}.
\newblock W.H. Freeman, 1979.

\bibitem[G{\"a}rtner and Matou{\v{s}}ek(2012)]{Gartner+12}
B.~G{\"a}rtner and J.~Matou{\v{s}}ek.
\newblock \emph{Approximation Algorithms and Semidefinite Programming}.
\newblock Springer, 2012.

\bibitem[Giotis and Guruswami(2006)]{Giotis+06_j}
I.~Giotis and V.~Guruswami.
\newblock Correlation clustering with a fixed number of clusters.
\newblock \emph{Theory of Computing}, 2:\penalty0 249--266, 2006.

\bibitem[Goemans and Williamson(1995)]{Goemans+95}
M.~X. Goemans and D.~P. Williamson.
\newblock Improved approximation algorithms for maximum cut and satisfiability
  problems using semidefinite programming.
\newblock \emph{Journal of the ACM}, 42\penalty0 (6):\penalty0 1115--1145,
  1995.

\bibitem[Goemans and Williamson(2004)]{Goemans+04}
M.~X. Goemans and D.~P. Williamson.
\newblock Approximation algorithms for {MAX-3-CUT} and other problems via
  complex semidefinite programming.
\newblock \emph{Journal of Computer and System Sciences}, 68\penalty0
  (2):\penalty0 442--470, 2004.

\bibitem[Grimmett and Stirzaker(2001)]{grimmet_prob}
G.~R. Grimmett and D.~R. Stirzaker.
\newblock \emph{Probability and Random Processes}.
\newblock Oxford University Press, 2001.
\newblock Third edition.

\bibitem[Heilman(2026{\natexlab{a}})]{Heilman26}
S.~Heilman.
\newblock Sharp hardness for {MAX-3-CUT} and {Q}uantum {MAX-CUT}.
\newblock \emph{arXiv preprint arXiv:2608.00333}, 2026{\natexlab{a}}.

\bibitem[Heilman(2026{\natexlab{b}})]{Heilman26Grothendieck}
S.~Heilman.
\newblock An upper bound on {Grothendieck}'s constant.
\newblock \emph{arXiv preprint arXiv:2606.00247}, 2026{\natexlab{b}}.

\bibitem[Heilman et~al.(2016)Heilman, Mossel, and
  Neeman]{HeilmanMosselNeeman16}
S.~Heilman, E.~Mossel, and J.~Neeman.
\newblock Standard simplices and pluralities are not the most noise stable.
\newblock \emph{Israel Journal of Mathematics}, 213\penalty0 (1):\penalty0
  33--53, 2016.

\bibitem[Isaksson and Mossel(2012)]{IsakssonMossel12}
M.~Isaksson and E.~Mossel.
\newblock Maximally stable {Gaussian} partitions with discrete applications.
\newblock \emph{Israel Journal of Mathematics}, 189\penalty0 (1):\penalty0
  347--396, 2012.

\bibitem[Johansson(2017)]{Johansson17}
F.~Johansson.
\newblock Arb: {E}fficient arbitrary-precision midpoint-radius interval
  arithmetic.
\newblock \emph{IEEE Transactions on Computers}, 66\penalty0 (8):\penalty0
  1281--1292, 2017.

\bibitem[Johansson(2018)]{Johansson18}
F.~Johansson.
\newblock Numerical integration in arbitrary-precision ball arithmetic.
\newblock In \emph{ICMS~'18: Proceedings of the 6th International Congress on
  Mathematical Software}, pages 255--263, 2018.

\bibitem[Kalhan et~al.(2019)Kalhan, Makarychev, and Zhou]{Kalhan+19}
S.~Kalhan, K.~Makarychev, and T.~Zhou.
\newblock Correlation clustering with local objectives.
\newblock In \emph{NeurIPS~'19: Proceedings of the 33rd Annual Conference on
  Neural Information Processing Systems}, pages 9341--9350, 2019.

\bibitem[Kann et~al.(1997)Kann, Khanna, Lagergren, and Panconesi]{Kann+97}
V.~Kann, S.~Khanna, J.~Lagergren, and A.~Panconesi.
\newblock On the hardness of approximating {M}ax $k$-{C}ut and its dual.
\newblock \emph{Chicago Journal of Theoretical Computer Science}, 1997\penalty0
  (2), 1997.

\bibitem[Karger et~al.(1998)Karger, Motwani, and Sudan]{Karger+98}
D.~Karger, R.~Motwani, and M.~Sudan.
\newblock Approximate graph coloring by semidefinite programming.
\newblock \emph{Journal of the ACM}, 45\penalty0 (2):\penalty0 246--265, 1998.

\bibitem[Karloff(1999)]{Karloff99}
H.~Karloff.
\newblock How good is the {G}oemans--{W}illiamson {MAX CUT} algorithm?
\newblock \emph{SIAM Journal on Computing}, 29\penalty0 (1):\penalty0 336--350,
  1999.

\bibitem[Kawase et~al.(2021)Kawase, Matsui, and Miyauchi]{Kawase+21}
Y.~Kawase, T.~Matsui, and A.~Miyauchi.
\newblock Additive approximation algorithms for modularity maximization.
\newblock \emph{Journal of Computer and System Sciences}, 117:\penalty0
  182--201, 2021.

\bibitem[Khot(2002)]{ugc_khot}
S.~Khot.
\newblock On the power of unique 2-prover 1-round games.
\newblock In \emph{STOC~'02: Proceedings of the 34th {A}nnual {ACM} {S}ymposium
  on {T}heory of {C}omputing}, pages 767--775, 2002.

\bibitem[Khot et~al.(2007)Khot, Kindler, Mossel, and O’Donnell]{Khot+07}
S.~Khot, G.~Kindler, E.~Mossel, and R.~O’Donnell.
\newblock Optimal inapproximability results for {MAX-CUT} and other 2-variable
  {CSP}s?
\newblock \emph{SIAM Journal on Computing}, 37\penalty0 (1):\penalty0 319--357,
  2007.

\bibitem[Kuroki et~al.(2024)Kuroki, Miyauchi, Bonchi, and Chen]{Kuroki+24}
Y.~Kuroki, A.~Miyauchi, F.~Bonchi, and W.~Chen.
\newblock Query-efficient correlation clustering with noisy oracle.
\newblock In \emph{NeurIPS~'24: Proceedings of the 38th Annual Conference on
  Neural Information Processing Systems}, 2024.

\bibitem[Miyauchi and Miyamoto(2013)]{Miyauchi+13}
A.~Miyauchi and Y.~Miyamoto.
\newblock Computing an upper bound of modularity.
\newblock \emph{European Physical Journal B}, 86:\penalty0 302, 2013.

\bibitem[Miyauchi et~al.(2026)Miyauchi, Adriaens, Bonchi, and
  Tatti]{Miyauchi+26}
A.~Miyauchi, F.~Adriaens, F.~Bonchi, and N.~Tatti.
\newblock Multilayer correlation clustering.
\newblock In \emph{AISTATS~'26: Proceedings of the 29th International
  Conference on Artificial Intelligence and Statistics}, 2026.

\bibitem[Mossel and Neeman(2015)]{MosselNeeman15Noise}
E.~Mossel and J.~Neeman.
\newblock Robust optimality of {Gaussian} noise stability.
\newblock \emph{Journal of the European Mathematical Society}, 17\penalty0
  (2):\penalty0 433--482, 2015.

\bibitem[Newman(2018)]{Newman18}
A.~Newman.
\newblock Complex semidefinite programming and {M}ax-$k$-{C}ut.
\newblock In \emph{SOSA~'18: Proceedings of the 1st Symposium on Simplicity in
  Algorithms}, pages 13:1--13:11, 2018.

\bibitem[Newman and Girvan(2004)]{Newman+04}
M.~E. Newman and M.~Girvan.
\newblock Finding and evaluating community structure in networks.
\newblock \emph{Physical Review E}, 69\penalty0 (2):\penalty0 026113, 2004.

\bibitem[Newman(2006)]{Newman06}
M.~E.~J. Newman.
\newblock Modularity and community structure in networks.
\newblock \emph{Proceedings of the National Academy of Sciences of the United
  States of America}, 103\penalty0 (23):\penalty0 8577--8582, 2006.

\bibitem[O'Donnell(2014)]{o2014analysis}
R.~O'Donnell.
\newblock \emph{Analysis of Boolean Functions}.
\newblock Cambridge University Press, 2014.

\bibitem[Puleo and Milenkovic(2018)]{Puleo+18}
G.~J. Puleo and O.~Milenkovic.
\newblock Correlation clustering and biclustering with locally bounded errors.
\newblock \emph{IEEE Transactions on Information Theory}, 64\penalty0
  (6):\penalty0 4105--4119, 2018.

\bibitem[Raghavendra(2008)]{Raghavendra08}
P.~Raghavendra.
\newblock Optimal algorithms and inapproximability results for every {CSP}?
\newblock In \emph{STOC~'08: Proceedings of the 40th Annual ACM Symposium on
  Theory of Computing}, pages 245--254, 2008.

\bibitem[Raghavendra(2009)]{Raghavendra09Thesis}
P.~Raghavendra.
\newblock \emph{Approximating {NP}-hard Problems: Efficient Algorithms and
  Their Limits}.
\newblock PhD thesis, University of Washington, 2009.

\bibitem[Raghavendra and Steurer(2009{\natexlab{a}})]{RaghavendraSteurer09Gaps}
P.~Raghavendra and D.~Steurer.
\newblock Integrality gaps for strong {SDP} relaxations of {Unique Games}.
\newblock In \emph{FOCS~'09: Proceedings of the 50th Annual IEEE Symposium on
  Foundations of Computer Science}, pages 575--585, 2009{\natexlab{a}}.

\bibitem[Raghavendra and
  Steurer(2009{\natexlab{b}})]{RaghavendraSteurer09Round}
P.~Raghavendra and D.~Steurer.
\newblock How to round any {CSP}.
\newblock In \emph{FOCS~'09: Proceedings of the 50th Annual IEEE Symposium on
  Foundations of Computer Science}, pages 586--594, 2009{\natexlab{b}}.

\bibitem[Richardson et~al.(2009)Richardson, Mucha, and Porter]{Richardson+09}
T.~Richardson, P.~J. Mucha, and M.~A. Porter.
\newblock Spectral tripartitioning of networks.
\newblock \emph{Physical Review E}, 80:\penalty0 036111, 2009.

\bibitem[Shamir et~al.(2004)Shamir, Sharan, and Tsur]{Shamir+04}
R.~Shamir, R.~Sharan, and D.~Tsur.
\newblock Cluster graph modification problems.
\newblock \emph{Discrete Applied Mathematics}, 144\penalty0 (1-2):\penalty0
  173--182, 2004.

\bibitem[Sharma and Vondr{\'a}k(2014)]{Sharma+14}
A.~Sharma and J.~Vondr{\'a}k.
\newblock Multiway cut, pairwise realizable distributions, and descending
  thresholds.
\newblock In \emph{STOC~'14: Proceedings of the 46th Annual ACM Symposium on
  Theory of Computing}, pages 724--733, 2014.

\bibitem[Shinde et~al.(2021)Shinde, Narayanan, and Saunderson]{Shinde+21}
N.~Shinde, V.~Narayanan, and J.~Saunderson.
\newblock Memory-efficient approximation algorithms for {M}ax-k-{C}ut and
  correlation clustering.
\newblock In \emph{NeurIPS~'21: Proceedings of the 35th Annual Conference on
  Neural Information Processing Systems}, pages 8269--8281, 2021.

\bibitem[Swamy(2004)]{Swamy04}
C.~Swamy.
\newblock Correlation clustering: Maximizing agreements via semidefinite
  programming.
\newblock In \emph{SODA~'04: Proceedings of the 15th Annual ACM--SIAM Symposium
  on Discrete Algorithms}, pages 526--527, 2004.

\bibitem[Tan(2008)]{note_inapp_maxagree}
J.~Tan.
\newblock A note on the inapproximability of correlation clustering.
\newblock \emph{Information Processing Letters}, 108\penalty0 (5):\penalty0
  331--335, 2008.

\bibitem[Vazirani(2001)]{Vazirani01}
V.~V. Vazirani.
\newblock \emph{Approximation Algorithms}.
\newblock Springer, 2001.

\bibitem[Williamson and Shmoys(2011)]{Williamson+11}
D.~P. Williamson and D.~B. Shmoys.
\newblock \emph{The Design of Approximation Algorithms}.
\newblock Cambridge University Press, 2011.

\bibitem[Zwick(2002)]{Zwick02}
U.~Zwick.
\newblock Computer assisted proof of optimal approximability results.
\newblock In \emph{SODA~'02: Proceedings of the 13th Annual ACM--SIAM Symposium
  on Discrete Algorithms}, pages 496--505, 2002.

\end{thebibliography}

\end{document}